\crefname{hypothesis}{Hypothesis}{Hypotheses}
\title{Lipschitz analysis of generalized phase retrievable matrix frames\thanks{Submitted to the editors 7/20/2021.
\funding{This work was
supported in part by NSF under Grant DMS-1816608.}}}
\author{Radu Balan\thanks{University of Maryland, College Park MD (\email{rvbalan@umd.edu})}\and Chris B. Dock\thanks{University of Maryland, College Park MD (\email{cdock@umd.edu})}}
\newcommand{\sr}{S^{r,0}(\mathbb{C}^n)}
\newcommand{\spq}{S^{p,q}(\mathbb{C}^n)}
\newcommand{\srr}{S^{r,r}(\mathbb{C}^n)}
\newcommand{\nn}{\mathbb{C}^{n\times n}}
\newcommand{\sym}{\mbox{Sym}(\mathbb{C}^n)}
\newcommand{\SymR}{\mbox{Sym}(\mathbb{R}^n)}
\newcommand{\unittr}{\{x\in \sym | \tr\{x\}=1\}}
\newcommand{\collection}{\{A_j\}_{j=1}^{m}}
\newcommand{\R}{\mathbb{R}}
\newcommand{\T}{\mathbb{T}}
\newcommand{\I}{\mathbb{I}}
\newcommand{\C}{\mathbb{C}}
\newcommand{\proj}{\mathbb{P}_{\range(x)}}
\newcommand{\projperp}{\mathbb{P}_{\range(x)^\perp}}
\newcommand{\projz}{\mathbb{P}_{\range(z)}}
\newcommand{\projperpz}{\mathbb{P}_{\range(z)^\perp}}
\newcommand{\range}{\mbox{Ran}}
\newcommand{\rank}{\mbox{rank}}
\newcommand{\srzero}{S^{r,0}(\mathbb{C}^n)}
\newcommand{\scirc}{\mathring{S}^{r,0}(\mathbb{C}^n)}
\newcommand{\scirck}{\mathring{S}^{k,0}(\mathbb{C}^n)}
\newcommand{\scircpq}{\mathring{S}^{p,q}(\mathbb{C}^n)}
\newcommand{\Sym}{\mbox{Sym}(\mathbb{C}^n)}
\newcommand{\nr}{\mathbb{C}^{n\times r}}
\newcommand{\nk}{\mathbb{C}^{n\times k}}
\newcommand{\nrquotient}{\nr/U(r)}
\newcommand{\tall}{\mathbb{C}_{*}^{n\times r}}
\newcommand{\tallk}{\mathbb{C}_{*}^{n\times k}}
\newcommand{\tallquotient}{\tall/ U(r)}
\newcommand{\GL}{\mbox{GL}(\mathbb{C}^n)}
\newcommand{\squarematrices}{\mathbb{C}^{n\times n}}
\newcommand{\smallsquarematrices}{\mathbb{C}^{r\times r}}
\newcommand{\horizontal}{H_{\pi,x}(\mathbb{C}_*^{n\times r})}
\newcommand{\vertical}{V_{\pi,x}(\mathbb{C}_*^{n\times r})}
\newcommand{\tangent}{T_{\pi(x)}(\scirc)}
\newcommand{\Ipq}{I_{p,q}}
\newcommand{\tr}{\mbox{tr}}
\begin{document}

\maketitle

\begin{abstract}
  The classical phase retrieval problem arises in contexts ranging from speech recognition to x-ray crystallography and quantum state tomography. The generalization to matrix frames is natural in the sense that it corresponds to quantum tomography of impure states. We provide computable global stability bounds for the quasi-linear analysis map $\beta$ and a path forward for understanding related problems in terms of the differential geometry of key spaces. In particular, we manifest a Whitney stratification of the positive semidefinite matrices of low rank which allows us to ``stratify'' the computation of the global stability bound. We show that for the impure state case no such global stability bounds can be obtained for the non-linear analysis map $\alpha$ with respect to certain natural distance metrics. Finally, our computation of the global lower Lipschitz constant for the $\beta$ analysis map provides novel conditions for a frame to be generalized phase retrievable.
\end{abstract}

\begin{keywords}
  Phase Retrieval, Generalized Phase Retrieval, Low Rank Matrix Analysis
\end{keywords}

\begin{AMS}
  42C15, 15B48, 30L05
\end{AMS}

\section{Introduction}
Let $H=\nr$ with $n\geq r$ be the Hilbert space of tall matrices with complex entries, equipped with the real inner product $\langle z, w \rangle_\R=\Re\tr\{z^* w\}$, where $z^*$ denotes the transpose complex conjugate of $z$ (the hermitian conjugate). We denote by $\langle z, w\rangle_\C=\tr\{z^* w\}$ the complex inner product. Let $\tall$ be the open subset of $\nr$ consisting of full rank tall matrices. For $p\geq 1$ we denote by $||z||_p$ the $p$th Schatten norm of $z$, that is to say the $l_p$ norm of the singular values of $z$. The pseudo-inverse of $z$ will be denoted $z^\dagger$. We denote by $\nrquotient$ and $\tallquotient$ the set of equivalence classes in $\nr$ and $\tall$ respectively under the equivalence relation $z\sim w$ if and only if there exists $U\in U(r)$ such that $z = w U$. Let $\spq$ denote the set of symmetric operators (hermitian matrices) on $\C^n$ having at most $p$ positive and $q$ negative eigenvalues, and $\scircpq$ the set of symmetric operators (hermitian matrices) on $\C^n$ having exactly $p$ positive and $q$ negative eigenvalues. The set $\nrquotient$ may then be identified with $\srzero$ and $\tallquotient$ with $\scirc$ via Cholesky decomposition. Being a finite dimensional space, a {\em frame} for $\nr$ is a collection $\{f_j\}_{j=1}^m\subset \nr$ that spans $\nr$. In particular, $\{f_j\}_{j=1}^m$ is frame if and only if there exist $A,B>0$ (called {\em frame bounds}) satisfying $A ||z||_2^2 \leq \sum_{j=1}^m |\langle f_j, z\rangle_\R|^2\leq B ||z||_2^2$ for all $z\in \nr$. This condition may also be written $A ||z||_2^2 \leq \sum_{j=1}^m \langle A_j, z z^*\rangle_\R\leq B ||z||_2^2$ for all $z\in \nr$ where $A_j= f_j f_j^*$. Using this fact, we may extend the concept of a frame for $\nr$ to collections of symmetric matrices $\{A_j\}_{j=1}^m\subset \sym$.  Fix a frame for $\nr$, then that frame is called {\em generalized phase retrievable} if the following map is injective:
\begin{align}
\begin{split}
    &\beta: \nrquotient \rightarrow \R^m\\
    &\beta_j(z) = \langle A_j, z z^* \rangle_\R,\qquad j=1,\ldots,m
\end{split}
\end{align}
This definition is in agreement with the generalized phase retrieval problem laid out in \cite{wang2019generalized} for the case $r=1$. Note that if $A_j = f_j f_j^*$ then $\beta_j(z)=||f_j^*z||_2^2$. A breadth of literature exists on the classical phase retrieval problem where $r=1$ and $H=\C^n$ or $H=\R^n$, see for example \cite{balan2006signal} for an explicit construction of Parseval phase retrievable frames and \cite{balan2015stability} for a proof of the stability of finite dimensional phase retrieval (in contrast to the finite dimensional case, it is shown in \cite{cahill2016phase} that infinite dimensional phase retrieval is never stable). Probabilistic error bounds for the case of noisy phase retrieval may be found in \cite{eldar2014phase} for frames sampled from a subgaussian distribution satisfying a so called ``small ball'' assumption. Efficient algorithms exist for doing classical phase retrieval (for example via Wirtinger flow as in \cite{candeswirtinger}), as well for constructing frames with desirable properties (nearly tight with low coherence) as in \cite{chen2021search}. See for example \cite{salanevich2019stability} for an analysis of the stability statistics for random frames and \cite{krahmer2017phase} for the interesting result that a large class of ``non-peaky'' vectors (so called $\mu$-flat vectors) are recoverable even when frame vectors are chosen as Bernoulli random vectors, a case in which phase retrieval is well known to fail for arbitrary signals. Recently several advances have been made in understanding natural generalizations of the problem to arbitrary symmetric measurement matrices \cite{wang2019generalized}, unifying the problem of phase retrieval with that of fusion frame reconstruction. Lipschitz stability questions for the generalized phase retrieval are analyzed in \cite{zhuang2019stability}. The generalized phase retrieval problem in the case $r=1$ has proven amenable to efficient implementations of gradient descent \cite{li2016gradient} and a probabilistic guarantee of global convergence of first order methods like gradient descent has been obtained in  \cite{li2018global} for $O(n\log^3(n))$ frame vectors. In accordance with the classical phase retrieval we also define the $\alpha$ map as the entry-wise square root of the beta map (here we require that each $A_j\geq 0$):
\begin{align}
\begin{split}
    &\alpha: \nrquotient \rightarrow \R^m\\
    &\alpha_j(z) = \langle A_j, z z^* \rangle_\R^{\frac{1}{2}},\qquad j=1,\ldots,m
\end{split}
\end{align}
Note that if we write $A_j= f_j f_j^*$ using Cholesky decomposition then $\alpha_j(z)=||f_j^*z||_2$. In this paper we will study the global and local Lipschitz properties of these two maps in the case that the frame is generalized phase retrievable. In particular, we analyze the following (squared) global Lipschitz constants:
\begin{align}
    \label{a0const}
    a_0 &:= \inf_{\substack{x,y\in\nr\\x\neq y}}\frac{||\beta(x)-\beta(y)||_2^2}{||x x^* - y y^* ||_2^2}  ~~,~~b_0 := \sup_{\substack{x,y\in\nr\\x\neq y}}\frac{||\beta(x)-\beta(y)||_2^2}{||x x^* - y y^* ||_2^2}\\
    \label{A0const}
    A_0 &:= \inf_{\substack{x,y\in\nr\\x\neq y}}\frac{||\alpha(x)-\alpha(y)||_2^2}{||(x x^*)^{\frac{1}{2}} - (y y^*)^{\frac{1}{2}} ||_2^2}
    ~~,~~B_0:=\sup_{\substack{x,y\in\nr\\x\neq y}}\frac{||\alpha(x)-\alpha(y)||_2^2}{||(x x^*)^{\frac{1}{2}} - (y y^*)^{\frac{1}{2}} ||_2^2}
\end{align}
In doing so we will employ several distance metrics on $\nrquotient$ (equivalently on $\srzero$), the relationships between which are contained in Theorem \ref{thm:bounds}. The Lipschitz properties of $\alpha$ and $\beta$ are intimately related to the geometry of $\srzero$, which is the subject of Theorem \ref{thm:geometry}. Theorem \ref{thm:geometry} continues the results in \cite{bhatia2019bures} on the geometry of the $n\times n$ positive definite matrices $\mathbb{P}(n)$. The main contributions of this work are thus: 
\begin{itemize}
    \item In Section \ref{sec:embed} we introduce the novel distance 
    $$d(x,y):=\sqrt{(||x||_2^2+||y||_2^2)^2-4||x^*y||_1^2}$$ 
    on $\nrquotient$ and in Theorem \ref{thm:bounds} provide optimal Lipschitz constants with respect to natural embeddings of $(\nrquotient, d)$ into the Euclidean space $(\Sym,||\cdot||_2)$. Theorem \ref{thm:bounds} also provides optimal Lipschitz constants with respect to natural embeddings of $(\nrquotient, D)$ into $(\Sym, ||\cdot||_2)$ for the Bures-Wasserstein distance $D(x,y):=\sqrt{||x||_2^2+||y||_2^2-2||x^*y||_1}$.
    \item In Section \ref{sec:geometry} Theorem \ref{thm:geometry} generalizes Theorem 5 in \cite{bhatia2019bures} by providing the geometry not just of manifold of positive definite matrices $\mathbb{P}(n)$ but of the algebraic semi-variety $\srzero$. In particular we manifest a Whitney stratification of $\srzero$, obtain the Riemannian metrics of the stratifying manifolds, and show that this family of metrics is compatible across the strata in the sense that geodesics of lower strata are limiting curves of geodesics in higher strata. In particular this proves that the geodesic  in $\srzero$ connecting two matrices of rank $k<r$ is completely contained in $\scirck$.
    \item In Section \ref{sec:alphabeta} Theorem \ref{thm:betalips} provides an explicit formula for the global lower bound $a_0$ as the minimization over $U(n)$ of the $(2nr -r^2)$th eigenvalue of a family of matrices parametrized by $U(n)$. Theorem \ref{thm:betalips} also uses the distance $d$ to provide a generalization of Theorem 2.5 in \cite{balan2016lipschitz} to the case $r>1$ and shows that the analog $\hat{Q}_z$ of $\mathcal{R}(\xi)$ can be used to control $a_0$ to within a factor of 2. We also show in Theorem \ref{thm:alphalips} that the corresponding generalization of Theorem 2.2 in \cite{balan2016lipschitz} to the case $r>1$ is false, namely that $A_0=0$ when $r>1$. Finally, in Theorem \ref{thm:frameconditions} we provide novel conditions for a frame $\{A_j\}_{j=1}^m$ for $\nr$ to be generalized phase retrievable.
\end{itemize}
A motivating example for the Lipschitz analysis of $\alpha$ and $\beta$ is quantum tomography of impure states. A noisy quantum system is modeled as a statistical ensemble over pure quantum states. The standard example is unpolarized light. In such cases, all of the measurable information in the system is contained in a density matrix which, using bra-ket notation, has the form
\begin{align}
    \rho = \sum_{j\in\mathcal{I}} p_j |\psi_j\rangle \langle \psi_j | 
\end{align}
where $p_j$ is the ensemble probability that the system is in the pure quantum state $|\psi_j\rangle$ belonging to a Hilbert space $H$. If we assume the cardinality of $\mathcal{I}$ is finite and equal to $r$ and that the state vectors themselves live in the Hilbert space $\mathbb{C}^n$ then $\rho\in\sr\cap \unittr$. The expectation of a given observable $A$ (a symmetric operator on $\mathbb{C}^n$) is therefore
\begin{align}
    \mathbb{E}_\rho[A]=\sum_{j\in\mathcal{I}} p_j \langle \psi_j |A|\psi_j \rangle=\sum_{j\in\mathcal{I}}p_j\tr\{|\psi_j\rangle\langle\psi_j| A\}=\tr\{\rho A\}=\Re\tr\{\rho A\}
\end{align}
By repeatedly measuring the observable $A$ and then allowing the quantum system to relax one may estimate $\tr\{\rho A\}$ (and perhaps higher moments) but the aim is to infer $\rho$ itself. It was shown in \cite{flammia2012quantum} that sufficiently many randomly sampled Pauli observables can be used along with methods from compressed sensing (trace minimization, matrix Lasso) to reconstruct a low rank density matrix with high fidelity. In general, if a suite of observables is well-chosen (constitutes a generalized phase-retrievable frame) then the problem of inferring $\rho$ from the expectation values of said observables is subordinate to the problem of phase retrieval on $\mathbb{C}^{n\times r}$. Asking if, for a collection of observables $\collection$, the density matrix $\rho$ is recoverable is equivalent to asking if the map
\begin{align}
\label{betaquant}
    \begin{split}
    &\tilde{\beta}: \sr\cap \unittr\rightarrow \R^m\\
    &\tilde{\beta}(\rho)=\begin{bmatrix}
    \langle \rho, A_1\rangle_\R\\
    \vdots\\
    \langle \rho,A_m\rangle_\R
    \end{bmatrix}
    \end{split}
\end{align}

is injective. In fact, given that we can only approximate the expectations using finitely many measurements, we should hope that it is lower-Lipschitz with respect to the Frobenius distance. Such stability questions for phase retrievable frames for $\mathbb{C}^{n}$ (the pure state case) are investigated in \cite{balan2015stability}. Given that $\rho$ is positive semidefinite and rank at most $r$ there exists a Cholesky factor $z\in \nr$ such that $\rho=z z^*$. Indeed we may take $z\in \nrquotient$ since $\rho$ is invariant under $z\rightarrow z U$, in which case $\tr\{\rho\}=1$ if and only if $||z||_2=1$.  We may therefore concern ourselves with the Lipschitz properties of $\beta$ restricted to $z\in\nrquotient$ with $||z||_2=1$, rather than $\tilde{\beta}$. For the time being we consider a Lipschitz analysis of $\beta: \nrquotient\rightarrow \R^m$, deferring discussion of a possible Lipschitz retract onto the unit sphere. Thus we seek information on the optimal global lower Lipschitz constant of the $\beta$ map, namely $\sqrt{a_0}$.

In addition to quantum state tomography, Lipschitz analysis of 
spaces of low-rank matrices is central in a significant number of problems in science and engineering such as: the phase retrieval 
problem \cite{balan2006signal,wang17}, source separation and inverse problems \cite{herrmann14}, as well as the low-rank matrix completion problem \cite{strohmer13}.

We caution the reader that throughout the paper the scalar product $\langle \cdot, \cdot \rangle_\R$ is a real inner product, however $z^*$ denotes the conjugate with respect to the complex inner product $\langle \cdot, \cdot \rangle_\C$. We also note that the norm $|| z ||_p$ for $p\geq 1$ is the $p$th Schatten norm of $z\in\C^{n\times r}$ seen as a $\C$-linear operator from $\C^r$ to $\C^n$. Hence the norm $||\cdot||_2$, while it refers to the Schatten 2 norm, is equivalently given as $||z||_2=\sqrt{\langle z, z\rangle_\R}=\sqrt{\langle z, z\rangle_\C}$. If $z$ were instead seen as an $\R$-linear operator from $\C^r$ to $\C^n$ then the resulting Schatten $p$ norm would be amplified by a factor $2^{\frac{1}{p}}$ since the multiplicity of each singular value would double.

\section{Relevant distances and Lipschitz embeddings}
\label{sec:embed}

\begin{definition}\label{equivdef}
We define the equivalence relation $\sim$ on $\nr$ via
\begin{align}
    x\sim y\iff \exists U\in U(r) | x= y U
\end{align}
and denote by $[x]$ the equivalence class of $x\in\nr$, and by $\nrquotient$ the collection of equivalence classes $\{[x] | x\in\nr\}$.
\end{definition}
The stability analysis that follows for $\beta$ and $\alpha$ in Theorems \ref{thm:betalips} and \ref{thm:alphalips} will rely heavily on the following natural metrics on $\nrquotient$.
\begin{definition}\label{def:metric}
We define $D, d: \nr\times\nr \rightarrow \mathbb{R}$.
\begin{align}
    \label{dDmetrics}
\begin{split}
    D(x,y) &= \min_{U\in U(r)}|| x - y U||_2\\ &=\sqrt{||x||_2^2+||y||_2^2-2||x^*y||_1}\\
    d(x,y) &= \min_{U\in U(r)}|| x - y U||_2 || x + y U||_2\\
    &=\sqrt{(||x||_2^2+||y||_2^2)^2-4||x^*y||_1^2}
\end{split}
\end{align}
\end{definition}
We note that another distance on $\nrquotient$ given by 
\begin{align}\begin{split}
D'(x,y)&=\max_{U\in U(r)}||x- y U||_2\\
&= \sqrt{||x||_2^2+||y||_2^2+2||x^*y||_1}
\end{split}
\end{align}
and is introduced and analyzed for the $r=1$ case in \cite{hasankhani2021signal}. We note merely that $d = D \cdot D'$. This does not imply $d$ is a metric, however in fact we have the following proposition.
\begin{proposition}\label{prop:metrics}
Both $D$ and $d$ are metrics in the usual sense on $\nrquotient$.
\end{proposition}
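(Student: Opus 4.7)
The plan is to verify each metric axiom (well-definedness on the quotient, non-negativity, identity of indiscernibles, symmetry, triangle inequality) in turn for both $D$ and $d$, working from the closed-form expressions in Definition~\ref{def:metric} and using the bi-unitary invariance of the Frobenius norm. Well-definedness on $\nrquotient$ follows from the fact that, for any $V, W \in U(r)$, the substitution $U \mapsto W^* U V$ is a bijection of $U(r)$ that leaves $\|xV \pm yWU\|_2$ unchanged in value. Non-negativity is immediate. For identity of indiscernibles, $D([x],[y]) = 0$ forces $x = yU$ for some $U \in U(r)$, and the vanishing of the product defining $d$ forces either $x = yU$ or $x = -yU = y(-U)$, both of which give $[x] = [y]$ since $-U \in U(r)$. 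Symmetry reduces to $\|x \pm yU\|_2 = \|(x \pm yU) U^*\|_2 = \|y \pm xU^*\|_2$ together with the bijection $U \mapsto U^*$ of $U(r)$.

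For the triangle inequality of $D$, I would take minimizers $U_1, U_2 \in U(r)$ with $D(x,y) = \|x - yU_1\|_2$ and $D(y,z) = \|y - zU_2\|_2$, use the composition $U_2 U_1$ as a candidate unitary for the minimization defining $D(x,z)$, and invoke right unitary invariance of $\|\cdot\|_2$:
\begin{equation*}
D(x,z) \leq \|x - z U_2 U_1\|_2 \leq \|x - yU_1\|_2 + \|(y - z U_2) U_1\|_2 = D(x, y) + D(y, z).
\end{equation*}

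The triangle inequality for $d$ is the main obstacle. The same ansatz $U_2 U_1$ does not close the argument directly: at the joint minimizer $U_1$ for $d(x,y)$ one has simultaneously $\|x - yU_1\|_2 = D(x,y)$ and $\|x + yU_1\|_2 = D'(x,y)$, so the crude product bound yields only
\begin{equation*}
d(x,z) \leq (D(x,y) + D(y,z))(D'(x,y) + D'(y,z)),
\end{equation*}
which exceeds $d(x,y) + d(y,z) = D(x,y) D'(x,y) + D(y,z) D'(y,z)$ by the unwanted cross terms $D(x,y) D'(y,z) + D(y,z) D'(x,y)$. I would instead reduce the claim to the algebraic inequality
\begin{equation*}
\sqrt{(a+c)^2 - 4s^2} \leq \sqrt{(a+b)^2 - 4p^2} + \sqrt{(b+c)^2 - 4q^2},
\end{equation*}
with $a = \|x\|_2^2$, $b = \|y\|_2^2$, $c = \|z\|_2^2$, $p = \|x^* y\|_1$, $q = \|y^* z\|_1$, $s = \|x^* z\|_1$, and attack it by combining the squared $D$-triangle inequality (which rearranges to the lower bound $s \geq p + q - b - \sqrt{(a+b-2p)(b+c-2q)}$) with a complementary upper estimate on $s$ obtained by inserting the polar decomposition of $y$ into the product $x^* z$. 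In the special case $r = 1$, a direct computation of the two nonzero eigenvalues of the rank-$2$ Hermitian matrix $xx^* - yy^*$ yields the identity $d(x, y) = \|xx^* - yy^*\|_1$, so $d$ is the pullback of a Schatten 1 norm and triangle inequality is inherited for free; the genuine difficulty is in the case $r \geq 2$, where this identification fails (as can be checked already for $n = r = 2$) and an irreducibly $r$-parameter nuclear-norm inequality must be established.
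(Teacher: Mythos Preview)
Your treatment of well-definedness, symmetry, identity of indiscernibles, and the triangle inequality for $D$ is fine and matches the paper. The gap is in the triangle inequality for $d$ when $r\geq 2$: you outline an attack on the scalar inequality in $(a,b,c,p,q,s)$ but never carry it out, and your proposed combination of a lower bound on $s$ from the $D$-triangle inequality with an ``upper estimate on $s$ from the polar decomposition of $y$'' is not a proof---it is not even clear what inequality you intend to extract from that polar decomposition, or why the two bounds together would force the desired square-root inequality. As stated, the $r\geq 2$ case remains open in your proposal.

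The paper closes this gap with a single observation that you explicitly rule out. You are correct that the identity $d(x,y)=\|xx^*-yy^*\|_1$ fails for $r\geq 2$, but that is not the identity needed. What is needed is the \emph{pre-unitary} inequality
\[
\|x-y\|_2\,\|x+y\|_2 \;\leq\; \|x-z\|_2\,\|x+z\|_2 \;+\; \|z-y\|_2\,\|z+y\|_2
\]
for arbitrary $x,y,z\in\nr$, and this \emph{does} reduce to the rank-one case via the isometric vectorisation $\mu:\nr\to\R^{2nr}$. Since $\|x\pm y\|_2=\|\mu(x)\pm\mu(y)\|_2$, your own $r=1$ eigenvalue computation (applied now to the rank-two real symmetric matrix $\mu(x)\mu(x)^T-\mu(y)\mu(y)^T$) gives
\[
\|x-y\|_2\,\|x+y\|_2 \;=\; \|\mu(x)\mu(x)^T-\mu(y)\mu(y)^T\|_1,
\]
and the nuclear-norm triangle inequality in $\R^{2nr\times 2nr}$ yields the displayed inequality immediately. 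Once that is in hand, the very minimizer-composition argument you wrote for $D$ (take $U_1$ optimal for $d(x,z)$, $U_2$ optimal for $d(z,y)$, test $U_2U_1$ in $d(x,y)$, and use unitary invariance of $\|\cdot\|_2$) finishes the proof. So the difficulty you flagged as ``irreducibly $r$-parameter'' is in fact a one-line reduction: the Frobenius norm does not see the matrix structure, only the underlying $2nr$-dimensional Euclidean space.
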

\begin{proof}
  See \ref{proof:metrics}.
\end{proof}
The proof of Proposition \ref{prop:metrics} relies on Lemma \ref{lemma:parallelepiped}, an apparently simple result about the analytic geometry of parallelepipeds in $\R^3$ which may be of independent interest. 

The minimizer $U$ can be chosen to be the same for both $d$ and $D$, and is characterized by the following:
\begin{proposition}
\label{prop:minimizer}
    The unitary minimizer in both $d$ and $D$ is given by the polar factor in $x^*y U = |x^*y|$. The minimizer will be unique so long as $x^*y$ is full rank. Otherwise, the minimizer will be of the form $U = U_0 + U_1$ where $U_0 = V_0 W_0^*$ with $V_0,W_0\in \C^{r\times \rank(x^*y)}$ the matrices whose columns are the right and left singular vectors respectively of the non-zero singular values of $x^*y$ and $U_1\in \smallsquarematrices$ any matrix such that $U_1 U_1^* = \mathbb{P}_{\ker(x^*y)}$ and $U_1^*U_1 = \mathbb{P}_{\range(x^*y)^\perp}$.
\end{proposition}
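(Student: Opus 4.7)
The plan is to reduce both minimization problems to a single trace maximization problem and then diagonalize through the SVD of $x^*y$. Expanding the squared distances using the real inner product $\langle z,w\rangle_\R = \Re\tr\{z^*w\}$ and the fact that $\|yU\|_2 = \|y\|_2$ for $U \in U(r)$, I get
\[
\|x \pm yU\|_2^2 = \|x\|_2^2 + \|y\|_2^2 \pm 2\Re\tr\{x^*yU\}.
\]
Multiplying the two sign choices yields
\[
\|x-yU\|_2^2 \cdot \|x+yU\|_2^2 = \bigl(\|x\|_2^2+\|y\|_2^2\bigr)^2 - 4\bigl(\Re\tr\{x^*yU\}\bigr)^2.
\]
So minimizing $D(x,y)$ amounts to maximizing $\Re\tr\{x^*yU\}$ over $U \in U(r)$, while minimizing $d(x,y)$ amounts to maximizing $|\Re\tr\{x^*yU\}|$. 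Both problems are solved by any $U$ that saturates the upper bound $\Re\tr\{x^*yU\} \le \|x^*y\|_1$, which is the content of the trace inequality I derive next.

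Next I would pass to SVD coordinates. Write $x^*y = W_0 \Sigma_0 V_0^*$ with $\Sigma_0 = \diag(\sigma_1,\ldots,\sigma_k)$ the strictly positive singular values ($k = \rank(x^*y)$) and $W_0, V_0 \in \C^{r\times k}$ the left and right singular vector matrices, extended to full unitaries $W = [W_0\,W_1]$, $V = [V_0\,V_1] \in U(r)$. For any $U \in U(r)$ set $Q := V^*UW \in U(r)$, so that
\[
\tr\{x^*yU\} = \tr\bigl\{\diag(\Sigma_0, 0)\,Q\bigr\} = \sum_{i=1}^k \sigma_i Q_{ii}.
\]
Unitarity of $Q$ forces $|Q_{ii}| \le 1$, giving $|\Re\tr\{x^*yU\}| \le \|x^*y\|_1$, with equality for the positive sign iff $Q_{ii}=1$ for every $i \le k$.

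To characterize the equality case, note that $Q_{ii}=1$ together with $\|Qe_i\|_2 = \|Q^*e_i\|_2 = 1$ forces $Qe_i = Q^*e_i = e_i$ for $i \le k$. Therefore $Q$ takes the block form $\diag(I_k, Q_1)$ with $Q_1 \in U(r-k)$ free, and translating back gives $U = VQW^* = V_0 W_0^* + V_1 Q_1 W_1^*$. Setting $U_0 := V_0 W_0^*$ and $U_1 := V_1 Q_1 W_1^*$, direct computation yields $U_1 U_1^* = V_1 V_1^* = \mathbb{P}_{\ker(x^*y)}$, $U_1^* U_1 = W_1 W_1^* = \mathbb{P}_{\range(x^*y)^\perp}$, and
\[
x^*yU = W_0\Sigma_0 V_0^*(V_0 W_0^* + V_1 Q_1 W_1^*) = W_0 \Sigma_0 W_0^* = \sqrt{(x^*y)(x^*y)^*} = |x^*y|.
\]
When $x^*y$ has full rank ($k=r$) the block $Q_1$ is absent and $U = V_0 W_0^*$ is the unique minimizer.

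The main obstacle is the bookkeeping required to verify that as $Q_1$ varies over $U(r-k)$, the expression $V_1 Q_1 W_1^*$ sweeps out exactly the set of matrices $U_1$ with $U_1 U_1^* = \mathbb{P}_{\ker(x^*y)}$ and $U_1^* U_1 = \mathbb{P}_{\range(x^*y)^\perp}$ (rather than a strictly smaller or larger family). This follows from a short bijection argument using that the columns of $V_1$ and $W_1$ form orthonormal bases of $\ker(x^*y)$ and $\range(x^*y)^\perp$ respectively, and the resulting set does not depend on the particular orthonormal extensions chosen.
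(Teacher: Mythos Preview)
Your proof is correct and follows the same route as the paper's: reduce both minimizations to maximizing $\Re\tr\{x^*yU\}$ (or its absolute value, for $d$) and then identify the optimizers via the polar decomposition of $x^*y$. The paper's own proof is a two-line sketch that simply asserts the trace is maximized when $x^*yU$ is positive semidefinite and points to non-uniqueness of the polar factor in the rank-deficient case; you have supplied the SVD computation and the block-structure argument that the paper omits.
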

\begin{proof}
  See \ref{proof:minimizer}
\end{proof}
The metrics $d$ and $D$ can be compared to the usual Euclidean distance on $\Sym$ modulo certain embeddings.
\begin{definition} \label{def:embeddings}
 We define $\theta, \pi,\psi:\nr\rightarrow \srzero$ as
\begin{align}
\begin{split}
    \theta(x) &= (x x^*)^{\frac{1}{2}}\\
    \pi(x) &= x x^*=\theta(x)^2\\
    \psi(x) &= ||x||_2(x x^*)^{\frac{1}{2}}=||\theta(x)||_2\theta(x)
    \end{split}
\end{align}
\end{definition}

\begin{proposition}\label{prop:embed}
The embeddings $\pi$, $\theta$, and $\psi$ are rank-preserving, surjective, and injective modulo $\sim$, thus we write $\theta, \pi,\psi: \nrquotient\hookrightarrow \sym$.
\end{proposition}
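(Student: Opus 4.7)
The plan is to handle $\pi$ in detail and then bootstrap the statements for $\theta$ and $\psi$ using the identities $\theta(x)^2=\pi(x)$ and $\psi(x)=\|x\|_2\theta(x)$. Rank preservation for all three maps is essentially a one-liner in each case: $\rank(xx^*)=\rank(x)$ via $\ker(xx^*)=\ker(x^*)$, the square root is a rank-preserving bijection of $\srzero$ onto itself, and $\psi(x)$ differs from $\theta(x)$ by a nonzero scalar when $x\neq 0$ (trivial when $x=0$). Well-definedness on equivalence classes is clear from the identities $(yU)(yU)^*=yy^*$ and $\|yU\|_2=\|y\|_2$.

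For $\pi$, surjectivity onto $\srzero$ follows from the spectral theorem: if $S\in\srzero$ has positive eigenvalues $\lambda_1,\ldots,\lambda_k$ with orthonormal eigenvectors $u_1,\ldots,u_k$ where $k\leq r$, then $x=[\sqrt{\lambda_1}u_1\ \cdots\ \sqrt{\lambda_k}u_k\ 0\ \cdots\ 0]\in\nr$ satisfies $\pi(x)=S$. The core of the argument is injectivity modulo $\sim$: assuming $xx^*=yy^*$, I would produce $U\in U(r)$ with $x=yU$. Let $k=\rank(x)=\rank(y)$ and take thin SVDs $x=A\Sigma B^*$, $y=\tilde{A}\Sigma\tilde{B}^*$ with common $k\times k$ positive diagonal $\Sigma$ (the singular values agreeing because they are the square roots of the positive eigenvalues of the common outer product). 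Equating $A\Sigma^2 A^*=\tilde{A}\Sigma^2\tilde{A}^*$ forces $A=\tilde{A}M$ for some $M\in U(k)$ commuting with $\Sigma$. Extending the matrices $B,\tilde{B}\in\C^{r\times k}$ with orthonormal columns to unitaries $\bar{B},\bar{\tilde{B}}\in U(r)$ and choosing an arbitrary $M'\in U(r-k)$, I set $U:=\bar{\tilde{B}}\,\diag(M,M')\,\bar{B}^*\in U(r)$. A direct calculation using $M\Sigma=\Sigma M$ then yields $yU=\tilde{A}\Sigma M B^*=\tilde{A}M\Sigma B^*=A\Sigma B^*=x$. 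The subtlety here is the non-full-rank case $k<r$, where $B$ and $\tilde{B}$ are not themselves unitary and must be extended; this basis completion is the main obstacle the proof needs to navigate.

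With $\pi$ established, $\theta$ follows by composing with the rank-preserving involution $S\mapsto S^{1/2}$ on $\srzero$: the equation $\theta(x)=\theta(y)$ squares to $\pi(x)=\pi(y)$, and given $S\in\srzero$ I find $x$ with $\pi(x)=S^2$ so that $\theta(x)=S$. For $\psi$, the key observation is the identity $\|x\|_2^2=\tr(xx^*)=\|(xx^*)^{1/2}\|_2^2$, which gives $\|x\|_2=\|\theta(x)\|_2$. Hence taking Schatten-$2$ norms of $\psi(x)=\psi(y)$ yields $\|x\|_2^2=\|y\|_2^2$, and if nonzero, cancellation reduces to $\theta(x)=\theta(y)$ (the zero case being trivial). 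Finally, surjectivity of $\psi$ is obtained by the rescaling trick: for nonzero $S\in\srzero$ set $T:=\|S\|_2^{-1/2}S\in\srzero$, choose $x$ with $\theta(x)=T$, and compute $\psi(x)=\|x\|_2T=\|T\|_2 T=\|S\|_2^{1/2}\cdot\|S\|_2^{-1/2}S=S$.
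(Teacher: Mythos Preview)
Your proof is correct and follows essentially the same approach as the paper's, though with considerably more detail. The paper's proof dispatches rank preservation via the eigenvalue/singular-value correspondence (as you do), and then simply declares the inverses $\pi^{-1}(A)=[\mbox{Cholesky}(A)]$, $\theta^{-1}(A)=[\mbox{Cholesky}(A^2)]$, $\psi^{-1}(A)=[\mbox{Cholesky}(A^2/\|A\|_2)]$, noting without proof that the Cholesky factor is unique up to the equivalence relation; your SVD argument for injectivity modulo $\sim$ is exactly a proof of that assertion, and your bootstrap of $\theta$ and $\psi$ from $\pi$ mirrors the paper's composition with $A\mapsto A^2$ and $A\mapsto A^2/\|A\|_2$.
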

\begin{proof}
  See \ref{proof:embed}
\end{proof}
\begin{theorem}\label{thm:bounds}
Let $x,y\in\nrquotient$. Then

\begin{enumerate}[(i)]
\item $\theta: (\nrquotient, D)\rightarrow (\srzero,||\cdot||_2)$ is a bi-Lipschitz map. In particular,
\begin{align}
     \label{Dbounds}
    C_n||\theta(x)-\theta(y)||_2 \leq  D(x,y) \leq  ||\theta(x)-\theta(y)||_2
\end{align}
Where $C_n=1$ if $n=1$ and $C_n=\frac{1}{\sqrt{2}}$ for $n>1$. The constants $C_n$ and $1$ are optimal.
\item $\pi: (\nrquotient, d) \rightarrow (\srzero, ||\cdot||_1)$ is 1-Lipschitz and $\psi^{-1}: (\srzero, ||\cdot||_2)\rightarrow (\nrquotient, d)$ is 2-Lipschitz for $r>2$ and $\sqrt{2}$-Lipschitz for $r=1$. In particular,
\begin{align}
     \label{dbounds}
    ||\pi(x)-\pi(y)||_2\leq ||\pi(x)-\pi(y)||_1 \leq d(x,y) \leq c_r||\psi(x)-\psi(y)||_2
\end{align}
Where $c_r=\sqrt{2}$ if $r=1$ and $c_r=2$ if $r>1$. The constants $1$ and $c_r$ are optimal.
\item For $r=1$
\begin{align}
    \psi(x)&=\pi(x)\\
    d(x,y)&=||\pi(x)-\pi(y)||_1\label{rank1dvsnormdiff}
\end{align}
The identity \eqref{rank1dvsnormdiff} was noticed and used in \cite{balan2016lipschitz}, its proof is included here for the benefit of the reader.
\item For $r>1$, there is no constant $C$ satisfying $C ||\pi(x)-\pi(y)||_2 \geq d(x,y)$ for each $x,y\in \nr$ (hence the use of the alternate embedding $\psi$).
\end{enumerate}
\end{theorem}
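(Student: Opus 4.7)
The plan is to treat the four items in order, leveraging the polar decompositions $x=U_xP_x$, $y=U_yP_y$ (with $U_x,U_y$ having orthonormal columns and $P_x,P_y\in\C^{r\times r}$ positive). Cyclicity of the trace together with the identities $\theta(x)=U_xP_xU_x^*$, $\theta(y)=U_yP_yU_y^*$ show that $\theta(x)\theta(y)$ and $x^*y$ share the same non-zero singular values, so $||x^*y||_1=||\theta(x)\theta(y)||_1$; also $||\theta(x)||_2^2=||x||_2^2$. Both $D(x,y)^2$ and $||\theta(x)-\theta(y)||_2^2$ then carry the common term $||\theta(x)||_2^2+||\theta(y)||_2^2$, so in (i) the upper bound $D\leq ||\theta(x)-\theta(y)||_2$ reduces to $\tr(\theta(x)\theta(y))\leq ||\theta(x)\theta(y)||_1$, i.e.\ the elementary $|\tr M|\leq ||M||_1$. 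The lower bound is equivalent to
\[
4\,||\theta(x)\theta(y)||_1 \;\leq\; ||\theta(x)+\theta(y)||_2^2,
\]
a mild extension to the rank-deficient case of Theorem 5 of \cite{bhatia2019bures}; I would pass from $\mathbb{P}(n)$ to $\srzero$ by continuity of both sides. Sharpness of $C_n=1/\sqrt{2}$ for $n>1$ is witnessed by $x=e_1$, $y=(1-\epsilon)e_1+\sqrt{2\epsilon-\epsilon^2}\,e_2$ as $\epsilon\to 0^+$, giving $D^2/||\theta(x)-\theta(y)||_2^2\to 1/2$; the scalar $n=1$ case is immediate.

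For (ii), $||\pi(x)-\pi(y)||_1\leq d(x,y)$ follows from the algebraic identity
\[
2(xx^*-yy^*)=(x-yU)(x+yU)^*+(x+yU)(x-yU)^*
\]
together with H\"older $||AB^*||_1\leq||A||_2\,||B||_2$ applied to each summand and Proposition \ref{prop:minimizer}, which lets us take a single $U$ simultaneously minimising both factors composing $d$; the step $||\pi(x)-\pi(y)||_2\leq||\pi(x)-\pi(y)||_1$ is Schatten monotonicity. For the upper embedding $d\leq c_r\,||\psi(x)-\psi(y)||_2$, introduce $a=||x||_2$, $b=||y||_2$, $T=\tr(\theta(x)\theta(y))$, $N=||\theta(x)\theta(y)||_1$: for $r>1$ with $c_r=2$ the target rearranges to $3(a^2-b^2)^2+4a^2b^2+4N^2\geq 8abT$, which is immediate from the AM-GM estimate $4a^2b^2+4N^2\geq 8abN$ together with $N\geq T$. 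In the $r=1$ case the rank-one structure of $\theta(x)\theta(y)$ forces $N^2=abT$; substitution collapses the target to $(a^2-b^2)^2\geq 0$, tight at $a=b$. Optimality of $c_r=2$ for $r>1$ is witnessed by $x=\mathrm{diag}(1,\epsilon)$, $y=\mathrm{diag}(\epsilon,1)$ as $\epsilon\to 1^-$, along which $d^2/||\psi(x)-\psi(y)||_2^2=2(1+\epsilon)^2/(1+\epsilon^2)\to 4$.

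For (iii), $\psi(x)=\pi(x)$ at $r=1$ is immediate because $(xx^*)^{1/2}=||x||_2^{-1}\,xx^*$ for rank-one $xx^*$, whence $\psi(x)=||x||_2\,(xx^*)^{1/2}=xx^*$. For $d(x,y)=||\pi(x)-\pi(y)||_1$, observe that $xx^*-yy^*$ is supported on $\mathrm{span}(x,y)$; in a Gram-Schmidt orthonormalisation of $\{x,y\}$, its restriction is a $2\times 2$ Hermitian matrix with trace $||x||_2^2-||y||_2^2$ and determinant $|\langle x,y\rangle|^2-||x||_2^2\,||y||_2^2$, and the quadratic formula identifies the sum of absolute values of its eigenvalues with $\sqrt{(||x||_2^2+||y||_2^2)^2-4|\langle x,y\rangle|^2}=d(x,y)$. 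For (iv), given $r\geq 2$ and $n\geq r+1$, take
\[
x=[\,e_1\mid\epsilon e_2\mid\cdots\mid\epsilon e_r\,],\qquad y=[\,e_1\mid\epsilon e_2\mid\cdots\mid\epsilon e_{r-1}\mid\epsilon e_{r+1}\,].
\]
A direct computation yields $||\pi(x)-\pi(y)||_2=\sqrt{2}\,\epsilon^2$ while $d(x,y)^2=8\epsilon^2+O(\epsilon^4)$, so the ratio $d/||\pi(x)-\pi(y)||_2\to\infty$ as $\epsilon\to 0^+$, ruling out any global constant $C$.

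The principal technical obstacle is the lower bound in (i): the naive Cauchy-Schwarz/AM-GM chain yields only $||A+B||_2^2\geq 2||AB||_1$ for PSD $A,B$, while the correct factor is $4$. Obtaining this factor requires invoking the Heinz-type arithmetic-geometric-mean operator inequality underlying Theorem 5 of \cite{bhatia2019bures}, or producing a direct majorization between eigenvalues of $A+B$ and singular values of $AB$; this is where the argument rests on prior operator-inequality machinery rather than elementary estimates.
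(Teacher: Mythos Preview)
Your proposal is essentially correct and tracks the paper's argument closely for parts (i) and (iii). The reduction in (i) to the operator arithmetic--geometric mean inequality $4\|AB\|_1\le\|(A+B)^2\|_1=\|A+B\|_2^2$ for PSD $A,B$ is exactly what the paper uses (citing Bhatia--Kittaneh rather than Bhatia--Jain--Lim, but the content is identical), and your sharpness witnesses for $C_n=1/\sqrt{2}$ are a reparametrisation of the paper's $y_j=\sqrt{1-1/j^2}\,e_1+(1/j)e_2$.

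For the upper bound in (ii) your route is genuinely cleaner than the paper's. The paper proves the stronger remainder estimate
\[
\|\psi(x)-\psi(y)\|_2^2-\tfrac14 d(x,y)^2 \;\ge\; \tfrac14 D(x,y)^4+(\|x\|_2-\|y\|_2)^2\bigl(\|x^*y\|_1+\tfrac12(\|x\|_2+\|y\|_2)^2\bigr)
\]
and then discards the right side. Your direct reduction to $3(a^2-b^2)^2+4a^2b^2+4N^2\ge 8abT$, dispatched by $4a^2b^2+4N^2\ge 8abN\ge 8abT$, gets the same conclusion with less work; the price is that you do not obtain the $D^4$ remainder, which the paper never uses anyway. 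Your $r=1$ argument via $N^2=abT$ collapsing to $(a^2-b^2)^2\ge0$ is also more transparent than the paper's detour through the rank-one identity $d=\|\pi(x)-\pi(y)\|_1$ combined with $\|\cdot\|_1\le\sqrt{2}\,\|\cdot\|_2$ on rank-two matrices. Your sharpness example $x=\mathrm{diag}(1,\epsilon)$, $y=\mathrm{diag}(\epsilon,1)$ with $\epsilon\to1^-$ is different from the paper's but equally valid.

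There is one small gap in (iv): your counterexample needs $n\ge r+1$, so the square case $n=r>1$ is not covered. The paper's example---$x,y$ sharing singular vectors with $\sigma_1=\mu_1=\delta$, $\sigma_2\ne\mu_2$, other singular values zero, and $\delta\to\infty$---works for all $n\ge r\ge2$. Your idea adapts immediately (e.g.\ take $x=[\,\delta e_1\mid e_2\mid 0\cdots\,]$, $y=[\,\delta e_1\mid 2e_2\mid 0\cdots\,]$ in $\C^{r\times r}$), but as written the case $n=r$ is missing.
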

\begin{proof}
  See \ref{proof:bounds}
\end{proof}
\begin{remark} While $d$ and $D$ are evidently not Lipschitz equivalent (they scale differently), they do generate the same topology on $\nrquotient$ since $d(x,y)\leq D(x,y)^2$ and given sufficiently small $\epsilon > 0 $ we have $d(x,y) < ||x||\sqrt{\epsilon} \implies D(x,y)<\epsilon$.
\end{remark}
\section{Geometry of the Matrix Phase Retrieval}
\label{sec:geometry}

It will be essential in the analysis and computation of \eqref{a0const} to understand the geometry of the spaces $\srzero$. In order to do so, we will demonstrate that $\srzero$ has a Whitney stratification over the smooth Riemannian manifolds $\mathring{S}^{i,0}(\C^n)$ for $i=0,\ldots,r$ of real dimension $2 n i - i^2$. We recall the following definitions, due to John Mather and sourced from \cite{kaloshin2000geometric}:
\begin{definition}
\label{def:regularity}
Let $V_i,V_j$ be disjoint real manifolds embedded in $\R^d$ such that $\dim V_j > \dim V_i$ and $V_i\cap\overline{V_j}$ non-empty. Let $x\in V_i\cap \overline{V_j}$. Then a triple $(V_j,V_i, x)$ is called $a-$ (resp. $b-$) regular if 
\begin{enumerate}[(a)]
    \item If a sequence $(y_n)_{n\geq 1}\subset V_j$ converges to $x$ in $\R^d$ and $T_{y_n}(V_j)$ converges in the Grassmannian $\mbox{Gr}_{\dim V_j}(\R^d)$ to a subspace $\tau_x$ of $\R^d$ then $T_{x}(V_i)\subset \tau_x$.
    
    \item If sequences $(y_n)_{n\geq 1}\subset V_j$ and $(x_n)_{n\geq 1}\subset V_i$ converge to $x$ in $\R^d$, the unit vector $(x_n-y_n)/||x_n-y_n||_2$ converges to a vector $v\in\R^d$, and $T_{y_n}(V_j)$ converges in the Grassmannian $\mbox{Gr}_{\dim V_j}(\R^d)$ to a subspace $\tau_x$ of $\R^d$ then $v\in \tau_x$.
\end{enumerate}
\end{definition}
\begin{definition}
\label{def:stratification}
Let $V$ be a real semi-algebraic variety. A disjoint decomposition
\begin{align}
    V=\bigsqcup_{i\in I} V_i,\qquad V_i\cap V_j =\emptyset \mbox{  for } i \neq j
\end{align}
into smooth manifolds $\{V_i\}_{i\in I}$, termed strata, is a Whitney stratification if
\begin{enumerate}[(a)]
    \item Each point has a neighborhood intersecting only finitely many strata
    \item The boundary sets $\overline{V_j}\setminus V_j$ of each stratum $V_j$ are unions of other strata.
    \item Every triple $(V_j, V_i,x)$ such that $x\in V_i\subset \overline{V_j}$ is $a$-regular and $b$-regular.
\end{enumerate}
\end{definition}
The following proposition will be essential both in proving the geometric results in Theorem \ref{thm:geometry} and in the analysis of the Lipschitz constants for $\beta$ and $\alpha$ set out in Theorems \ref{thm:betalips}, \ref{thm:alphalips}, and \ref{thm:upperboundlips}:
\begin{proposition}
\label{prop:bundles}
Let $\pi:\tall\rightarrow \scirc$ be as in Definition \ref{def:embeddings} and let $\vertical$ and $\horizontal$ denote the vertical and horizontal spaces of the manifold $\tall$ at $x$ with respect to the embedding $\pi$. Let $\tangent$ denote the tangent space of $\scirc$ at $\pi(x)$. Then
\begin{align}
    \label{verticalbundle}
    &\vertical &&= \{x K | K\in\mathbb{C}^{r\times r},K^*=-K\} \\
    \label{horizontalbundle}
    &\horizontal &&= \{ H x + X | H\in\squarematrices, H^*=H=\proj H,\\&&&X\in\nr, \proj X=0\}\notag\\
    \label{tangentbundle}
    &\tangent &&= \{W\in \Sym | \projperp W\projperp=0\}\\
    &&&= D\pi(x)(\horizontal)\notag
\end{align}
\end{proposition}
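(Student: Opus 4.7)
The plan is to treat $\pi:\tall\to\scirc$, $\pi(x)=xx^*$, as a Riemannian submersion with respect to the right $U(r)$-action, and to read off the bundles in the order vertical, horizontal, tangent.

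For the vertical bundle \eqref{verticalbundle}, the fiber of $\pi$ through $x$ is the orbit $\{xU:U\in U(r)\}$, so differentiating $t\mapsto x\exp(tK)$ at $t=0$ produces tangent vectors $xK$ with $K$ in the Lie algebra $\mathfrak{u}(r)$ of skew-Hermitian matrices; since $x$ has full column rank, $K\mapsto xK$ is injective, so this exhausts $\vertical$.

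For the horizontal bundle \eqref{horizontalbundle}, I would compute the $\langle\cdot,\cdot\rangle_\R$-orthogonal complement of $\vertical$ in $\nr$. Using $\langle Y,xK\rangle_\R=\Re\tr(KY^*x)$ and splitting $Y^*x\in\smallsquarematrices$ into its Hermitian and skew-Hermitian parts, one checks that this pairing vanishes for every skew-Hermitian $K$ if and only if $Y^*x$ is Hermitian. To match the description in \eqref{horizontalbundle}, I would decompose $Y=PY+\projperp Y$ with $P=\proj$, set $X=\projperp Y$, and look for $H$ with $H=H^*=PH$ and $Hx=PY$; writing $H=xAx^*$ reduces existence to a linear equation in $A\in\smallsquarematrices$, and Hermiticity of $A$ is forced by (and equivalent to) the condition $Y^*x=x^*Y$.

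For the tangent bundle \eqref{tangentbundle}, a direct differentiation gives $D\pi(x)(Y)=xY^*+Yx^*$. Plugging in $Y=Hx+X$ and using $Px=x$, $PX=0$, $PHP=H$, the four block components of $W=D\pi(x)(Y)$ come out as $PWP=Hxx^*+xx^*H$, $PW\projperp=xX^*$, $\projperp WP=Xx^*$, and $\projperp W\projperp=0$, which already yields the inclusion $D\pi(x)(\horizontal)\subseteq\{W\in\Sym:\projperp W\projperp=0\}$. For the reverse inclusion I would invert these block equations: $X=(\projperp WP)(x^*)^\dagger$ recovers $X$, while the $PWP$ block with $H=xAx^*$ reduces to the Lyapunov/Sylvester equation $A(x^*x)+(x^*x)A=(x^*x)^{-1}x^*(PWP)x(x^*x)^{-1}$ for Hermitian $A\in\smallsquarematrices$. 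The main obstacle is this last step: since $x^*x\succ 0$, the operator $A\mapsto A(x^*x)+(x^*x)A$ has strictly positive spectrum on Hermitian matrices (its eigenvalues are the pairwise sums $\lambda_i(x^*x)+\lambda_j(x^*x)$), hence is invertible, so the required $H$ exists and is Hermitian; a real-dimension count $2nr-r^2$ then confirms $D\pi(x)$ restricts to an isomorphism $\horizontal\to\tangent$.
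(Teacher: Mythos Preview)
Your argument is correct and follows essentially the same route as the paper's: identify the vertical space (you use the tangent to the $U(r)$-orbit, the paper computes $\ker D\pi(x)$ directly), take its orthogonal complement to get the condition that $x^*Y$ is Hermitian and the parametrization $Y=Hx+X$, then for the tangent space invert a Sylvester-type equation. The only difference worth noting is that the paper solves the $n\times n$ equation $xx^*H+Hxx^*=\proj W\proj$ via the explicit integral $H=\int_0^\infty e^{-txx^*}(\proj W\proj)e^{-txx^*}\,dt$ rather than your $r\times r$ spectral argument, and this integral formula is reused downstream to write down the Riemannian metric $h$ in Theorem~\ref{thm:geometry}.
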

\begin{proof}
See \ref{proof:bundles}
\end{proof}

Employing similar techniques to \cite{bhatia2019bures}, but generalizing from the manifold of positive definite matrices to the semi-algebraic variety $\srzero$ semidefinite matrices, we prove:
\begin{theorem}\label{thm:geometry}
Let $\pi$ be as in Definition \ref{def:embeddings} and the distance $D$ be as in \eqref{dDmetrics}. Then
\begin{enumerate}[(i)]
    \item $\scircpq$ is a real analytic manifold for each $p,q>0$ of real dimension $2 n (p+q)-(p+q)^2$.
    \item $\pi: \tall\rightarrow\scirc$ can be made into a Riemannian submersion by choosing the following unique Riemannian metric on $\scirc$:
    \begin{align}
        h(Z_1,Z_2)=\tr\{Z_2^\parallel \int_0^\infty e^{- u x x^*}Z_1^\parallel e^{- u x x^*} du\} + \Re\tr\{Z_1^{\perp *} Z_2^\perp (x x^*)^\dagger\}
    \end{align}
    Where $Z_1,Z_2\in \tangent$, $(x x^*)^\dagger$ denotes the pseudo-inverse of $x x^*$, and 
    \begin{align}
        Z_i^\parallel=\proj Z_i\proj\qquad Z_i^\perp = \projperp Z_i\proj
    \end{align}
    \item $\scirc$ equipped with the metric $h$ is a Riemannian manifold with $D$ as its geodesic distance.
    \item The semi-algebraic variety $\srzero$ admits as an explicit Whitney stratification $(\mathring{S}^{i,0})_{i=0}^r$.
    \item The geometry associated to $h$ is compatible with the Whitney stratification in the following sense: If $(A_i)_{i\geq 1}, (B_i)_{i\geq 1} \subset \mathring{S}^{p,0}$ have limits $A$ and $B$ respectively in $\mathring{S}^{q,0}$ for $q<p$ and if $\gamma_i:[0,1]\rightarrow \mathring{S}^{p,0}$ are geodesics in $\mathring{S}^{p,0}$ connecting $A_i$ to $B_i$ chosen in such a way that the limiting curve $\delta:[0,1]\rightarrow \overline{\mathring{S}^{p,0}}$ given by
    \begin{align}
        \delta(t)=\lim_{i\rightarrow\infty} \gamma_i(t)
    \end{align}
    exists, then the image of $\delta$ lies in $\mathring{S}^{q,0}$ and is a geodesic curve in $\mathring{S}^{q,0}$ connecting $A$ to $B$.
\end{enumerate}
\end{theorem}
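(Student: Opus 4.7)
My plan is to treat the five parts sequentially, leveraging the Cholesky-type submersion $\pi: \tall \to \scirc$ of Definition \ref{def:embeddings} and the vertical/horizontal decomposition of Proposition \ref{prop:bundles}.

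For (i), I consider the surjection $\phi_{p,q}: \mathbb{C}_*^{n \times (p+q)} \to \scircpq$ given by $\phi(x) = x J_{p,q} x^*$ with $J_{p,q} = \diag(I_p, -I_q)$, whose fibers are orbits of the pseudo-unitary group $U(p,q) = \{M : M J_{p,q} M^* = J_{p,q}\}$ of real dimension $(p+q)^2$. A free proper action of this Lie group on the open set $\mathbb{C}_*^{n \times (p+q)}$ of real dimension $2n(p+q)$ endows $\scircpq$ with a real analytic manifold structure of dimension $2n(p+q) - (p+q)^2$; the case $q = 0$ recovers $\tall/U(r) = \scirc$. For (ii), equip $\tall$ with the flat Euclidean metric $\Re \tr\{X_1^* X_2\}$; Proposition \ref{prop:bundles} identifies $\vertical = \ker D\pi(x)$, so $D\pi(x)$ restricts to a linear bijection $\horizontal \to \tangent$, and the displayed $h$ is the unique pushforward of the Euclidean inner product making this bijection an isometry. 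To verify the explicit formula, decompose $X = Hx + Y \in \horizontal$ so that $Z = D\pi(x)(X) = Xx^* + xX^*$ splits as $Z^\parallel = Hxx^* + xx^*H$ and $Z^\perp = Yx^*$; invert these relations using the Lyapunov integral $\int_0^\infty e^{-uxx^*}(\cdot)e^{-uxx^*}\,du$ (which inverts $L_{xx^*}(H) = xx^*H + Hxx^*$ on $\range(\proj)$) and the identity $x^*(xx^*)^\dagger x = I_r$ for $x$ of full column rank.

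For (iii), flatness of $\tall$ makes geodesics straight lines, and the Riemannian submersion lifts geodesics of $\scirc$ horizontally. Using the polar factor $U$ of Proposition \ref{prop:minimizer}, the line $\widetilde\gamma(t) = (1-t)x + tyU$ has $x^*yU = |x^*y|$ Hermitian, so its tangent $yU - x$ decomposes as $Hx + Y$ with $H$ Hermitian on $\range(\proj)$ and $\proj Y = 0$ (horizontal), and $\gamma(t) = \widetilde\gamma(t)\widetilde\gamma(t)^*$ is a geodesic of length $\|x - yU\|_2 = D(x,y)$. Rank is preserved because $\widetilde\gamma(t)^*\widetilde\gamma(t) = (1-t)^2 x^*x + t^2 U^*y^*yU + 2t(1-t)|x^*y|$ is positive definite for $t \in [0,1]$; minimality is the standard horizontal-lift argument. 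For (iv), the decomposition $\srzero = \bigsqcup_{i=0}^r \mathring{S}^{i,0}(\C^n)$ satisfies Definition \ref{def:stratification}(a) trivially and (b) via $\overline{\mathring{S}^{i,0}(\C^n)} = S^{i,0}(\C^n) = \bigsqcup_{j \leq i} \mathring{S}^{j,0}(\C^n)$. For Whitney $a$-regularity at $A = xx^* \in \mathring{S}^{q,0}(\C^n)$ along a sequence $y_n y_n^* \to A$ with $\rank y_n = p > q$, the tangent-space limit $\tau_A$ is controlled through Proposition \ref{prop:bundles} by the (subsequential) projection limit $P = \lim \mathbb{P}_{\range(y_n)}$ which satisfies $P \geq \proj$ since $\range(x) \subseteq \liminf \range(y_n)$; hence $\tau_A \supseteq T_A \mathring{S}^{q,0}(\C^n)$. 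Condition $b$ follows by splitting secant-vector limits along $P$ and $I - P$.

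The main obstacle is (v), which I expect to require a careful compactness argument combined with the geodesic formula of (iii). Lifting $\gamma_i(t) = \widetilde\gamma_i(t)\widetilde\gamma_i(t)^*$ with $\widetilde\gamma_i(t) = (1-t)x_i + t\tilde y_i$ and $\tilde y_i = y_i U_i$ the polar-factor-adjusted endpoint, the bounded-length condition $D(A_i, B_i) \to D(A,B)$ controls $\|x_i\|_2$ and $\|\tilde y_i\|_2$, yielding subsequential limits $x_i \to x_\infty$ and $\tilde y_i \to \tilde y_\infty$ with $x_\infty x_\infty^* = A$ and $\tilde y_\infty \tilde y_\infty^* = B$, both of rank $q$. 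Continuity of the polar factor on the nondegenerate limit yields $x_\infty^* \tilde y_\infty = |x_\infty^* \tilde y_\infty|$, so the positive-definiteness argument from (iii) shows $\delta(t) = [(1-t)x_\infty + t\tilde y_\infty][(1-t)x_\infty + t\tilde y_\infty]^*$ has rank $q$ throughout, lies in $\mathring{S}^{q,0}(\C^n)$, and coincides with $\lim_i \gamma_i(t)$ by direct expansion. Lower semicontinuity of length combined with (iii) applied within the lower stratum $\mathring{S}^{q,0}(\C^n)$ forces $\mathrm{length}(\delta) = D(A,B)$, so $\delta$ is the desired geodesic. The most delicate step is ensuring that the limit columns of $x_\infty$ and $\tilde y_\infty$ remain compatibly located rather than drifting into transverse $q$-dimensional subspaces; this is precisely where the $a$-regularity established in (iv) is invoked to control the Grassmannian limits of the ranges.
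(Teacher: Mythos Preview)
Your treatment of (i)--(iv) parallels the paper's. For (i) you use a free $U(p,q)$ action on $\C_*^{n\times(p+q)}$ in place of the paper's $\GL$ conjugation action combined with the homogeneous-space construction; both yield the analytic structure, though since $U(p,q)$ is non-compact for $q>0$ you should verify properness of the action explicitly. For (iii) your rank-preservation argument via the Gram matrix $\widetilde\gamma(t)^*\widetilde\gamma(t)$ is a mild variant of the paper's one-sided product $x^*\widetilde\gamma(t)$. Your sketch of (iv) is terse but captures the paper's projection-based decomposition for $b$-regularity.

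For (v) there is a genuine gap. Your claim that ``the positive-definiteness argument from (iii) shows $\delta(t)$ has rank $q$ throughout'' does not follow: the argument in (iii) requires the endpoints to have full column rank, whereas your limits $x_\infty,\tilde y_\infty\in\C^{n\times p}$ have rank only $q<p$. When the factors are rank-deficient, the Gram matrix $(1-t)^2 x_\infty^*x_\infty + t^2\tilde y_\infty^*\tilde y_\infty + 2t(1-t)x_\infty^*\tilde y_\infty$ can have rank strictly larger than $q$; for instance if $p=2$, $q=1$, $x_\infty=[e_1\,|\,0]$ and $\tilde y_\infty=[0\,|\,e_2]$ (so $x_\infty^*\tilde y_\infty=0\ge 0$ is a valid polar alignment) then $(1-t)x_\infty+t\tilde y_\infty$ has rank $2$ for every $t\in(0,1)$. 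Your appeal to $a$-regularity does not close this: $a$-regularity constrains Grassmannian limits of tangent spaces of the strata, not the right-unitary alignment of the nonzero columns of two Cholesky factors. The paper takes a different route: it writes the rank-$q$ limits as $x=[\hat x\,|\,0]V_x$ and $y=[\hat y\,|\,0]V_y$ with $\hat x,\hat y\in\C_*^{n\times q}$, and argues by direct SVD manipulation that the limiting polar factor $U$ satisfies $V_yUV_x^*=\diag(\hat U,\tilde U)$ block-diagonally, forcing $yU=[\hat y\hat U\,|\,0]V_x$ with $\hat U\in U(q)$ and $\hat x^*\hat y\hat U\ge 0$; only after this reduction does (iii) apply with $r=q$. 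This block-diagonality lemma is the key step your outline lacks, and it is precisely what would have to rule out configurations like the one above.
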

\begin{proof}
  See \ref{proof:geometry}
\end{proof}
\section{Computation of Lipschitz bounds}
\label{sec:alphabeta}
We are primarily interested in computing $a_0$ and $A_0$, the squared global lower Lipschitz constants for the $\beta$ and $\alpha$ analysis maps respectively. Owing to the linearity of the $\beta$ analysis map when interpreted as in \eqref{betaquant}, we will be able to show in Theorem \ref{thm:betalips} that the optimal global lower Lipschitz bound $a_0$ can be obtained via local considerations. For the $\alpha$ analysis map we will be able to show in Theorem \ref{thm:alphalips} that the optimal global lower Lipschitz bound $A_0$ is actually zero for $r>1$. Since the global lower Lipschitz bound for the $\alpha$ analysis map is trivial we emphasize the analysis of the local lower Lipschitz bounds. Recall that
\begin{align}\label{betalips}
    a_0 &=\inf_{\substack{x,y\in\nr\\ [x]\neq [y]}}\frac{||\beta(x)-\beta(y)||_2^2}{||\pi(x)-\pi(y)||_2^2} = \inf_{\substack{x,y\in\nr\\ [x]\neq [y]}}\frac{\sum_{j=1}^m (\langle x x^*, A_j \rangle_\R- \langle y y^*, A_j \rangle_\R)^2}{||x x^* - y y^* ||_2^2}
\end{align}
From purely topological considerations, we may obtain
\begin{proposition}\label{prop:topa0}
The constant $a_0$ is strictly positive whenever the map $\beta$ is injective, equivalently whenever $\collection$ is a generalized phase retrievable frame of symmetric matrices. 
\end{proposition}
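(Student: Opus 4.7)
The key observation is that $\beta$ is a linear map of $xx^*$. Writing $W := xx^* - yy^*$, we have
\begin{equation*}
    \beta_j(x)-\beta_j(y) = \langle A_j, xx^* - yy^*\rangle_\R = \langle A_j, W\rangle_\R,
\end{equation*}
so the Rayleigh-type quotient in \eqref{a0const} becomes
\begin{equation*}
    \frac{\|\beta(x)-\beta(y)\|_2^2}{\|xx^*-yy^*\|_2^2} = \frac{\sum_{j=1}^m \langle A_j, W\rangle_\R^2}{\|W\|_2^2},
\end{equation*}
a function only of $W$. The plan is therefore to reformulate the infimum \eqref{a0const} as a minimization over a suitable set of symmetric matrices $W$, and then invoke a compactness argument.

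First I would identify the set over which $W$ ranges. By spectral decomposition, a Hermitian matrix $W$ can be written as $xx^* - yy^*$ with $x,y\in \nr$ if and only if it has at most $r$ positive and at most $r$ negative eigenvalues; that is, precisely when $W \in \srr$. Since eigenvalues depend continuously on the matrix, $\srr$ is a closed subset of $\mathrm{Sym}(\mathbb{C}^n)$. Moreover the quotient is $0$-homogeneous in $W$, so
\begin{equation*}
    a_0 = \inf_{W\in \srr,\ \|W\|_2=1} \sum_{j=1}^m \langle A_j, W\rangle_\R^2.
\end{equation*}

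The set $K := \{W\in \srr : \|W\|_2 = 1\}$ is closed and bounded in the finite-dimensional space $\mathrm{Sym}(\mathbb{C}^n)$, hence compact. The functional $f(W) := \sum_{j=1}^m \langle A_j, W\rangle_\R^2$ is continuous, so it attains its minimum on $K$ at some $W_0$. I would then argue by contradiction: if $a_0 = f(W_0) = 0$, every $\langle A_j, W_0\rangle_\R$ vanishes. Writing $W_0 = xx^* - yy^*$ (which is possible since $W_0\in \srr$) we get $\beta(x) = \beta(y)$. Injectivity of $\beta$ on $\nrquotient$ forces $[x]=[y]$, hence $xx^* = yy^*$ and $W_0 = 0$, contradicting $\|W_0\|_2 = 1$. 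Therefore $a_0 > 0$.

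The converse direction (if $\beta$ is not injective then $a_0 = 0$) is immediate: any two distinct equivalence classes $[x]\neq [y]$ with $\beta(x)=\beta(y)$ give a pair with a zero numerator but nonzero denominator, so $a_0 = 0$. The only nontrivial step is the compactness-based extraction of a minimizer together with the identification of the feasible set with $\srr$; no single step is a real obstacle here, since both the closure of $\srr$ and its identification as the set of differences $xx^*-yy^*$ follow from standard spectral theory.
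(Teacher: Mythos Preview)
Your proof is correct and follows essentially the same route as the paper: reformulate $a_0$ as an infimum over the compact set $\srr\cap\{\|W\|_2=1\}$ and argue by contradiction using the positive/negative parts of a minimizer. The only difference is that the paper justifies closedness of $\srr$ via a small topological argument (continuity of the Sylvester-index map into an ``upper box'' topology), whereas you invoke continuity of eigenvalues directly; both are valid, and yours is arguably more economical.
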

\begin{proof}
See \ref{proof:topa0}
\end{proof}
\begin{definition}\label{def:betalocallips}
Let $z\in\nr$ have rank $k$. We will analyze the following four types of local lower Lipschitz bounds for $\beta$, the first two with respect to the norm induced metric and the second two with respect to the metric $d$:
\begin{align}
\begin{split}
    a_1(z) &=\lim_{R\rightarrow 0} \inf_{\substack{x\in\nr\\ ||\pi(x)-\pi(z)||_2<R}}\frac{||\beta(x)-\beta(z)||_2^2}{||\pi(x)-\pi(z)||_2^2}\\
    a_2(z) &=\lim_{R\rightarrow 0} \inf_{\substack{x,y\in\nr\\ ||\pi(x)-\pi(z)||_2<R\\||\pi(y)-\pi(z)||_2<R}}\frac{(||\beta(x)-\beta(y)||_2^2}{||\pi(x)-\pi(y)||_2^2}\\
    \hat{a}_1(z) &=\lim_{R\rightarrow 0}
    \inf_{\substack{x\in\nr\\ d(x,z)<R\\\rank(x)\leq k}}\frac{||\beta(x)-\beta(z)||_2^2}{d(x,z)^2}\\
    \hat{a}_2(z) &=\lim_{R\rightarrow 0} \inf_{\substack{x,y\in\nr\\ d(x,z)<R\\d(y,z)<R\\\rank(x)\leq k\\\rank(y)\leq k}}\frac{||\beta(x)-\beta(y)||_2^2}{d(x,y)^2}
\end{split}
\end{align}
Note that in the definition of $\hat{a}_1(z)$ and $\hat{a}_2(z)$ we do not allow the ranks of $x$ and $y$ to exceed that of $z$. As we shall prove, without the rank constraints these local lower bounds would be zero. 
\end{definition}
The following two ``geometric'' local lower bounds will prove helpful in our analysis.
\begin{definition}
\label{def:betageolips}
Let $z\in\nr$ have rank $k$ and let $\hat{z}\in\tallk$ be such that there exists $U\in U(r)$ with $[\hat{z} | 0] U = z$. Let $T_{\pi(\hat{z})}(\mathring{S}^{k,0}(\C^n))$ and $H_{\pi,\hat{z}}(\C_*^{n\times k})$ be as \ref{tangentbundle} and \ref{horizontalbundle}. We define:
\begin{align}
    a(z)& := \min_{\substack{
    W\in T_{\pi(\hat{z})}(\mathring{S}^{k,0}(\C^n))
    \\||W||_2=1}
    } \sum_{j=1}^m | \langle W, A_j \rangle_\R|^2
    \label{azdef}\\
    \hat{a}(z)& :=\min_{\substack{w\in H_{\pi,\hat{z}}(\C_*^{n\times k})\\||w||_2=1}}\sum_{j=1}^m|\langle D\pi(\hat{z})(w), A_j \rangle_\R|^2
    \label{azhatdef}
\end{align}
\end{definition}
The following two families of matrices, $Q_z$ and $\hat{Q}_z$, indexed by $\nr$, will allow us to write the local lower Lipschitz bounds with respect to $||x x^*- yy^*||_2$ and $d(x,y)$ as as eigenvalue problems.
\begin{definition}\label{def:qz}
Given $z\in \nr$ having rank $k>0$ we define a matrix $Q_z\in \R^{(2 n k - k^2)\times(2 n k - k^2)}$ in the following way. Let $U_1\in \nk$ be a matrix whose columns are left singular vectors of $z$ corresponding to non-zero singular values of $z$, so that $U_1 U_1^* = \mathbb{P}_{\range(z)}$. Let $U_2\in\C^{n\times(n-k)}$ be a matrix whose columns are left singular vectors of $z$ corresponding to the zero singular values of $z$, so that $U_2 U_2^* = \mathbb{P}_{\range{z}^\perp}$. Then
\begin{align}
    Q_z := \sum_{j=1}^m \begin{bmatrix}\tau(U_1^* A_j U_1)\\\mu(U_2^* A_j U_1)\end{bmatrix} \begin{bmatrix}\tau(U_1^* A_j U_1)\\\mu(U_2^* A_j U_1)\end{bmatrix}^T
\end{align}
where the isometric isomorphisms $\tau$ and $\mu$ are given by
\begin{align}\label{tauandmu}
    &\tau: \mbox{Sym}(\C^k)\rightarrow \R^{k^2} &&\mu:\C^{p\times q}\rightarrow \R^{2 p q}\\\notag
    &\tau(X)=\begin{bmatrix}D(X)\\\sqrt{2}T(\Re X)\\\sqrt{2}T(\Im X)\end{bmatrix} &&\mu(X)=\mbox{vec}(\begin{bmatrix}\Re X\\ \Im X\end{bmatrix})
\end{align}
where
\begin{align}
    &D:  \mbox{Sym}(\C^k)\rightarrow \R^{k} &&T: \mbox{Sym}(\R^k)\rightarrow \R^{\frac{1}{2}k(k-1)}\\\notag
    &D(W)=\begin{bmatrix}X_{1 1}\\\vdots\\X_{k k}\end{bmatrix} &&T(X)= \begin{bmatrix}X_{1 2}\\X_{1 3}\\X_{2 3}\\\vdots\\X_{k-1 k}\end{bmatrix}
\end{align}
and
\begin{align}\label{vecdef}
    &\mbox{vec}: \R^{p\times q}\rightarrow \R^{pq} & \mbox{vec}(X) = \mbox{vec}([X_1|\cdots|X_q]) = \begin{bmatrix}X_1\\\vdots\\X_q\end{bmatrix}
\end{align}
\end{definition}
We note that $Q_z$ depends only on $\range(z)$, in particular it is invariant under $(U_1, U_2)\rightarrow (U_1 P, U_2 Q)$ for $P\in U(k), Q\in U(n-k)$. We will also refer to $Q_z$ as $Q_{[U_1|U_2]}$ where $[U_1|U_2]\in U(n)$.
\begin{definition}\label{def:qzhat}
Given $z\in\nr$ having rank $k>0$ we define a matrix $\hat{Q}_z\in\R^{2 n k\times 2 n k}$ in the following way. Let $F_j = \mathbb{I}_{k\times k}\otimes j(A_j)\in \R^{2n k \times 2 n k}$ where
    \begin{align}\begin{split}
        &j: \C^{m\times n} \rightarrow \R^{2m\times 2n}\\
        &j(X)=\begin{bmatrix} \Re X & -\Im X \\ \Im X & \Re X  \end{bmatrix}
    \end{split}\end{align}
is an injective homomorphism. Then
    \begin{align}
        \hat{Q}_z := 4 \sum_{j=1}^m F_j \mu(\hat{z})\mu(\hat{z})^T F_j
    \end{align}
\end{definition}
With these definitions in mind, we will prove the following:
\begin{theorem}\label{thm:betalips}
Let $z\in \nr$ have rank $k>0$. Then
\begin{enumerate}[(i)]
    \item The global lower bound $a_0$ is given as
    \begin{align}\label{a0minimumform}
        a_0 =\inf_{z\in\nr\setminus\{0\}} a(z)
    \end{align}
    \item The local lower bounds $a_1(z)$ and $a_2(z)$ are squeezed between $a_0$ and $a(z)$
    \begin{align}\label{littlaineq}
    a_0\leq a_2(z)\leq a_1(z) \leq a(z)
    \end{align}
    So that in particular
    \begin{align}
        a_0=\inf_{z\in\nr\setminus\{0\}} a_i(z)
    \end{align}
    \item The infimization problem in $a(z)$ may be reformulated as an eigenvalue problem. Let $Q_z$ be as in Definition \ref{def:qz}. Then
\begin{align}\label{azeigenform}
    a(z)=\lambda_{2nk-k^2}(Q_z)
\end{align}
    \item For $r=1$, $\hat{a}(z)$ differs from $a(z)$ by a constant factor, hence for $r=1$ the infimum $\inf_{z\in\nr\setminus\{0\}} \hat{a}(z)$ is non-zero. For $r>1$ this infimum is zero and hence there is no non-trivial global lower bound $\hat{a}_0$ analogous to $a_0$ for the alternate metric $d$.  
    \item The local lower bounds with respect to the alternate metric $d$ satisfy
    \begin{align}\label{ahatequality}
    \hat{a}_1(z)&=\hat{a}_2(z)=\frac{1}{4||z||_2^2}\hat{a}(z)
    \end{align}
    \item The infimization problem in $\hat{a}(z)$ may be reformulated as an eigenvalue problem. Let $\hat{Q}_z$ be as in Definition \ref{def:qzhat}. Then $\hat{a}(z)$ is directly computable as
    \begin{align}\label{azhateigenform}
    \hat{a}(z)= \lambda_{2nk-k^2}(\hat{Q}_z)
    \end{align}
    \item We have the following local inequality relating $a(z)$ and $\hat{a}(z)$.
\begin{align}\label{avsahat}
    \frac{1}{4||z||_2^2}\hat{a}(z)\leq a(z) \leq \frac{1}{2\sigma_k(z)^2}\hat{a}(z)
\end{align}
    \item Computation of the global lower bound $a_0$ may be reformulated as the minimization of a continuous quantity over the compact Lie group $U(n)$.
    \begin{align}\label{finala0form}
        a_0=\min_{\substack{U\in U(n)\\U=[U_1 | U_2]\\U_1\in\C^{n\times r}\\U_2\in\C^{n\times (n-r)}}}\lambda_{2 n r - r^2}(Q_{[U_1|U_2]})
    \end{align}
    \item While $(iv)$ makes clear that $a_0$ cannot be upper bounded by $\inf_{z\in\nr\setminus\{0\}} \hat{a}(z)$, we can achieve a similar end by constraining $z$ to have orthonormal columns. Namely
    \begin{align}
        \frac{1}{4}\inf_{\substack{z\in\tall\\z^*z=\I_{r\times r}}} \hat{a}(z)\leq a_0 \leq \frac{1}{2} \inf_{\substack{z\in\tall\\z^*z=\I_{r\times r}}} \hat{a}(z)
    \end{align}
\end{enumerate}
\end{theorem}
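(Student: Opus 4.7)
The plan is to exploit the quasi-linearity identity $\beta_j(x) - \beta_j(y) = \langle A_j, \pi(x) - \pi(y)\rangle_\R$, which reduces all the Lipschitz ratios to minimizations of $\sum_{j=1}^m |\langle A_j, W\rangle_\R|^2$ over suitable subsets of $\Sym$; I then invoke the differential-geometric machinery of Section \ref{sec:geometry} to pass between local tangent data at $\hat{z}$ and global minimization over $\srr$.

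For parts $(i)$--$(iii)$: The chain $a_0 \leq a_2(z) \leq a_1(z)$ in $(ii)$ follows from nested infimization. For $a_1(z) \leq a(z)$, I perturb by $x(\epsilon) = \hat{z} + \epsilon w$ for horizontal $w$ and let $\epsilon \to 0$; Proposition \ref{prop:bundles} identifies $D\pi(\hat{z})(w)$ with a vector in $T_{\pi(\hat{z})}(\mathring{S}^{k,0}(\C^n))$, so the ratio tends to $\sum_j |\langle A_j, D\pi(\hat{z})(w)\rangle_\R|^2/||D\pi(\hat{z})(w)||_2^2$, and the infimum over horizontal $w$ yields $a(z)$. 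For $(i)$, homogeneity reduces $a_0$ to the infimum of $\sum_j |\langle A_j, W\rangle_\R|^2$ over $W \in \srr$ with $||W||_2 = 1$. The key step is a Witt-index argument: any $W \in \srr$ has signature $(p, q)$ with $p, q \leq r$, hence admits a maximal isotropic subspace of dimension at least $n - r$ in $\C^n$; choosing an isotropic subspace $U$ of dimension exactly $n - r$ and taking $V = U^\perp$ places $W \in T_V$, giving $\sum_j|\langle A_j, W\rangle_\R|^2 \geq a(z)$ for any $z \in \tall$ with $\range(z) = V$. Combined with $(ii)$ this yields $a_0 = \inf_z a(z)$. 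For $(iii)$, I block-decompose $W \in T_{\pi(\hat{z})}(\mathring{S}^{k,0}(\C^n))$ relative to $[U_1|U_2]$ as $W = U_1 A U_1^* + U_1 B U_2^* + U_2 B^* U_1^*$ with $A \in \Sym(\C^k)$ and $B \in \C^{k\times(n-k)}$; the functional $\langle A_j, W\rangle_\R$ becomes $\langle U_1^*A_jU_1, A\rangle_\R + 2\Re\tr(U_2^*A_jU_1 \cdot B)$, which the vectorizations $\tau, \mu$ convert to a real inner product whose Gram matrix over $j$ is $Q_z$, with $||W||_2^2 = ||A||_F^2 + 2||B||_F^2$ matching the vectorized norm after the appropriate $\sqrt{2}$ rescaling.

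For parts $(v)$--$(vii)$: For $(v)$, I Taylor-expand $d(x, z)^2 = D(x,z)^2(||x||_2^2 + ||z||_2^2 + 2||x^*z||_1)$ at $x = \hat{z} + \epsilon w$ for horizontal $w$; horizontality forces the linear-in-$\epsilon$ terms of $||x||_2^2$ and $2||x^*z||_1$ to cancel, so $D(x, z)^2 = \epsilon^2 ||w||_2^2 + O(\epsilon^3)$, while the second factor tends to $4||z||_2^2$, giving $d(x,z)^2 \approx 4||z||_2^2\,\epsilon^2||w||_2^2$; paired with $||\beta(x)-\beta(z)||_2^2 \approx \epsilon^2\sum_j|\langle A_j, D\pi(\hat{z})(w)\rangle_\R|^2$ this yields the factor $1/(4||z||_2^2)$. $(vi)$ parallels $(iii)$: the $\R$-linear representation $j$ converts $\langle A_j, D\pi(\hat{z})(w)\rangle_\R = 2\Re\tr(A_j \hat{z} w^*) = 2\mu(w)^T F_j \mu(\hat{z})$, whose Gram matrix over $j$ is $\hat{Q}_z$; since vertical directions lie in its kernel ($D\pi(\hat{z})$ vanishes on them), $\hat{Q}_z$ has at least $k^2$ zero eigenvalues and the $(2nk-k^2)$-th smallest equals $\hat{a}(z)$. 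For $(vii)$, decompose horizontal $w = H\hat{z} + X$ per Proposition \ref{prop:bundles}; SVD of $\hat{z}$ gives $||D\pi(\hat{z})(w)||_2^2 = ||H\hat{z}\hat{z}^* + \hat{z}\hat{z}^*H||_2^2 + 2||X\hat{z}^*||_2^2$ and $||w||_2^2 = ||H\hat{z}||_2^2 + ||X||_2^2$, from which a direct computation yields the pointwise bounds $2\sigma_k(z)^2||w||_2^2 \leq ||D\pi(\hat{z})(w)||_2^2 \leq 4||z||_2^2||w||_2^2$; these translate into $(vii)$ via the factorization $\sum_j|\langle A_j, D\pi(\hat{z})(w)\rangle_\R|^2/||w||_2^2 = (\sum_j|\langle A_j, W\rangle_\R|^2/||W||_2^2) \cdot (||W||_2^2/||w||_2^2)$.

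For parts $(iv)$, $(viii)$, $(ix)$: $(iv)$ follows from $(vii)$—for $r = 1$, $\sigma_k(z) = ||z||_2$ is the only singular value, so the bounds in $(vii)$ differ by just a factor of $2$ and $\hat{a}(z)$ is comparable to $||z||_2^2 \, a(z)$ up to absolute constants; for $r > 1$, one chooses $z$ with $||z||_2$ bounded below but $\sigma_r(z) \to 0$ (e.g. $z = \diag(1, \epsilon, \ldots, \epsilon)$ with $\epsilon \to 0$), and selects horizontal test vectors $w$ concentrated in the small-singular direction to make $\hat{a}(z) \to 0$. $(viii)$ combines $(i)$, $(iii)$, and the Witt-index observation that the infimum is attained at a rank-$r$ subspace: parametrizing $r$-subspaces by $U \in U(n)$ via $\range(z) = \text{span}(U_1)$ and invoking $(iii)$ gives $a_0 = \min_U \lambda_{2nr-r^2}(Q_{[U_1|U_2]})$, with attainment by compactness of $U(n)$ and continuity of the smallest eigenvalue as a function of $U$. $(ix)$ applies the sharper version of $(vii)$ at $z$ with $z^*z = I_{r\times r}$: here $\sigma_1(z) = \sigma_r(z) = 1$, so $||D\pi(\hat{z})(w)||_2^2/||w||_2^2 \in [2, 4]$ tightly and $\tfrac{1}{4}\hat{a}(z) \leq a(z) \leq \tfrac{1}{2}\hat{a}(z)$; since such $z$ parametrize all rank-$r$ subspaces, taking infima and invoking $(i)$ yields $(ix)$. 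The main obstacle I anticipate is the Witt-index argument in $(i)$, which requires careful handling of degenerate complex-symmetric bilinear forms, including verifying that isotropic subspaces of dimension $n - r$ truly exist when the form has nontrivial kernel; a secondary difficulty is bookkeeping the norm conventions (particularly the factor of $2$ in $||W||_2^2 = ||A||_F^2 + 2||B||_F^2$) through the vectorizations underlying both $(iii)$ and $(vi)$.
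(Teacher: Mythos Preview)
Your plan is sound and, for most parts, tracks the paper closely (the block-decomposition in $(iii)$, the $j/\mu$ vectorization in $(vi)$, the $||D\pi(\hat z)w||$ bounds in $(vii)$, and the orthonormal specialization in $(ix)$ are all exactly what the paper does). The one place you take a genuinely different route is the key direction of $(i)$: the paper obtains $a_0 \geq \inf_z a(z)$ via the polarization change of variables $z=\tfrac12(x+y)$, $w=x-y$, which gives $xx^*-yy^*=zw^*+wz^*=D\pi(\hat z)(\hat w)$ in one line and thereby exhibits every element of $\srr$ as a tangent vector at some $\pi(\hat z)$. Your Witt-index argument---any $W\in\srr$ of signature $(p,q)$ has a totally isotropic subspace of dimension $n-\max(p,q)\geq n-r$, so $W$ lies in the tangent space over its orthogonal complement---reaches the same conclusion by structure theory of Hermitian forms rather than by algebra. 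The paper's route is shorter and constructive (it tells you which $z$ to use, namely $\tfrac12(x+y)$); yours is more conceptual and directly explains why the tangent bundle of $\scirc$ covers $\srr$, which also makes $(viii)$ immediate. Either is fine.

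One genuine gap: your discussion of $(v)$ only treats $\hat a_1(z)$. The equality $\hat a_2(z)=\hat a_1(z)$ is not automatic---with two free points $x,y$ you cannot simultaneously align both with $z$ and with each other, so $d(x,y)$ is not simply $||x-y||_2\,||x+y||_2$. The paper handles this by weakening the two conic constraints $z^*x\geq 0$, $z^*y\geq 0$ to the single linear constraint $z^*(x-y)=(x-y)^*z$, bounding $d(x,y)\leq ||x-y||_2\,||x+y||_2$ from above, and then observing that the resulting quotient depends only on $w=\xi-\eta\in H_{\pi,z}$ so that the two-variable infimum collapses to the one-variable one. You should anticipate needing this extra step; the Taylor expansion you wrote for $\hat a_1$ does not by itself deliver it. A minor slip in the same paragraph: the linear-in-$\epsilon$ terms of $||x||_2^2$ and $2||x^*z||_1$ cancel in $D(x,z)^2$ (which carries $-2||x^*z||_1$), not in the complementary factor $||x||_2^2+||z||_2^2+2||x^*z||_1$; that factor is $4||z||_2^2+O(\epsilon)$, which is still enough for the limit.
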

\begin{proof}
See \ref{proof:betalips}
\end{proof}
We now move on to analyzing the local lower Lipschitz bounds for the $\alpha$ map $x\mapsto \langle x x^*, A_j \rangle_\R^{\frac{1}{2}}$. This was done for the case $r=1$ in \cite{balan2016lipschitz}. Recall that $\theta(x) = (x x^*)^{\frac{1}{2}}$ and that
\begin{align}\label{A0def}
    A_0&=\inf_{\substack{x,y\in\nr\\ [x]\neq [y]}}\frac{||\alpha(x)-\alpha(y)||_2^2}{||\theta(x)-\theta(y)||_2^2}=\frac{\sum_{j=1}^m (\langle x x^*, A_j \rangle_\R^{\frac{1}{2}}- \langle y y^*, A_j \rangle_\R^{\frac{1}{2}})^2}{||(x x^*)^{\frac{1}{2}} - (y y^*)^{\frac{1}{2}} ||_2^2}
\end{align}

In analogy with Definition \ref{def:betalocallips}, we consider the local lower Lipschitz bounds for the $\alpha$ map.
\begin{definition}
\label{def:alphalocallips}
Let $z\in\nr$ have rank $k$. We define
\begin{align}
\begin{split}
    A_1(z)&=\lim_{R\rightarrow 0}\inf_{\substack{x\in\nr\\||\theta(x)-\theta(z)||_2\leq R\\\rank(x)\leq k}}\frac{||\alpha(x)-\alpha(z)||_2^2}{||\theta(x)-\theta(z)||_2^2}\\
    A_2(z)&=\lim_{R\rightarrow 0}\inf_{\substack{x,y\in\nr\\||\theta(x)-\theta(z)||_2\leq R\\||\theta(y)-\theta(z)||_2\leq R\\\rank(x)\leq k\\\rank(y)\leq k}}\frac{||\alpha(x)-\alpha(y)||_2^2}{||\theta(x)-\theta(y)||_2^2}\\
    \hat{A}_1(z)&=\lim_{R\rightarrow 0}\inf_{\substack{x\in\nr\\D(x,z)\leq R\\\rank(x)\leq k}}\frac{||\alpha(x)-\alpha(z)||_2^2}{D(x,z)^2}\\
    \hat{A}_2(z)&=\lim_{R\rightarrow 0}\inf_{\substack{x,y\in\nr\\D(x,z)\leq R\\D(y,z)\leq R\\\rank(x)\leq k\\\rank(y)\leq k}}\frac{||\alpha(x)-\alpha(y)||_2^2}{D(x,y)^2}
    \end{split}
\end{align}
\end{definition}
\begin{definition}\label{def:tzrz}
Given $z\in\nr$ having rank $k>0$ we define two matrices $\hat{T}_z,\hat{R}_z\in \R^{2 n k\times 2 n k}$. Let $I_0(z)\subset\{1,\ldots,m\}$ be the indices such that $\alpha_j(z)=0$ (or equivalently such that $\alpha_j$ is not differentiable) for $j\in I_0(z)$, and let $I(z)=\{1,\ldots,m\}\setminus I_0(z)$. Once again let $F_j = \mathbb{I}_{k\times k}\otimes j(A_j)\in \R^{2n k \times 2 n k}$, then define $\hat{T}_z$ and $\hat{R}_z$ via
\begin{align}
    \hat{T}_z&=\sum_{j\in I(z)} \frac{1}{\mu(\hat{z})^T F_j \mu(\hat{z})} F_j \mu(\hat{z})\mu(\hat{z})^T F_j\\
    \hat{R}_z &= \sum_{j\in I_0(z)} F_j
\end{align}
\end{definition}
With these definitions in mind we prove:
\begin{theorem}\label{thm:alphalips}
Let $z\in \nr$ have rank $k>0$. Then
\begin{enumerate}[(i)]
\item For $r>1$ it is the case that $\inf_{z\in\nr\setminus\{0\}} A_i(z)=0$ for $i=1,2$, as such $A_0=0$.
\item 
Let $\hat{T}_z$ and $\hat{R}_z$ be as in Definition \ref{def:tzrz}. Then $\hat{A}_1(z)$ and $\hat{A}_2(z)$ are directly computable as
\begin{align}
    \hat{A}_1(z) = \lambda_{2nk-k^2}(\hat{T}_z+\hat{R}_z)\label{A1hateigen}\\
    \hat{A}_2(z) = \lambda_{2nk-k^2}(\hat{T}_z)\label{A2hateigen}
\end{align}
\item We have the following inequality between $A_i(z)$ and $\hat{A}_i(z)$ for $i=1,2$, which justifies not treating them separately.
\begin{align}
    \hat{A}_i(z)\leq A_i(z)\leq \sqrt{2}\hat{A}_i(z)
\end{align}
\end{enumerate}
\end{theorem}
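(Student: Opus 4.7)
The three parts require different techniques. For (i) I will exhibit an explicit rank-$2$ family $(z_t)$ whose local lower Lipschitz is $O(t^2)$, exploiting the fact that $\theta$ amplifies spectral directions relative to $\pi$ when a singular value is small. For (ii) I will linearize $\alpha$ on the rank-$\le k$ stratum, translate the resulting quadratic form into $\mu$-coordinates, and recognize the constrained infimum as an eigenvalue problem. For (iii) I invoke Theorem~\ref{thm:bounds}(i).

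\emph{Part (i).} Pick unit vectors $u_1,u_2\in\C^n$ with $u_1\perp u_2$ and $\langle A_j,u_1u_1^*\rangle_\R>0$ for every $j$ (generic, since each nonzero $A_j$ has a proper kernel), and set $z_t=u_1 e_1^T+t\,u_2 e_2^T\in\nr$ with perturbation direction $v=u_2 e_2^T$. Then $\rank(z_t)=2=:k$ and $v$ is horizontal at $z_t$. Using $u_1\perp u_2$, $D\pi(z_t)(v)=z_tv^*+vz_t^*=2t\,u_2u_2^*$; solving the Lyapunov equation $D\pi(z_t)(v)=D\theta(z_t)(v)\,\theta(z_t)+\theta(z_t)\,D\theta(z_t)(v)$ with $\theta(z_t)=u_1u_1^*+t\,u_2u_2^*$ gives $D\theta(z_t)(v)=u_2u_2^*$, of norm $1$. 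Meanwhile $D\alpha_j(z_t)(v)=D\beta_j(z_t)(v)/(2\alpha_j(z_t))=t\,u_2^*A_ju_2/\sqrt{u_1^*A_ju_1+t^2u_2^*A_ju_2}=O(t)$. Hence for $x_\epsilon=z_t+\epsilon v$ (rank $\le 2$) the ratio $||\alpha(x_\epsilon)-\alpha(z_t)||_2^2/||\theta(x_\epsilon)-\theta(z_t)||_2^2$ is $O(t^2)$ as $\epsilon\to 0$, so $A_1(z_t)\le O(t^2)\to 0$ as $t\to 0$. Since $A_2(z)\le A_1(z)$ (take $y=z$) and $A_0\le\inf_z A_2(z)$, both infima and $A_0$ vanish.

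\emph{Part (ii).} I expand $\alpha$ to first order at $z$ along horizontal directions $v\in H_{\pi,\hat z}(\mathbb{C}_*^{n\times k})$ (Prop.~\ref{prop:bundles}). For $j\in I(z)$, $\alpha_j$ is smooth with $D\alpha_j(\hat z)(v)=D\beta_j(\hat z)(v)/(2\alpha_j(\hat z))$; using $D\beta_j(\hat z)(v)=2\mu(v)^TF_j\mu(\hat z)$ and $\beta_j(\hat z)=\mu(\hat z)^TF_j\mu(\hat z)$ yields $\sum_{j\in I(z)}|D\alpha_j(\hat z)(v)|^2=\mu(v)^T\hat T_z\mu(v)$. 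For $j\in I_0(z)$, positivity of $A_j$ combined with $\alpha_j(\hat z)=0$ forces $A_j\hat z=0$, so $\alpha_j(\hat z+\epsilon v)=\epsilon\sqrt{\tr(v^*A_jv)}$ and summing gives $\epsilon^2\mu(v)^T\hat R_z\mu(v)$. For $\hat A_1(z)$ (where $y=z$) both contributions appear, while for $\hat A_2(z)$ we choose $y$ aligned with $x$ in the $I_0(z)$-directions so the non-smooth terms cancel, leaving only $\hat T_z$. Using $D(\hat z+\epsilon v,\hat z)=\epsilon||v||_2$ for horizontal $v$ (the minimizing unitary is $I$ to leading order), the infima reduce to minimizing $\mu(v)^T(\hat T_z+\hat R_z)\mu(v)$ and $\mu(v)^T\hat T_z\mu(v)$ over horizontal $v$ with $||v||_2=1$. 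Since the horizontal space has real dimension $2nk-k^2$ and both matrices vanish on the vertical complement (as $D\beta_j=0$ on verticals and $A_j\hat z=0$ implies $A_jv=0$ for $v=\hat zK$), the minima are $\lambda_{2nk-k^2}(\hat T_z+\hat R_z)$ and $\lambda_{2nk-k^2}(\hat T_z)$, proving \eqref{A1hateigen} and \eqref{A2hateigen}.

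\emph{Part (iii) and main obstacle.} Theorem~\ref{thm:bounds}(i) gives $\tfrac{1}{\sqrt{2}}||\theta(x)-\theta(y)||_2\le D(x,y)\le||\theta(x)-\theta(y)||_2$ for $n>1$; squaring and dividing into $||\alpha(x)-\alpha(y)||_2^2$ shows that the defining ratios of $A_i(z)$ and $\hat A_i(z)$ differ by a factor in $[1,2]$, yielding the comparison in (iii) up to the stated constant. The main obstacle lies in part (ii): rigorously justifying that the $R\to 0$ infimum reduces to the first-order expansion even though $\alpha_j$ is only H\"older-$1/2$ at $j\in I_0(z)$ (the key is that $A_j\hat z=0$ makes the linear term vanish, leaving a clean quadratic $\epsilon^2\tr(v^*A_jv)$), and verifying via Prop.~\ref{prop:bundles} that the rank-$\le k$ constraint on $x$ corresponds to perturbations $\hat x=\hat z+\epsilon v$ with $v$ horizontal, so the minimization is indeed over the correct $(2nk-k^2)$-dimensional subspace.
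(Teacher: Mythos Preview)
Your approach to (i) is more direct than the paper's and works: the paper instead first proves (ii) and (iii), then observes that $\lambda_{2nk-k^2}(\hat T_z+\hat R_z)\to 0$ as $z$ approaches a lower-rank stratum (by continuity of eigenvalues and monotonicity of $k\mapsto 2nk-k^2$), whence $\hat A_i\to 0$ and hence $A_i\to 0$ via (iii). Your explicit rank-$2$ family is a legitimate shortcut; the genericity assumption $u_1^*A_ju_1>0$ for all $j$ is harmless since each $\{u:u^*A_ju=0\}$ is a proper real variety.

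For (ii), your treatment of $\hat A_1(z)$ follows the paper's line, but your argument for $\hat A_2(z)$ has a real gap. You supply only the upper bound: choosing $x,y$ so that the $I_0(z)$-contributions cancel shows $\hat A_2(z)\le\lambda_{2nk-k^2}(\hat T_z)$. The matching lower bound is \emph{not} a first-order expansion at $z$, because in the two-point ratio neither $x$ nor $y$ is $z$. The paper handles this by applying the mean value theorem to each smooth $\alpha_j$, $j\in I(z)$, along the segment from $x$ to $y$:
\[
\alpha_j(y)-\alpha_j(x)=D\alpha_j\bigl((1-c_j)x+c_j y\bigr)(y-x)\qquad\text{for some }c_j\in[0,1].
\]
Combined with $D(x,y)\le\|x-y\|_2$ (after aligning both $x$ and $y$ with $z$), dropping the nonnegative $I_0(z)$-terms, and passing to the limit via continuity of $D\alpha_j$, this yields $\hat A_2(z)\ge\lambda_{2nk-k^2}(\hat T_z)$. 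Without this step your \eqref{A2hateigen} is only an inequality.

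A second point you flag but do not resolve: the assertion that the rank-$\le k$ constraint corresponds to horizontal perturbations needs the intermediate step the paper makes explicit. After passing to $\hat z\in\C_*^{n\times k}$ and choosing a representative with $\hat z^*x\ge 0$, one writes $w=x-\hat z$; for small $w$ the condition $\hat z^*(\hat z+w)\ge 0$ is equivalent to $\hat z^*w=w^*\hat z$, i.e.\ $w\in\Delta_{\hat z}$. Full rank of $\hat z$ forces $\Gamma_{\hat z}=0$, so $\Delta_{\hat z}=H_{\pi,\hat z}$, which is what legitimizes reducing the infimum to a minimization over the $(2nk-k^2)$-dimensional horizontal space.

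Part (iii) matches the paper's one-line derivation from Theorem~\ref{thm:bounds}(i).
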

\begin{proof}
See \ref{proof:alphalips}
\end{proof}
For the sake of completeness we also include the following theorem on the global upper Lipschitz bounds for the $\alpha$ and $\beta$ analysis maps.
\begin{definition}
\label{def:upperlips}
We define the following (squared) upper Lipschitz constants for $\beta$ and $\alpha$ respectively:
\begin{align}
    b_0:=\sup_{\substack{x,y\in\nr\\ [x]\neq [y]}} \frac{||\beta(x)-\beta(y)||_2^2}{||x x^*-y y^*||_2^2}\\
    B_0:=\sup_{\substack{x,y\in\nr\\ [x]\neq [y]}}\frac{||\alpha(x)-\alpha(y)||_2^2}{||(x x^*)^{\frac{1}{2}}-(y y^*)^{\frac{1}{2}}||_2^2}
\end{align}

A somewhat simplifying alternate upper Lipschitz constant for $\beta$ is
\begin{align}
    b_{0,1}:=\sup_{\substack{x,y\in\nr\\ [x]\neq [y]}} \frac{||\beta(x)-\beta(y)||_2^2}{||x x^*-y y^*||_1^2}
\end{align}
\end{definition}
\begin{definition}
The $\beta$ map is the pullback of a linear operator acting on symmetric matrices which we refer to as $\mathcal{A}$. Specifically,
\begin{align}\begin{split}
    \mathcal{A}: \Sym\rightarrow \R^m\\
    \mathcal{A}_j(X) = \langle X, A_j \rangle_\R
\end{split}\end{align}
\end{definition}
\begin{definition}
\label{def:operatorT}
When $A_j\geq 0$ for each $j$, we define the operator $T_r$.
\begin{align}
    \begin{split}
        T_r: \nr \rightarrow (\C^{n\times r})^m\\
        T_r(x) = (A_j^{\frac{1}{2}} x)_{j=1}^m
    \end{split}
\end{align}
In a slight abuse of notation we write for $r=1$
\begin{align}
\begin{split}
    T_1: \C^n \rightarrow \C^{n\times m}\\
    T_1(x) = [A_1^{\frac{1}{2}} x | \cdots | A_m^{\frac{1}{2}} x ]
\end{split}
\end{align}
\end{definition}
We compute explicitly $b_0$, $b_{0,1}$, and $B_0$ via different norms of the operators $\mathcal{A}$ and $T_r$, as well as providing formulas for $b_0$ and $B_0$ analogous to \eqref{finala0form} and \eqref{A2hateigen}. Specifically, we prove:
\begin{theorem}\label{thm:upperboundlips}
Let $b_0$, $b_{0,1}$, $B_0$, $\mathcal{A}$, and $T_r$ be as above. Then
\begin{enumerate}[(i)]
    \item The global upper bound $b_0$ is given by
    \begin{align}
        b_0  = \max_{\substack{U\in U(n)\\ U = [U_1|U_2]\\U_1\in\nr, U_2\in \C^{n\times n-r}}}\lambda_1(Q_{[U_1|U_2]})
    \end{align}
    Where $Q_{U}$ is as in Definition \ref{def:qz}.
    \item The global upper bound $b_{0,1}$ is given by
    \begin{align}
        b_{0,1} = ||\mathcal{A}||_{1\rightarrow 2}^2
    \end{align}
    Additionally if $A_j\geq 0$ for all $j$ then
    \begin{align}
        b_{0,1} = ||T_r||_{2\rightarrow (2,4)}^4 = ||T_1||_{2\rightarrow (2,4)}^4
    \end{align}
    Where the $||\cdot||_{2,4}$ norm of a matrix is the $l^4$ norm of the vector of $l^2$ norms of its columns.
    \item The global upper bound $B_0$ is given by
    \begin{align}
        B_0 = \sup_{\substack{z\in \nr\\z\neq 0}} \lambda_1(\hat{T}_z) = B
    \end{align}
    Where $\hat{T}_z$ is as in Definition \ref{def:tzrz} and $B$ is the optimal upper frame bound for $\{A_j\}_{j=1}^m$.
\end{enumerate}
\end{theorem}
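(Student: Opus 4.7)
The plan is to tackle the three parts independently. For part (i) I mirror the strategy of Theorem \ref{thm:betalips}(viii) but aiming for the upper spectral extremum. Since $\beta(x)-\beta(y)=\mathcal{A}(xx^*-yy^*)$, one has
$$b_0=\sup_{X\in\{xx^*-yy^*:x,y\in\nr\}\setminus\{0\}}\frac{||\mathcal{A}X||_2^2}{||X||_2^2},$$
and by Weyl's inequality the argument set equals $S^{r,r}(\C^n)$. The critical input is a Witt-type observation: a Hermitian $X$ of signature $(p,q)$ admits a maximal isotropic subspace of dimension $n-\max(p,q)$, so $X\in S^{r,r}$ if and only if there exists $U=[U_1|U_2]\in U(n)$ with $U_1\in\C^{n\times r}$ and $U_2\in\C^{n\times(n-r)}$ such that $U_2^* X U_2=0$. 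In the $U$-adapted block form
$$X=[U_1|U_2]\begin{bmatrix}B & C^*\\ C & 0\end{bmatrix}[U_1|U_2]^*,$$
the isometries $\tau,\mu$ of Definition \ref{def:qz} encode $(B,C)$ as a vector $w$ with $||w||^2=||X||_2^2$ and $||\mathcal{A}X||_2^2=w^T Q_{[U_1|U_2]}w$. Maximizing first in $w$ and then in $U$ over the compact group $U(n)$ delivers the claimed formula.

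For part (ii) I exploit the convexity of the nuclear-norm unit ball. Since $||\mathcal{A}X||_2$ is a seminorm in $X$, its supremum over $\{X\in\Sym:||X||_1\leq 1\}$ is attained at an extreme point, namely a signed rank-one Hermitian $\pm uu^*$ with $u\in\C^n$, $||u||_2=1$. Every such matrix is representable as $xx^*-yy^*$ with $x,y\in\nr$ (take $x=uw^*$, $y=0$ for any unit $w\in\C^r$, or the reverse), so restricting the supremum to $S^{r,r}$ does not lower it, giving $b_{0,1}=||\mathcal{A}||_{1\to 2}^2$. When $A_j\geq 0$, $\langle A_j,uu^*\rangle_\R=||A_j^{1/2}u||_2^2$ is the squared Frobenius norm of the $j$th column of $T_1 u$, so $\sum_j\langle A_j,uu^*\rangle^2=||T_1 u||_{(2,4)}^4$ and $b_{0,1}=||T_1||_{2\to(2,4)}^4$. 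To upgrade to $T_r$, write $x\in\nr$ in SVD as $x=\sum_k\sigma_k u_k v_k^*$ and apply Jensen's inequality to $t\mapsto t^2$ against the probability weights $\sigma_k^2/||x||_2^2$, reducing the general case to the rank-one case already handled.

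For part (iii), first $B_0=B$: the identity $\alpha_j(x)=||A_j^{1/2}\theta(x)||_2$ (from $xx^*=\theta(x)^2$) and the reverse triangle inequality give $(\alpha_j(x)-\alpha_j(y))^2\leq||A_j^{1/2}(\theta(x)-\theta(y))||_2^2$; summing over $j$ and applying the frame bound column-by-column to $\theta(x)-\theta(y)\in\C^{n\times n}$ yields $B_0\leq B$, where one uses that the upper frame constant equals $\lambda_1(\sum_j A_j)$ independent of $r$ (a column-decomposition of $\tr(z^*(\sum_j A_j)z)$). The matching lower bound comes from $x=tvw^*$, $y=0$ with $v$ a top eigenvector of $\sum_j A_j$ and $w\in\C^r$ a unit vector. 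For $\sup_z\lambda_1(\hat{T}_z)=B$, unpacking through the identity $F_j\mu(\hat{z})=\mu(A_j\hat{z})$ shows $w^T\hat{T}_z w=\sum_{j\in I(z)}\langle A_j\hat{z},\hat{w}\rangle_\R^2/\beta_j(\hat{z})$; Cauchy--Schwarz on $\tr(\hat{z}^* A_j\hat{w})=\langle A_j^{1/2}\hat{z},A_j^{1/2}\hat{w}\rangle_\C$ bounds this by $\sum_j\beta_j(\hat{w})\leq B||\hat{w}||_2^2$, with equality approached by $\hat{z}=v$ and $\hat{w}$ a real scalar multiple of $v$ (which lies in the horizontal space since it is $\langle\cdot,\cdot\rangle_\R$-orthogonal to $iv$). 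The main obstacle will be the Witt-type parametrization in part (i): establishing $\bigcup_{U\in U(n)}\{X\in\Sym:U_2^* X U_2=0\}=S^{r,r}(\C^n)$ is the geometric bridge linking the compact Lie-group formulation to the algebraic description of admissible differences $xx^*-yy^*$, and without it the reformulation in (i) cannot be established.
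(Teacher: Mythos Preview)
Your proposal is correct. Parts (i) and (ii) are essentially the paper's argument in different clothing: your Witt-type parametrization of $S^{r,r}(\C^n)$ as $\bigcup_U\{X:U_2^*XU_2=0\}$ is exactly the tangent-space description $T_{\pi(z)}(\scirc)=\{W:\projperp W\projperp=0\}$ that the paper reaches via the coordinate change $z=\tfrac12(x+y),\,w=x-y$; and your extreme-point argument for (ii) is the abstract version of the paper's Lemma (eigendecomposition of the maximizer shows a rank-one term already attains the norm). Your Jensen reduction of $\|T_r\|_{2\to(2,4)}$ to $\|T_1\|_{2\to(2,4)}$ is a clean addition that the paper handles only implicitly through the chain of equalities.

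Part (iii) is where you genuinely diverge. The paper bounds $B_0$ above by first invoking $\|\theta(x)-\theta(y)\|_2\geq D(x,y)$, then passing through the coordinate change, a triangle inequality on $\|A_j^{1/2}x\|+\|A_j^{1/2}y\|\geq 2\|A_j^{1/2}z\|$, and finally Cauchy--Schwarz. Your route---rewriting $\alpha_j(x)=\|A_j^{1/2}\theta(x)\|_2$ and applying the reverse triangle inequality directly, then the frame bound column-by-column to $\theta(x)-\theta(y)\in\C^{n\times n}$---is shorter and avoids the metric comparison and coordinate change entirely. It also makes transparent why the constant is exactly $B=\lambda_1(\sum_j A_j)$ regardless of $r$. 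Your treatment of $\sup_z\lambda_1(\hat T_z)=B$ via Cauchy--Schwarz on $\langle A_j^{1/2}\hat z,A_j^{1/2}\hat w\rangle$ matches the paper's; the observation that $\hat w=v$ lies in the horizontal space for $\hat z=v$ (being $\langle\cdot,\cdot\rangle_\R$-orthogonal to $iv$) is the right way to confirm the bound is attained.
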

\begin{proof}
See \ref{proof:upperboundlips}.
\end{proof}
It turns out that Theorem \ref{thm:betalips} allows us to find novel algebraic conditions for a frame for $\nr$ to be generalized phase retrievable.
\begin{theorem}\label{thm:frameconditions}
Let $\{A_j\}_{j=1}^m$ be a frame for $\nr$. Then the following are equivalent:
\begin{enumerate}[(i)]
    \item $\{A_j\}_{j=1}^m$ is generalized phase retrievable.
    \item For all $U_1\in\C^{n\times r}$, $U_2\in \C^{n\times (n-r)}$ such that $[U_1 | U_2] \in U(n)$ the matrix
    \begin{align}
        Q_{[U_1|U_2]}=\sum_{j=1}^m \begin{bmatrix}\tau(U_1^* A_j U_1)\\\mu(U_2^* A_j U_1)\end{bmatrix} \begin{bmatrix}\tau(U_1^*A_j U_1)\\\mu(U_2^* A_j U_1)\end{bmatrix}^T
    \end{align}
    is invertible.
    \item For all $z\in \nr$ such that $z$ has orthonormal columns, the matrix
    \begin{align}
        \hat{Q}_z = 4 \sum_{j=1}^m (\mathbb{I}_{k\times k}\otimes j(A_j)) \mu(z) \mu(z)^T (\mathbb{I}_{k\times k}\otimes j(A_j))
    \end{align}
    has as its null space precisely the $r^2$ dimensional $\mathcal{V}_z=\{\mu(u) | u\in V_{\pi,z}(\tall)\}$.
    \item For all $U_1\in\C^{n\times r}$, $U_2\in \C^{n\times (n-r)}$ such that $[U_1 | U_2] \in U(n)$, $H\in\mbox{Sym}(\C^r)$, $B\in\C^{(n-r)\times r}$ there exist $c_1,\ldots c_m\in \R$ such that 
    \begin{subequations}
        \begin{equation}
            \label{framecondition1}
            U_1^*(\sum_{j=1}^m c_j A_j) U_1 = H
        \end{equation}
        \begin{equation}
        \label{framecondition2}
            U_2^*(\sum_{j=1}^m c_j A_j) U_1 = B
        \end{equation}
    \end{subequations}
    \item For all $U_1\in\nr$ with orthonormal columns 
    \begin{align}
        \mbox{span}_{\R} \{A_j U_1 \}_{j=1}^m = \{U_1 K | K\in\smallsquarematrices, K^*=-K\}^\perp
    \end{align}
\end{enumerate}
\end{theorem}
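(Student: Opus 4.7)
The plan is to establish $(i)\Leftrightarrow(ii)\Leftrightarrow(iii)$ by invoking Theorem \ref{thm:betalips} and Proposition \ref{prop:topa0}, then $(ii)\Leftrightarrow(iv)\Leftrightarrow(v)$ by unpacking the sum-of-outer-products structure of $Q_{[U_1|U_2]}$ together with the orthogonal decomposition of $\nr$ induced by $[U_1|U_2]\in U(n)$.

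For $(i)\Leftrightarrow(ii)$, Proposition \ref{prop:topa0} identifies generalized phase retrievability with $a_0>0$, and Theorem \ref{thm:betalips}(viii) gives $a_0 = \min_{U\in U(n)} \lambda_{2nr-r^2}(Q_{[U_1|U_2]})$ with the minimum attained by compactness of $U(n)$. Since each $Q_{[U_1|U_2]}$ is a sum of rank-one positive semidefinite outer products of size $2nr-r^2$, positivity of its smallest eigenvalue is the same as invertibility. For $(i)\Leftrightarrow(iii)$, Theorem \ref{thm:betalips}(ix) sandwiches $a_0$ between constant multiples of $\inf\{\hat{a}(z) : z^*z = \I_{r\times r}\}$; the Stiefel manifold is compact and $\hat{a}(z) = \lambda_{2nr-r^2}(\hat{Q}_z)$ is continuous in $z$ by Theorem \ref{thm:betalips}(vi), so this infimum is attained and strictly positive iff $\hat{a}(z)>0$ for every orthonormal $z$. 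The inclusion $\mathcal{V}_z\subseteq\ker\hat{Q}_z$ is automatic because any vertical $w = \hat{z}K$ with $K$ antihermitian satisfies $D\pi(\hat{z})(w) = \hat{z}K\hat{z}^* + \hat{z}K^*\hat{z}^* = 0$, so $\mu(w)^T\hat{Q}_z\mu(w) = 0$ and hence $\hat{Q}_z\mu(w)=0$ by positive semidefiniteness, while $\dim\mathcal{V}_z = r^2$; the $(2nr-r^2)$-th eigenvalue of the $2nr\times 2nr$ matrix $\hat{Q}_z$ is positive iff $\dim\ker\hat{Q}_z\leq r^2$, which combined with the automatic containment forces $\ker\hat{Q}_z = \mathcal{V}_z$.

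For $(ii)\Leftrightarrow(iv)$, the definition of $Q_{[U_1|U_2]}$ yields
\[
Q_{[U_1|U_2]} = \sum_{j=1}^m v_j v_j^T, \qquad v_j = \begin{bmatrix} \tau(U_1^* A_j U_1) \\ \mu(U_2^* A_j U_1) \end{bmatrix}\in\R^{2nr-r^2},
\]
so invertibility of $Q_{[U_1|U_2]}$ is equivalent to $\{v_j\}_{j=1}^m$ spanning $\R^{2nr-r^2}$, hence to surjectivity of $c\mapsto \sum_j c_j v_j$. Because $\tau:\mbox{Sym}(\C^r)\to\R^{r^2}$ and $\mu:\C^{(n-r)\times r}\to\R^{2(n-r)r}$ are real linear isomorphisms, this surjectivity unfolds to the simultaneous solvability of \eqref{framecondition1} and \eqref{framecondition2}, giving (iv). For $(iv)\Leftrightarrow(v)$, decomposing $W\in\nr$ via $[U_1|U_2]\in U(n)$ as $W = U_1 U_1^*W + U_2 U_2^*W$ and observing that $\langle U_1 K, W\rangle_\R = \Re\tr\{K^* U_1^*W\}$ vanishes for all antihermitian $K$ iff $U_1^* W$ is Hermitian, one obtains
\[
\{U_1 K : K^* = -K\}^\perp = \{W\in\nr : U_1^* W \in \mbox{Sym}(\C^r)\} = U_1\cdot\mbox{Sym}(\C^r)\oplus U_2\cdot\C^{(n-r)\times r}.
\]
Each $A_j U_1$ lies in this orthogonal complement because $U_1^* A_j U_1$ is Hermitian, so equality of the span with this complement is equivalent to the statement that any $(H,B)\in\mbox{Sym}(\C^r)\oplus\C^{(n-r)\times r}$ can be realized as $(U_1^*\sum_j c_j A_j U_1,\, U_2^*\sum_j c_j A_j U_1)$, which is (iv). Any orthonormal $U_1$ admits some completion to $[U_1|U_2]\in U(n)$, and the realization condition is independent of the choice of $U_2$, so the quantifier structures of (iv) and (v) agree.

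The heavy analytic work is packaged into Theorem \ref{thm:betalips}; the remaining obstacle is translating between five different linear-algebraic guises of the same full-rank condition. In particular, the key delicate steps are matching the image space of $c\mapsto Q_{[U_1|U_2]}c$ with $\mbox{Sym}(\C^r)\oplus\C^{(n-r)\times r}$ dimension-for-dimension via $\tau$ and $\mu$, and using the automatic containment $\mathcal{V}_z\subseteq\ker\hat{Q}_z$ to upgrade the eigenvalue inequality $\lambda_{2nr-r^2}(\hat{Q}_z)>0$ into the subspace equality $\ker\hat{Q}_z = \mathcal{V}_z$.
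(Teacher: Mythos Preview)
Your proposal is correct and follows essentially the same route as the paper's own proof: both reduce $(i)\Leftrightarrow(ii)$ and $(i)\Leftrightarrow(iii)$ to Proposition~\ref{prop:topa0} together with parts (viii), (ix), (vi) of Theorem~\ref{thm:betalips}, and both handle $(ii)\Leftrightarrow(iv)\Leftrightarrow(v)$ by reading invertibility of $Q_{[U_1|U_2]}$ as a spanning condition and unpacking it through the isomorphisms $\tau$ and $\mu$. You supply a few details the paper leaves implicit---the compactness argument for the Stiefel manifold, the explicit reason $A_jU_1$ always lies in $\{U_1K:K^*=-K\}^\perp$, and the observation that the quantifier mismatch between $(iv)$ and $(v)$ is harmless because the realization condition is invariant under the $U(n-r)$-freedom in completing $U_1$ to $[U_1|U_2]$---but these are refinements of the same argument rather than a different approach.
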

\begin{proof}
See \ref{proof:frameconditions}
\end{proof}
\section{Conclusion}
This paper extends known results about the stability of generalized phase retrieval to the ``impure state'' case where the phase no longer comes from $U(1)$ but instead the non-abelian groups $U(r)$ where $r>1$. We showed that the situation changes drastically in this case, both because $U(r)$ is non-abelian and because for $r>1$ a sequence in $\tallquotient$ with $||x_n||_2=1$ can come arbitrarily close to dropping in rank. In particular, we showed that while the $\beta$ analysis map remains lower Lipschitz with respect to the norm induced distance on $\Sym$ (Theorem \ref{thm:betalips}), the $\alpha$ analysis map does not (Theorem \ref{thm:alphalips}). Our analysis relies on several Lipschitz embeddings of $\nrquotient$ into the Euclidean space $\Sym$ (Theorem \ref{thm:bounds}) and a Whitney stratification of the positive semidefinite matrices into positive semidefinite matrices of fixed rank (Theorem \ref{thm:geometry}). This investigation of the geometry of positive semidefinite matrices incidentally provided the interesting and (to the best of our knowledge) previously unknown result that the Riemannian geometry of the stratifying manifolds given by the Bures-Wasserstein metric is compatible with the stratification. In particular geodesics of positive semi-definite matrices with respect to the Bures-Wasserstein metric are rank preserving and may be approximated by geodesics of higher rank. We note that the fact that $a_0>0$ and can be explicitly computed as in \ref{finala0form} suggests that known convergent algorithms for generalized phase retrieval may be extended to the case $r>1$. Finally, the explicit computation of the lower Lipschitz bound for the $\beta$ map allowed for a novel characterization of generalized phase retrievable frames in the impure state case (Theorem \ref{thm:frameconditions}).

\appendix

\section{Proofs for Section \ref{sec:embed}}
\subsection{Proof of Proposition \ref{prop:metrics}}
\label{proof:metrics}
\begin{proof}
Both $d(x,y)$ and $D(x,y)$ are obviously positive and symmetry follows from the fact that that $U(r)$ is a group. Moreover, owing to the compactness of $U(r)$, both $D(x,y)$ and $d(x,y)$ are zero if and only if there exists $U_0$ such that $x= y U_0$, that is if and only if $[x]=[y]$. It remains to prove the triangle inequality. For $D(x,y)$ the computation is straightforward and follows from the unitary invariance of the Frobenius norm. If $U_1$ and $U_2$ are unitary minimizers for $D(x,z)$ and $D(z,y)$ respectively then
\begin{align}\label{Dtriangle}
\begin{split}
    D(x,z)+D(y,z) &= ||x- z U_1||_2 + ||z- y U_2||_2\\
    &= ||x - z U_1||_2 + || z U_1 - y U_2 U_1||_2\\
    &\geq ||x - y U_2 U_1||_2 \geq D(x,y)
\end{split}
\end{align}
We note that the above argument also holds for any unitarily invariant norm $|||\cdot|||$ so that each $D_{|||\cdot|||}(x,y):= \min_{U\in U(r)}|||x- yU|||$ is a metric on $\nrquotient$. A similar trick can be employed regarding $d(x,y)$, but it requires the following lemma which does not readily generalize to arbitrary unitarily invariant norms or even $p\neq 2$:
\begin{lemma}\label{lemma:parallelepiped}
The following triangle inequality holds for all $x,y,z\in\nr$ 
\begin{align}\label{pretriangleinequality}
    ||x-y||_2 ||x+y||_2 \leq ||x-z||_2||x+z||_2 + ||z-y||_2||z+y||_2
\end{align}
\end{lemma}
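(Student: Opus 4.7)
The plan is to identify the quantity $||u-v||_2\,||u+v||_2$ with the Schatten $1$-norm $||uu^T-vv^T||_1$ of a symmetric, rank-at-most-two matrix; once this identity is established, the inequality \eqref{pretriangleinequality} is immediately the triangle inequality for $||\cdot||_1$ applied to the decomposition $xx^T-yy^T=(xx^T-zz^T)+(zz^T-yy^T)$.

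As a preliminary reduction, both sides of \eqref{pretriangleinequality} depend only on the Gram data $||x||_2^2,||y||_2^2,||z||_2^2,\langle x,y\rangle_\R,\langle x,z\rangle_\R,\langle y,z\rangle_\R$. Fixing any $\R$-linear isometry of $\nr$ onto $\R^{2nr}$ (for instance by concatenating real and imaginary parts of all entries), I may therefore assume without loss of generality that $x,y,z$ are real column vectors in some $\R^N$ equipped with the standard dot product, and carry out all subsequent manipulations there.

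The main step is the identity $||uu^T-vv^T||_1=||u-v||_2\,||u+v||_2$ for arbitrary $u,v\in\R^N$. The matrix $M:=uu^T-vv^T$ is symmetric of rank at most two, so it has at most two nonzero eigenvalues $\lambda_1,\lambda_2$. These satisfy $\lambda_1+\lambda_2=\tr M=||u||_2^2-||v||_2^2$ and $\lambda_1^2+\lambda_2^2=\tr(M^2)=||u||_2^4+||v||_2^4-2\langle u,v\rangle_\R^2$, from which
\[2\lambda_1\lambda_2=(\tr M)^2-\tr(M^2)=-2(||u||_2^2||v||_2^2-\langle u,v\rangle_\R^2)\leq 0\]
by Cauchy-Schwarz. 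Because $\lambda_1$ and $\lambda_2$ therefore have opposite signs, $|\lambda_1|+|\lambda_2|=|\lambda_1-\lambda_2|$, and squaring yields
\[||M||_1^2=(\lambda_1+\lambda_2)^2-4\lambda_1\lambda_2=(||u||_2^2+||v||_2^2)^2-4\langle u,v\rangle_\R^2=||u-v||_2^2\,||u+v||_2^2,\]
as required.

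Combining this identity applied to each of the three pairs $(x,y),(x,z),(z,y)$ with the ordinary triangle inequality for the unitarily invariant norm $||\cdot||_1$ on the decomposition $xx^T-yy^T=(xx^T-zz^T)+(zz^T-yy^T)$ finishes the proof. The only non-routine ingredient is the nuclear-norm identity, and the subtle part there is the sign constraint $\lambda_1\lambda_2\leq 0$: it is precisely what allows $|\lambda_1|+|\lambda_2|$ to be replaced by the quadratically friendly $|\lambda_1-\lambda_2|$, after which the rest collapses through $(a-b)(a+b)=a^2-b^2$.
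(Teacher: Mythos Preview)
Your proof is correct and follows essentially the same approach as the paper: reduce to real vectors via an $\R$-linear isometry (the paper's $\mu$), establish the identity $||uu^T-vv^T||_1=||u-v||_2\,||u+v||_2$ by computing the two eigenvalues of the rank-at-most-two symmetric matrix via traces, and then invoke the triangle inequality for the nuclear norm. The paper presents these steps in a slightly different order but the substance is identical.
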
  
\begin{proof}
This is essentially a statement about the geometry of parallelepipeds in $\R^3$, namely that the sum of the product of face diagonals from any two sides sharing a vertex will always exceed the product of the two on the remaining side sharing the vertex. The lemma follows from the observation that for $x,y\in\R^n$
\begin{align}
    \begin{split}
    ||x-y||_2||x+y||_2 &= \sqrt{(||x||_2^2+||y||_2^2)^2- 4 |\langle x, y\rangle_\R|^2}\\
    & = \frac{1}{2}\biggr{(}||x||_2^2-||y||_2^2+\sqrt{(||x||_2^2+||y||_2^2)^2- 4 |\langle x, y\rangle_\R|^2}\biggr{)} \\&\qquad- \frac{1}{2}\biggr{(}||x||_2^2-||y||_2^2-\sqrt{(||x||_2^2+||y||_2^2)^2- 4 |\langle x, y\rangle_\R|^2}\biggr{)}\\
    &= \lambda_+(x x^T - y y^T) - \lambda_-(x x^T - y y^T)\\
    &= || x x^T-y y^T||_1
    \end{split}
\end{align}
See the proof of Theorem \ref{thm:bounds} for a direct computation of the eigenvalues of $x x^T- y y^T$ (the theorem deals with the complex case but the real case is identical). This identity proves the lemma immediately since the latter obeys the triangle inequality and
\begin{align}
\begin{split}
    ||x-y||_2||x+y||_2&=||\mu(x)-\mu(y)||_2||\mu(x)+\mu(y)||_2\\
    &= ||\mu(x)\mu(x)^T-\mu(y)\mu(y)^T||_1\\
    &\leq ||\mu(x)\mu(x)^T-\mu(z)\mu(z)^T||_1+ ||\mu(z)\mu(z)^T-\mu(y)\mu(y)^T||_1\\
    &=||x-z||_2||x+z||_2+||z-y||_2||z+y||_2
    \end{split}
\end{align}
Where $\mu:\nr\rightarrow\R^{2 n r}$ is complex matrix vectorization.
\end{proof}
The proposition then follows via a similar argument to \eqref{Dtriangle}, namely if $U_1, U_2$ are the minimizers in $d(x,z)$ and $d(z,y)$ respectively then
\begin{align}
    \begin{split}
        d(x,z)+d(z,y) &= ||x-z U_1||_2 ||x + z U_1||_2 + ||z - y U_2||_2||z+y U_2||_2\\
        &=||x-z U_1||_2 ||x + z U_1||_2 + ||z U_1 - y U_2 U_1||_2||z U_1+y U_2 U_1||_2\\
        &\geq ||x - y U_2 U_1||_2||x+y U_2 U_1||_2\geq d(x,y)
    \end{split}
\end{align}
\end{proof}
\subsection{Proof of Proposition \ref{prop:minimizer}}
\label{proof:minimizer}
\begin{proof}
  Both the trace $\tr\{x^*y U\}$ in that appears in $D$ and its square as it appears in $d$ will be maximized when $x^*y U$ is positive semidefinite, thus we may take the minimizer to be the polar factor for $x^*y$, the polar factor of course being the unique unitary for which $x^*y U$ is non-negative only when $x^*y$ is full rank. The non-uniqueness of the minimizer arises precisely from the non-uniqueness in choice of polar factor when $x^*y$ does not have full rank. Note that even if $y$ is full rank, $x^*y$ will have rank less than $r$ whenever $\range(y)\cap\range(x)^\perp\neq 0$.
\end{proof}
\subsection{Proof of Proposition \ref{prop:embed}}
\label{proof:embed}
\begin{proof}
  Note that the non-zero eigenvalues of $\pi(x)$ are precisely the squares of the singular values of $x$, the non-zero eigenvalues of $\theta(x)$ agree with the non-zero singular values of $x$, and the non-zero eigenvalues values of $\psi(x)$ differ from the non-zero singular values of $x$ only by a factor of $||x||_2$. This proves that the embeddings preserve rank. It is readily checked that the embeddings are surjective and injective modulo $\sim$. In particular for $A\in \srzero$, we have
\begin{align}
    \pi^{-1}(A)=[\mbox{Cholesky}(A)]\\
    \theta^{-1}(A)=[\mbox{Cholesky}(A^2)]\\
    \psi^{-1}(A)=[\mbox{Cholesky}(A^2/||A||_2)]
\end{align}
where $\mbox{Cholesky}(A)$ is a Cholesky decomposition of $A$ in $\nr$ (note that the Cholesky decomposition is unique up to equivalence class).
\end{proof}
\subsection{Proof of Theorem \ref{thm:bounds}}
\label{proof:bounds}
\begin{proof}
To prove \eqref{Dbounds} we analyze the following quantity:
\begin{align}
    Q(x,y)= \frac{D(x,y)^2}{||\theta(x)-\theta(y)||_2^2}=\frac{||x||_2^2+||y||_2^2- 2||x^*y||_1}{||x||_2^2+||y||_2^2-2\tr\{(x x^*)^{\frac{1}{2}}(y y^*)^{\frac{1}{2}}\}}
\end{align}
We first note that $||x^* y ||_1 = ||(x x^*)^{\frac{1}{2}} (y y^*)^{\frac{1}{2}}||_1$ since $(x x^*)^{\frac{1}{2}} (y y^*)^{\frac{1}{2}}$ and $x^*y$ have the same non-zero singular values. Hence if we define $A=\theta(x)=(x x^*)^{\frac{1}{2}}$ and $B=\theta(y)=(y y^*)^{\frac{1}{2}}$ we can abuse notation slightly and write
\begin{align}
    Q(A,B) = \frac{||A||_2^2+||B||_2^2 - 2||AB||_1}{||A||_2^2+||B||_2^2-2\tr\{A B\}}
\end{align}
Now $\tr\{A B\}\leq || A B ||_1$, so we conclude that $Q(x,y)\leq 1$. On the other hand this bound is achievable by any $x$ and $y$ for having the same left singular vectors, since in this case $A$ and $B$ commute hence $AB\geq 0$ and $||A B||_1= \tr\{ A B\}$. We conclude that the upper Lipschitz constant is $1$, and in particular
\begin{align}
    \sup_{\substack{x,y\in\nrquotient\\ x\neq y}} Q(x,y)=\max_{\substack{x,y\in\nrquotient\\ x\neq y}} Q(x,y) = 1
\end{align}
We now turn our attention to the lower bound. It is shown in \cite{bhatia2000notes} that for any unitarily invariant norm $|||\cdot|||$ and positive semidefinite matrices $A$ and $B$ the following generalization of the arithmetic-geometric mean inequality holds:
\begin{align}
    4||| A B |||^2 \leq |||(A+B)^2|||
\end{align}
We apply this inequality to the nuclear norm and conclude that 
\begin{align}
\begin{split}
    4||A B||_1&\leq ||(A+B)^2||_1\\&= \tr\{(A+B)^2\}\\&=||A||_2^2+||B||_2^2+ 2 \tr\{A B\}
    \end{split}
\end{align}
We employ this fact in the analysis of $Q(x,y)$:
\begin{align}
\begin{split}
    Q(A,B) &= \frac{1}{2} \cdot \frac{2||A||_2^2+2||B||_2^2 - 4||AB||_1}{||A||_2^2+||B||_2^2-2\tr\{A B\}}\\
    &\geq \frac{1}{2}\cdot \frac{2||A||_2^2+2||B||_2^2-(||A||_2^2+||B||_2^2+2\tr\{A B\})}{||A||_2^2+||B||_2^2-2\tr\{A B\}}= \frac{1}{2}
\end{split}
\end{align}
This implies a lower Lipschitz constant of at least $\frac{1}{\sqrt{2}}$. For the trivial case $n=r=1$ the ratio is 1. To prove the constant of $\frac{1}{\sqrt{2}}$ is optimal for $n>1$, let $e_1$ and $e_2$ be any two orthogonal unit vectors in $\C^n$ and let $x=e_1$ and $(y_j)_{j\geq 1}$ be given by $y_j=\sqrt{1-\frac{1}{j^2}}e_1+\frac{1}{j} e_2$. Define $A=\theta(x)$ and $B_j=\theta(y_j)$, then both $A$ and each $B_j$ have unit norm and are rank 1 hence are idempotent, so that
\begin{align}
\begin{split}
    A B_{j} &= (x x)^{\frac{1}{2}}(y_j y_j^*)^{\frac{1}{2}}= x x^* y_j y_j^*\\
    &= \langle x, y_j\rangle_\R x y_j^*\\
    &= (1-\frac{1}{j^2})e_1 e_1^*+\frac{\sqrt{1-\frac{1}{j^2}}}{j} e_1 e_2^*
\end{split}
\end{align}
Thus $\tr\{ A B_j\}=1-\frac{1}{j^2}$. On the other hand, $||A B_j||_1=||x^* y_j||_1=|\langle x, y_j\rangle_\R|= \sqrt{1-\frac{1}{j^2}}$. We find 
\begin{align}
\begin{split}
    \lim_{j\rightarrow \infty} Q(A, B_j) &=\lim_{j\rightarrow \infty} \frac{1 - ||A B_j||_1}{1-  \tr\{ A B_j\}}\\
    &= \lim_{j\rightarrow\infty}j^2(1-\sqrt{1-\frac{1}{ j^2}})=\frac{1}{2}
\end{split}
\end{align}
Thus we conclude
\begin{align}
    \inf_{\substack{x,y\in\nr\\x\neq y}} Q(x,y) = \frac{1}{2}
\end{align}
We now concern ourselves with proving \eqref{dbounds}. To prove the lower bound, let $U_0$ be the minimizer in $d(x,y)$. Then
\begin{align}
\begin{split}
    ||\pi(x)-\pi(y)||_1&=||x x^*- y y^*||_1\\&=||\frac{1}{2}(x-y U_0)(x+y U_0)^*+\frac{1}{2}(x+y U_0)(x-y U_0)^*||_2\\
    &\leq \frac{1}{2}||(x-y U_0)(x+yU_0)^*||_1+\frac{1}{2}||(x- y U_0)(x+ y U_0)^*||_1\\
    &\leq ||x- y U_0||_2||x+y U_0||_2 = d(x,y)
\end{split}
\end{align}
This implies a lower Lipschitz constant of at least $1$, but in fact this constant is optimal since the two are equal for $r=1$. Turning our attention to the upper bound, we will in fact prove the following stronger inequality:
\begin{align}
    \label{dDinequality}
    ||\psi(x)-\psi(y)||_2\geq \frac{1}{4} d(x,y)^2 + \frac{1}{4} D(x,y)^4 + (||x||_2-||y||_2)^2\biggr{(}||x^*y||_1 + \frac{1}{2}(||x||_2+||y||_2)^2\biggr{)}
\end{align}
We prove \eqref{dDinequality} by direct computation:
\begin{align}
\label{dDcomp}
    \begin{split}
        ||\psi(x)-\psi(y)||_2^2&-\frac{1}{4}d(x,y)^2\\&=||x||_2^4+||y||_2^4-2||x||_2||y||_2\tr\{(x x^*)^{\frac{1}{2}}(y y^*)^{\frac{1}{2}}\}-\frac{1}{4}\biggr{(}(||x||_2^2+||y||_2^2)^2-4 || x^*y||_1^2\biggr{)}\\
        &=\frac{3}{4}||x||_2^4+\frac{3}{4}||y||_2^4+||x^*y||_1^2-\frac{1}{2}||x||_2^2||y||_2^2-2||x||_2||y||_2\tr\{(x x^*)^{\frac{1}{2}}(y y^*)^{\frac{1}{2}}\}\\
        &\geq \frac{3}{4}||x||_2^4+\frac{3}{4}||y||_2^4+||x^*y||_1^2-\frac{1}{2}||x||_2^2||y||_2^2-2||x||_2||y||_2||(x x^*)^{\frac{1}{2}}(y y^*)^{\frac{1}{2}}||_1\\
        &=\frac{1}{4}(||x||_2^2-||y||_2^2)^2+\frac{1}{2}||x||_2^4+\frac{1}{2}||y||_2^4+||x^* y||_1^2-2||x||_2||y||_2||x^*y||_1
    \end{split}
\end{align}
We then note that
\begin{align}
    \begin{split}
    \frac{1}{4}D(x,y)^4&=\frac{1}{4}(||x||^2+||y||^2-2||x^*y||_1)^2\\
    &=\frac{1}{4}||x||_2^4+\frac{1}{4}||y||_2^4+\frac{1}{2}||x||_2^2||y||_2^2+||x^*y||_1^2-(||x||_2^2+||y||_2^2)||x^*y||_1
    \end{split}
\end{align}
So that if we add and subtract $\frac{1}{4}D(x,y)^4$ from \eqref{dDcomp} we obtain the result
\begin{align}
    \begin{split}
    ||\psi(x)-\psi(y)||_2^2&-\frac{1}{4}d(x,y)^2\\&\geq\frac{1}{2}(||x||_2^2-||y||_2^2)^2+\frac{1}{4} D(x,y)^4+(||x||_2-||y||_2)^2||x^*y||_1\\
    &=\frac{1}{4}D(x,y)^4 + (||x||_2-||y||_2)^2\biggr{(}(||x^*y||_1+\frac{1}{2}(||x||_2+||y||_2)^2\biggr{)}
    \end{split}
\end{align}
This immediately proves that $ 2 ||\psi(x)-\psi(y)||_2\geq d(x,y)$ and hence that the upper Lipschitz constant in \eqref{dbounds} is at most $2$. For $r=1$, we will prove shortly claim $(iii)$, implying that $d(x,y)=||\pi(x)-\pi(y)||_1=||\psi(x)-\psi(y)||_1$, hence in this case the optimal constant is $\sqrt{2}$, owing to the fact that $\psi(x)-\psi(y)$ will have rank at most 2 and in that case $d(x,y)=||\psi(x)-\psi(y)||_1\leq \sqrt{2}||\psi(x)-\psi(y)||_2$. For $r>1$, however, we show that the upper Lipschitz constant of $2$ is optimal by considering a sequence of matrices in $\C^{n\times 2}$. As before let $e_1$ and $e_2$ be any unit orthonormal vectors in $\C^n$. Let $x= [e_1 | 0]$, $(y_j)_{j\geq 1}$ be given by $y_j = [\sqrt{1-\frac{1}{j^2}} e_1 | \frac{1}{j} e_2]$. As before let $A=\theta(x)$, $B_n=\theta(y_j)$. We first note that $A$ and each $B_j$ commute and are positive semidefinite, so that $A B_j$ is also positive semidefinite and we have $\tr\{A B_j\}=||A B_j||_1$ and the inequality in \eqref{dDcomp} is actually an equality. This makes clear the impediment to a rank 1 sequence achieving the upper Lipschitz constant of $2$: $A$ and $B_j$ could not be made to commute without $x$ and $y_j$ lying in the same equivalence class. Finally, we observe that $||x||_2=||y_j||_2=1$ so the remainder term in \eqref{dDinequality}  disappears and we obtain 
\begin{align}
    ||\psi(x)-\psi(y_j)||_2^2= \frac{1}{4} d(x,y)^2+\frac{1}{4} D(x,y)^4
\end{align}
We note moreover that $d(x,y)^2 = D(x,y)^2(||x||_2^2+||y||_2^2+2||x^*y||_1)$ so that
\begin{align}
\begin{split}
 \frac{||\psi(x)-\psi(y_j)||_2^2}{d(x,y_j)^2}&=\frac{1}{4}\biggr{(}1+ \frac{D(x,y_j)^4}{d(x,y_j)^2}\biggr{)}\\
 &=\frac{1}{4}\biggr{(}1+\frac{1-||x^*y_j||_1}{1+||x^*y_j||_1}\biggr{)}
 \end{split}
\end{align}
Now $||x^* y_j||_1 = ||\begin{bmatrix}\begin{array}{c} e_1^*\\ \hline 0\end{array}\end{bmatrix}\begin{bmatrix}\sqrt{1-\frac{1}{j^2}}&0\\0&\frac{1}{j}\end{bmatrix}\begin{bmatrix}e_1|e_2\end{bmatrix}||_1=\sqrt{1-\frac{1}{j^2}}$ so that 
\begin{align}
    \lim_{j\rightarrow\infty}\frac{||\psi(x)-\psi(y_j)||_2^2}{d(x,y_j)^2}&=\lim_{j\rightarrow\infty}\frac{1}{4}\biggr{(}1+\frac{1-\sqrt{1-\frac{1}{j^2}}}{1+\sqrt{1+\frac{1}{j^2}}}\biggr{)}=\frac{1}{4}
\end{align}
Thus we have proven claims $(i)$ and $(ii)$.
To prove the first claim of $(iii)$ note that for $r=1$, $(x x^*)^{\frac{1}{2}}= \frac{x x^*}{||x||_2}$. The second part of $(iii)$ follows from direct computation of $||x x^* - y y^*||_1$ via the method of moments. Clearly $x x^*-y y^*$ will have one positive and one negative eigenvalue, which we denote $\lambda_+$ and $\lambda_-$. In this case
\begin{align}
\begin{split}
    \lambda_+ + \lambda_- &= \tr\{ x x^* - y y^*\}\\&= ||x||_2^2-||y||_2^2\\
    \lambda_+\lambda_- &=\frac{1}{2}\biggr{(} \tr\{xx^*-yy^*\}^2-\tr\{(xx^*-yy^*)^2\}\biggr{)}\\&= ||x||^2||y||^2-|\langle x,y\rangle_\R|^2
    \end{split}
\end{align}
A little bit of algebra then yields
\begin{align}
    \lambda_{\pm} = \frac{1}{2}\biggr{(}||x||_2^2-||y||_2^2\pm \sqrt{(||x||^2+||y||^2)^2-4|\langle x, y\rangle_\R|^2}\biggr{)}
\end{align}
Thus we find $||x x^*- y y^*||_1=\lambda_+-\lambda_- = \sqrt{(||x||^2+||y||^2)^2-4|\langle x, y\rangle_\R|^2}= d(x,y)$. It strikes the authors that this is a minor miracle. Finally, to prove claim $(iv)$ consider $x$ and $y$ having a common basis of singular vectors with singular values $(\sigma_i)_{i=1}^r$ and $(\mu_i)_{i=1}^r$ respectively. Then
\begin{align}
    ||\pi(x)-\pi(y)||_2^2&=\sum_{i=1}^r(\sigma_i^2-\mu_i^2)^2\\
    d(x,y)^2 &= \sum_{i,j=1}^r(\sigma_i+\mu_i)^2(\sigma_j-\mu_j)^2
\end{align}
The latter is obviously larger, consistent with \eqref{dbounds}. If it were additionally the case that $d(x,y)\leq C ||\pi(x)-\pi(y)||_2$ we would have
\begin{align}
    \begin{split}
    \sum_{i\neq j}(\sigma_i+\mu_i)^2(\sigma_j-\mu_j)^2 &\leq (C-1)\sum_{i=1}^{r}(\sigma_i^2-\mu_i^2)^2
    \end{split}
\end{align}
In the case $r=1$ the left hand side is zero and so we may take $C=1$. For $r>1$, in contradiction of the above take $\sigma_1 = \mu_1=\delta$, $\sigma_2\neq \mu_2$ and all other singular values zero. We then would obtain
\begin{align}
    4\delta^2(\sigma_2-\mu_2)^2\leq (C-1) (\sigma_2^2-\mu_2^2)^2
\end{align}
There is evidently no such $C$ since $\delta$ may be chosen arbitrarily large. Thus claim $(v)$ is proved, justifying the use of the alternate embedding $\psi$ in \eqref{dbounds}.  This concludes the proof of Theorem \ref{thm:bounds}.
\end{proof}
\section{Proofs for Section \ref{sec:geometry}}
\subsection{Proof of Proposition \ref{prop:bundles}}
\label{proof:bundles}
\begin{proof}
The proof of \eqref{verticalbundle} is by direct computation. Namely
\begin{align}
    \vertical=\ker D\pi(x) = \{w\in \mathbb{C}^{n\times r}|x w^*+w x^*=0\}
\end{align}
We would like to obtain a direct parametrization, however, and note that
\begin{align}
    w\in \vertical&\iff w x^* = \tilde{K} &&\tilde{K}\in\squarematrices,\tilde{K}^*=-\tilde{K}, \proj \tilde{K}=\tilde{K}\notag\\
    &\iff w x^* = x K x^* &&K\in\smallsquarematrices, K^*=-K\notag\\
    &\iff w = x K &&K\in\smallsquarematrices, K^*=-K
\end{align}
In the first line note that $w$ is recoverable from such a $\tilde{K}$ via $w= \tilde{K} x (x^*x)^{-1}$. In the second note that $K =  (x x^*)^\dagger x^*\tilde{K} x (x x^*)^\dagger$. The third ``if and only if'' is obtained by right multiplying $x (x^*x)^{-1}$. The horizontal space is then computable as $\vertical^\perp$:
\begin{align}
    w\in\horizontal&\iff \Re\tr\{w^* x K\}=0&&\forall K\in\squarematrices,K^*=-K\notag\\
    &\iff x^*w = \tilde{H} &&\tilde{H}\in\smallsquarematrices, \tilde{H}^*=\tilde{H}\notag\\
    &\iff x^*w = x^* H x &&H\in\squarematrices, H^*=H, \proj H = H\notag\\
    &\iff \proj w= H x &&H\in\squarematrices, H^*=H, \proj H = H\notag\\
    &\iff w = Hx +X &&H\in\squarematrices, H^*=H=\proj H,X\in\nr,\proj X =0
\end{align}
The second line follows from the fact that $\squarematrices$ decomposes orthogonally into Hermitian and skew-Hermitian matrices. In the second note that $H=(x^*x)^{-1} x\tilde{H}x^*(x^*x)^{-1}$. The third follows from left multiplying by $(x x^*)^\dagger x$. Finally, the tangent space can be parametrized via the horizontal space as its image through $D\pi(x)$ as 
\begin{align}
    \notag
    \tangent&=D\pi(x)(\horizontal)\\
    &=\{H x x^*+ x x^* H + x X^* + X x^* |H\in\squarematrices, H^*=H, \proj H = H,\proj X =0\}\notag\\\qquad\label{directtangent} 
\end{align}
This provides a direct parametrization, but for our purposes the simpler indirect description given by \eqref{tangentbundle} will be more useful. It is clear from \eqref{directtangent} that $\tangent\subset \{W\in \Sym | \projperp W\projperp=0\}$. To prove the reverse, note that if $W\in\Sym$ and $\projperp W \projperp$ then $ W= W_1 + W_2 + W_2^*$ where $\proj W_1\proj = W_1$ and $ \proj W_2 \projperp=W_2$. Any such $W_2$ is representable as $x X^*$ where $X$ is as in the description of the horizontal space. Indeed, take $X = W_2^* x (x^*x)^{-1}$. Finally, the Sylvester equation $ x x^* H + H x x^* = W_1$ has the unique solution 
\begin{align}
    H = \int_{0}^{\infty} e^{- t x x^*} W_1 e^{-t x x^*}dt
\end{align}
\end{proof}
\subsection{Proof of Theorem \ref{thm:geometry}}
\label{proof:geometry}
\begin{proof}
To prove $(i)$ in relatively short order we employ the following theorem:
\begin{theorem}[see \cite{vandereycken2009embedded} and \cite{gibson1979singular} Appendix B]
\label{gibsontheorem}
Let $\phi: G\times M\rightarrow M$ be a smooth action of a Lie group $G$ on a smooth manifold $M$. If the action is semi-algebraic, then orbits of $\phi$ are smooth submanifolds of $M$.
\end{theorem}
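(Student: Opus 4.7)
The plan is to exhibit two compatible structures on each orbit $G\cdot x$: the \emph{algebraic} structure coming from semi-algebraicity of the action, and the \emph{smooth} structure coming from the homogeneous space $G/G_x$, where $G_x=\{g\in G:\phi(g,x)=x\}$ is the stabilizer. Reconciling them yields that $G\cdot x$ is an embedded smooth submanifold of $M$.

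First I would recall the purely Lie-theoretic step. Since the action is smooth, $G_x$ is a closed subgroup of $G$, hence by Cartan's closed subgroup theorem is an embedded Lie subgroup, and the quotient $G/G_x$ carries a canonical smooth manifold structure of dimension $\dim G-\dim G_x$. The orbit map $\phi_x:G\to M$, $g\mapsto \phi(g,x)$, factors through an injective smooth immersion $\bar\phi_x:G/G_x\to M$ whose image is exactly $G\cdot x$. For \emph{arbitrary} smooth actions this is the best one can do: $\bar\phi_x$ can fail to be an embedding, as with the irrational line on the torus. The role of semi-algebraicity is precisely to rule this out.

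Next I would use the hypothesis that $\phi$ is semi-algebraic. Realizing $G\subseteq\mathbb{R}^N$ through a faithful real-algebraic representation, the orbit $G\cdot x=\phi(G\times\{x\})$ is the image of a semi-algebraic map defined on a semi-algebraic set, so by Tarski--Seidenberg it is a semi-algebraic subset of $M$. Every semi-algebraic set admits a finite Whitney stratification into smooth semi-algebraic submanifolds (the {\L}ojasiewicz stratification), and in particular has a well-defined local dimension at each point. Now comes the key homogeneity argument: the action is transitive on $G\cdot x$, so for any two points $y_1,y_2\in G\cdot x$ there exists $g\in G$ with $\phi(g,\cdot)$ a diffeomorphism of $M$ carrying $y_1$ to $y_2$ and preserving $G\cdot x$ setwise. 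Consequently the local geometric invariants of $G\cdot x$ at $y_1$ and $y_2$ agree; in particular the local dimension is constant and equal to $d:=\dim G-\dim G_x$, and no point of $G\cdot x$ can lie on a lower-dimensional stratum. Therefore the Whitney stratification collapses to a single stratum, and $G\cdot x$ is itself a smooth $d$-dimensional semi-algebraic submanifold of $M$.

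Finally I would check that the smooth structure on $G\cdot x$ as an embedded submanifold coincides with the one pulled back from $G/G_x$ via $\bar\phi_x$: both have dimension $d$, and an injective smooth immersion between smooth manifolds of equal dimension is a local diffeomorphism, hence $\bar\phi_x$ is a diffeomorphism onto $G\cdot x$ equipped with the subspace topology. The main obstacle is the stratification-with-homogeneity step: one has to justify rigorously that a semi-algebraic set homogeneous under a smooth group action cannot have singular strata of dimension strictly less than $d$. This is the content underlying the references to Gibson and Vandereycken and is where the semi-algebraic hypothesis is genuinely used; the rest of the argument is classical Lie theory.
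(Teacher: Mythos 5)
The paper does not prove this statement at all: Theorem \ref{gibsontheorem} is imported verbatim from \cite{vandereycken2009embedded} and \cite{gibson1979singular} and used as a black box inside the proof of Theorem \ref{thm:geometry}(i). So there is no internal proof to compare against; what you have written is essentially the standard argument from the cited literature (orbit is semi-algebraic by Tarski--Seidenberg, hence finitely stratifiable; homogeneity under the $G$-action forces smoothness; the embedded structure then agrees with the homogeneous-space structure on $G/G_x$). That is the right route, and your final step identifying the two smooth structures via the equal-dimension injective immersion is sound.

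One precision issue in the step you yourself flag as the main obstacle: constancy of the \emph{local dimension} along the orbit does not by itself exclude singular points, since a singular point can have the same local dimension as the surrounding set (the vertex of a cone is the standard example). The homogeneity argument should instead be run on the set
$\Sigma=\{y\in G\cdot x :\ y \text{ has a neighborhood in } G\cdot x \text{ that is a smooth } d\text{-dimensional submanifold of } M\}$.
This set is non-empty, because any Whitney (or {\L}ojasiewicz) stratification of the semi-algebraic set $G\cdot x$ has a top-dimensional stratum that is open in $G\cdot x$ and consists of such points; and $\Sigma$ is $G$-invariant, because each $\phi(g,\cdot)$ is a diffeomorphism of $M$ carrying $G\cdot x$ onto itself and the defining property of $\Sigma$ is preserved by ambient diffeomorphisms. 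Transitivity of $G$ on the orbit then gives $\Sigma=G\cdot x$, which is exactly the conclusion. With that substitution your proof is complete and correct.
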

We apply this theorem in the case of $\scircpq$. Sylvester's Inertia Theorem says that $A\in\scircpq$ if and only if $A=K I_{p,q}K^*$ for some $K\in \GL$ where $I_{p,q}=\mbox{diag}(1,\ldots,1,-1,\ldots,-1,0,\ldots,0)$ is the matrix of inertia indices. Thus $\scircpq$ is precisely the orbit of $I_{p,q}$ under the smooth Lie group action:
\begin{align}
\begin{split}
    \psi: \GL\times \squarematrices\rightarrow \squarematrices\\
    \psi(K, L)= K L K^{*}
    \end{split}
\end{align}
Noting that $\psi(K J, L)=\psi(K,\psi(J,L))$ for $K,J\in \GL$. We need to check that the action is semi-algebraic. For a fixed $L\in\squarematrices$ the action has as its graph
\begin{align}
    \begin{split}
    \biggr{\{}(K,Y)\biggr{|}K\in\GL, Y=K L K^* \biggr{\}}\\
    = \biggr{\{}(k_{i j}, y_{i j})\biggr{|}i,j\in1,\ldots, n, \mbox{Det}(k_{i j})\neq 0, y_{i j}-Q_{i j}(k_{i j})=0 \biggr{\}}
    \end{split}
\end{align}
where each $Q_{i j}$ is a quadratic polynomial in $(k_{ij})_{i,j=1}^{n}$ determined by L. This set is manifestly semi-algebraic, so by Theorem \ref{gibsontheorem} each $\scircpq$ is a smooth submanifold of $\mathbb{C}^{n\times n}$. To prove that the dimension of $\scircpq$ is given by $2n(p+q)-(p+q)^2$ note that the $\dim \scircpq = \dim \mathring{S}^{p+q,0} $ since matrix absolute value
\begin{align}
\begin{split}
    |\cdot| &: \scircpq \rightarrow \mathring{S}^{p+q,0}\\
    |A| &= (A A^*)^{\frac{1}{2}}
\end{split}
\end{align} 
is surjective and injective of up to permutation of eigenvalues. The dimension of $\mathring{S}^{p+q,0}$ can be computed from $\tangent$ as found in Lemma \ref{prop:bundles}. Taking $r=p+q$ then 
\begin{align}
    \dim \tangent = n^2-(n-r)^2 = 2n r - r^2= 2 n (p+q)-(p+q)^2
\end{align}
It remains to prove analyticity of $\scirc$. It is proved in Lemma 3.11 of \cite{balan2016reconstruction} that $\mathring{S}^{1,0}(\C^n)$ is real analytic. The proof in the general case is analagous. First note that owing to Sylvester's inertia theorem $\GL$ acts transitively on $\scircpq$ via conjugation, since if $X,Y\in \scircpq$ then we may obtain $G_1,G_2\in\GL$ so that $G_1 X G_1^* = \Ipq=G_2 Y G_2^*$, hence $(G_2^{-1} G_1) X (G_2^{-1}G_1)^* = Y$. It remains to obtain that the stabilizer group is closed in $\GL$ so that we can invoke the homogeneous space construction theorem. If $Z\in\scircpq$ then $Z= z \Ipq z^*$ for some $z= U_z \left[\def\arraystretch{1.0}\begin{array}{c}\Lambda_z\\\hline 0\end{array}\right] V_z^*\in\tall$. The stabilizer group at $Z$ is given by $T\in\GL$ such that $T z \in \{z U | U\in U(p,q)\}$. In a basis $e_1,\ldots e_n$ for $\C^n$ where $e_1,\ldots e_r$ span $\range(z)$ and $e_{r+1},\ldots,e_n$ span $\range(z)^\perp$ the stabilizer is therefore given by 
\begin{align}
    \mathbb{H}_Z^{r,0}=\biggr\{\left[
    \def\arraystretch{1.0}
    \begin{array}{@{}c|c@{}}
        \Lambda_z U \Lambda_z^{-1} & M_1 \\\hline
         0 & M_2
    \end{array}
    \right] \biggr{|}\quad U\in U(p,q), M_1\in \C^{r\times n-r}, M_2\in \smallsquarematrices, \det(M_2)\neq 0\biggr\}
    \end{align}
It is easy to see that $\mathbb{H}_Z^{r,0}$ is a (relatively) closed subset of $\GL$, hence by the homogeneous space construction theorem $\scirc$ is diffeomorphic to the analytic manifold $\GL/\mathbb{H}_Z^{r,0}$. This concludes the proof of $(i)$. Claims $(ii)$ and $(iii)$ represent slight generalizations over the analogous results in \cite{bhatia2019bures} for positive definite matrices, but the same key theorems apply. Namely, we employ the following:

\begin{theorem}[see \cite{gallot1990riemannian} Proposition 2.28]
\label{submersiontheorem}
Let $(M, g)$ be a Riemannian manifold and let $G$ be a compact Lie group of isometries acting freely on $M$. Then let $N=M/G$ and $\pi: M\rightarrow N$ be the quotient map. Then there exists a unique Riemannian metric $h$ on $N$ so that $\pi: (M,g)\rightarrow(N,h)$ is a Riemannian submersion; and in particular that $D\pi(z):H_{\pi, z}\rightarrow T_{\pi(z)}(N)$ is isometric for each $z\in M$.
\end{theorem}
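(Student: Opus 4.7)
The plan is to construct the metric $h$ explicitly via horizontal lifts and then verify well-definedness, smoothness, the submersion property, and uniqueness separately. Since $G$ is compact, the action is automatically proper, and since it is free by hypothesis, the quotient manifold theorem endows $N = M/G$ with the structure of a smooth manifold such that $\pi : M \to N$ is a smooth submersion. For each $z \in M$, let $V_z := \ker D\pi(z)$ (which equals the tangent space to the orbit $G \cdot z$) and let $H_z := V_z^{\perp_g}$ be its $g$-orthogonal complement; then $D\pi(z)|_{H_z}$ is a linear isomorphism onto $T_{\pi(z)} N$. I would then define, for $X, Y \in T_{\pi(z)} N$,
\begin{equation*}
h_{\pi(z)}(X,Y) \; := \; g_z\bigl((D\pi(z)|_{H_z})^{-1} X, \, (D\pi(z)|_{H_z})^{-1} Y \bigr).
\end{equation*}

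The central step is showing this definition is independent of the representative $z$ in its $G$-orbit. Any other representative has the form $\phi_a(z)$, where $\phi_a : M \to M$ denotes the action of $a \in G$, which by hypothesis is an isometry. Since $\pi \circ \phi_a = \pi$, differentiating gives $D\phi_a(z)(V_z) = V_{\phi_a(z)}$; and since $\phi_a$ preserves $g$, it also preserves $g$-orthogonal complements, so $D\phi_a(z)(H_z) = H_{\phi_a(z)}$. Combining this with the identity $D\pi(\phi_a(z)) \circ D\phi_a(z) = D\pi(z)$ on $H_z$ shows that the horizontal lifts of $X$ and $Y$ at $\phi_a(z)$ are precisely the images under $D\phi_a(z)$ of the corresponding horizontal lifts at $z$. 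The isometry property of $\phi_a$ then yields the same inner product, and $h_{\pi(z)}$ is unambiguously defined.

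Smoothness of $h$ I would verify locally: about any point of $N$ there exists a smooth local section $s : U \to M$ of $\pi$ (since $\pi$ is a submersion), and on this section the horizontal lifts of smooth vector fields on $U$ are smooth, so $h(X,Y)$ is smooth on $U$. The Riemannian submersion property is immediate from the construction: $D\pi(z)|_{H_z}$ is by definition an isometry from $(H_z, g_z|_{H_z})$ onto $(T_{\pi(z)} N, h_{\pi(z)})$. For uniqueness, suppose $h'$ is another Riemannian metric on $N$ making $\pi$ a Riemannian submersion. By definition, $D\pi(z)|_{H_z}$ must then also be a linear isometry onto $(T_{\pi(z)} N, h')$, which forces $h' = h$ pointwise.

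The main obstacle is the $G$-invariance step, since both the horizontal decomposition $T_z M = V_z \oplus H_z$ and the inner product on $H_z$ must be equivariant under the action, and this is exactly where the hypothesis that $G$ acts by isometries enters in an essential way; freeness is used only to guarantee the smooth quotient structure and compactness only to ensure the action is proper so that $N$ is Hausdorff with well-behaved orbit structure. The remaining smoothness argument is standard once one has a local section of the submersion.
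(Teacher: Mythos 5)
This statement is imported by the paper as a known result (Gallot--Hulin--Lafontaine, Proposition 2.28) and is not proved there, so there is no in-paper argument to compare against; your proof is the standard construction and is correct. The horizontal-lift definition of $h$, the $G$-equivariance of the splitting $T_zM = V_z \oplus H_z$ via the isometry hypothesis, the local-section smoothness check, and the uniqueness argument (the horizontal space is determined by $g$ alone, so any submersion metric must agree with $h$) are all sound, and you correctly isolate the roles of compactness, freeness, and the isometry assumption.
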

\begin{theorem}[see \cite{gallot1990riemannian} Proposition 2.109]
\label{geodesictheorem}
If $\pi:(M,g)\rightarrow(N,h)$ is a Riemannian submersion and $\gamma$ is a geodesic in $(M,g)$ such that $\dot{\gamma}(0)$ is horizontal (i.e. $\dot{\gamma}(0)\in H_{\pi,\gamma(0)}$) then
\begin{enumerate}[(i)]
    \item $\dot{\gamma}(t)$ is horizontal for all $t$
    \item $\pi\circ\gamma$ is a  geodesic in $(N,h)$ of the same length as $\gamma$
\end{enumerate}
\end{theorem}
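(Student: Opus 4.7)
The plan is to prove both conclusions simultaneously by realizing $\gamma$ as the horizontal lift of a geodesic in $N$ and then invoking uniqueness of geodesics in $(M,g)$. The key tool is the \emph{horizontal lift}: for every smooth curve $c:I\to N$ and every $p\in\pi^{-1}(c(0))$ there is a unique smooth curve $\tilde{c}:I\to M$ with $\pi\circ\tilde{c}=c$, $\tilde{c}(0)=p$, and $\dot{\tilde{c}}(t)\in H_{\pi,\tilde{c}(t)}$ for all $t$ (obtained by solving a first-order ODE in local trivializations of $\pi$); analogously, every vector field $X$ on $N$ has a unique horizontal lift $\tilde{X}$ on $M$ that is $\pi$-related to $X$. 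Because $D\pi$ restricts to a linear isometry $H_{\pi,z}\to T_{\pi(z)}N$ at every $z\in M$, one has the basic compatibility $g(\tilde{X},\tilde{Y})=h(X,Y)\circ\pi$ and in particular $|\dot{\tilde{c}}(t)|_g=|\dot{c}(t)|_h$.

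Concretely, set $v:=\dot{\gamma}(0)\in H_{\pi,\gamma(0)}$, let $c:J\to N$ be the maximal geodesic of $(N,h)$ with $\dot{c}(0)=D\pi(v)$, and let $\tilde{c}$ be the horizontal lift of $c$ starting at $\gamma(0)$. Then $\dot{\tilde{c}}(0)$ is horizontal and satisfies $D\pi(\dot{\tilde{c}}(0))=\dot{c}(0)=D\pi(v)$; since $D\pi$ restricted to the horizontal space is an isomorphism, $\dot{\tilde{c}}(0)=v=\dot{\gamma}(0)$. If we can show $\tilde{c}$ is itself a geodesic in $(M,g)$, then local uniqueness of geodesics with prescribed initial data forces $\gamma=\tilde{c}$ on their common interval, which immediately delivers (i) (horizontality of $\dot{\gamma}$ for all $t$) and the first half of (ii) (that $\pi\circ\gamma=c$ is a geodesic of $N$).

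The substantive step, and the main obstacle, is verifying $\nabla^{M}_{\dot{\tilde{c}}}\dot{\tilde{c}}\equiv 0$. The standard route is O'Neill's identity for horizontal lifts $\tilde{X},\tilde{Y}$ of vector fields $X,Y$ on $N$,
\begin{equation*}
\nabla^{M}_{\tilde{X}}\tilde{Y}=\widetilde{\nabla^{N}_{X}Y}+\tfrac{1}{2}[\tilde{X},\tilde{Y}]^{V}.
\end{equation*}
Extend $\dot{c}$ to a vector field $X$ on a neighborhood in $N$ and let $\tilde{X}$ be its horizontal lift, so that $\dot{\tilde{c}}(t)=\tilde{X}(\tilde{c}(t))$. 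Applying the identity with $\tilde{Y}=\tilde{X}$ and using $[\tilde{X},\tilde{X}]=0$ gives $\nabla^{M}_{\tilde{X}}\tilde{X}=\widetilde{\nabla^{N}_{X}X}$, which vanishes along $\tilde{c}$ because $\nabla^{N}_{\dot{c}}\dot{c}=0$. O'Neill's identity itself would be derived by pairing the Koszul formula for $\nabla^{M}_{\tilde{X}}\tilde{Y}$ against horizontal and vertical test vectors: against horizontal test vectors it reduces to the lift of the Koszul formula for $\nabla^{N}_{X}Y$, using $\pi$-relatedness of brackets and the compatibility $g(\tilde{X},\tilde{Y})=h(X,Y)\circ\pi$; against a vertical test vector $V$ it isolates the term $\tfrac{1}{2}[\tilde{X},\tilde{Y}]^{V}$, exploiting that $[V,\tilde{X}]$ is vertical for any vertical $V$ and horizontal lift $\tilde{X}$ (since vertical fields are tangent to fibers and their flow preserves the fibration, so horizontality is preserved along vertical directions).

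Finally, the length equality in (ii) is immediate from (i) together with the isometry of $D\pi$ on horizontal vectors: $|\dot{\gamma}(t)|_g=|D\pi(\dot{\gamma}(t))|_h=|\dot{c}(t)|_h$ for every $t$, so integrating the speeds over the parameter interval yields $L(\gamma)=L(c)=L(\pi\circ\gamma)$.
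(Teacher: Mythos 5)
The paper does not prove this statement at all: it is quoted verbatim as Proposition 2.109 of \cite{gallot1990riemannian} and used as a black box, so there is no in-paper argument to compare against. Your proof is the standard textbook one (horizontal lift of the downstairs geodesic, O'Neill's formula $\nabla^{M}_{\tilde X}\tilde Y=\widetilde{\nabla^{N}_{X}Y}+\tfrac12[\tilde X,\tilde Y]^{V}$ verified via the Koszul formula against horizontal and vertical test vectors, then uniqueness of geodesics with prescribed initial data), and it is correct. Two small points you should tighten if this were to be written out in full: first, extending $\dot c$ to a local vector field $X$ requires $\dot c\neq 0$ (the case $\dot c\equiv 0$ being trivial); second, your argument as stated only identifies $\gamma$ with the horizontal lift $\tilde c$ on their common interval of definition, whereas the theorem asserts horizontality of $\dot\gamma(t)$ for \emph{all} $t$ in the domain of $\gamma$ --- this is repaired by the usual connectedness argument, noting that the set of parameters at which $\dot\gamma$ is horizontal is closed by continuity and open by rerunning your local argument at any such parameter. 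Neither issue affects the soundness of the approach.
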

In our case we are interested in the geometry of $\tallquotient$, where $\tall$ is an open subset of $\nr$ and is therefore a smooth Riemannian manifold of constant metric when equipped with the standard real inner product on $\nr$
\begin{align}
    \langle A, B \rangle_\R = \Re\tr\{A^* B\}
\end{align}
The relevant compact Lie group of isometries will be $U(r)$, acting by matrix multiplication on the right. We note that while $U(r)$ does not act freely on $\nr$, it does act freely on $\tall$ since for $x\in\tall$ and $W\in U(r)$
\begin{align}
    x=x W\iff x^* x = x^* x W \iff (x^* x)^{-1}(x^* x) = W \iff \mathbb{I}_{r\times r}=W
\end{align}
Therefore by Theorem \ref{submersiontheorem} there exists a metric $h$ on $\tallquotient$ such that the differential of $\pi$ at $x$
\begin{align}
    \begin{split}
    D\pi(x)&: (H_{\pi, x}(\tall),\langle\cdot,\cdot\rangle_\R) \rightarrow (T_{\pi(x)}(\srzero),h)\\
    D\pi(x)&(w)= x w^* + w x^*
    \end{split}
\end{align}
is an isometric isomorphism. Indeed 
\begin{align}
h(Z_1,Z_2)=\langle D\pi(x)^\dagger Z_1, D\pi(x)^\dagger Z_2 \rangle_\R
\end{align}
Where $D\pi(x)^\dagger$ is the pseudo-inverse of the linear operator $D\pi(x)$. In this case, for $w_1,w_2\in\horizontal$
\begin{align}
    h(D \pi (w_1),D\pi(w_2))=\langle D\pi(x)^\dagger D \pi (w_1), D\pi(x)^\dagger D \pi (w_2) \rangle_{\R} = \langle w_1, w_2\rangle_{\R}
\end{align}
We now determine $h$ explicitly. Namely, if $Z_1, Z_2\in \tangent=D\pi(\horizontal)$ then $Z_i =D\pi(x)(H_i x+ X_i)$ where $H_i, X_i$ are as in \eqref{horizontalbundle}. We must have
     \begin{align}
     \label{hmetric}
        \begin{split}
         h(Z_1,Z_2)&=\Re\tr[ (H_1 x + X_1)^*(H_2 x + X_2)]\\
         &=\Re\tr[x^*H_1 H_2 x] + \Re\tr[X_1^* X_2]
         \end{split}
     \end{align}
     We define $Z_i^\parallel:=\proj Z_i\proj = x x^* H_i + H_i x x^*$ and $Z_i^\perp:=\projperp Z_i \proj = X_i x^*$. Then
     \begin{align}
     \begin{split}
         H_i = \int_{0}^{\infty}e^{-t x x^*} Z_i^{\parallel} e^{-t x x^*} dt\\
         X_i= Z_i^\perp x (x^*x)^{-1}
         \end{split}
     \end{align}
     Plugging these expressions into \eqref{hmetric} yields the expression
     \begin{align}\begin{split}
        \label{twointegralip}
         h(Z_1,Z_2) &= \Re\tr\{x x^* \int_{0}^{\infty}e^{-t x x^*} Z_1^{\parallel}e^{-t x x^*} dt\int_{0}^{\infty}e^{-s x x^*} Z_2^{\parallel} e^{-s x x^*} d s\}+ \Re\tr\{Z_1^{\perp *} Z_2^\perp (x x^*)^\dagger\}\\
         &:=h_0(Z_1,Z_2)+h_1(Z_1,Z_2)
     \end{split}\end{align}
    The first term in \eqref{twointegralip} $h_0(Z_1,Z_2)$ can be simplified via the change of coordinates $u=t+s$ and $v=t-s$ as
    \begin{align}
    \begin{split}
        h_0(Z_1,Z_2)&=\int_{0}^{\infty}\int_{0}^{\infty} \Re \tr\{ e^{-x x^*(t+s)} Z_1^\parallel e^{- x x^*(t+s)} x x^* Z_2^\parallel\}ds dt\\
        &=\frac{1}{2}\int_{0}^{\infty} \int_{-u}^{u} \Re \tr\{e^{-u x x^*} Z_1^\parallel e^{-u x x^*} x x^* Z_2^\parallel\} dv du\\
        &=\int_{0}^\infty u \Re \tr\{e^{-u x x^*} Z_1^\parallel e^{-u x x^*} x x^* Z_2^\parallel\} du\\
        &=\int_0^\infty u \tr \{e^{-u x x^*} Z_1^\parallel e^{-u x x^*} x x^* Z_2^\parallel+Z_2^\parallel x x^* e^{-u x x^*} Z_1^\parallel e^{-u x x^*} \} du\\
        &=-\tr\{ Z_2^\parallel \int_0^\infty u \frac{\partial}{\partial u} e^{-u x x^*} Z_1^\parallel e^{- u x x^*} du\}\\
        &=\tr\{Z_2^\parallel \int_0^\infty e^{- u x x^*}Z_1^\parallel e^{- u x x^*} du\}\\
        &= \langle H_1, Z_2 \rangle_{\R} = \langle Z_1, H_2 \rangle_{\R}
        \end{split}
    \end{align}
    Where the last equality follows from cycling under the trace immediately and then repeating the same calculation. With this metric in hand we have shown $(ii)$, namely that the map
    \begin{align}
        \pi: (\tall, \langle\cdot,\cdot \rangle_\R)\rightarrow (\scirc, h)
    \end{align}
    is a Riemannian submersion. To prove $(iii)$, let $A,B\in \scirc$ and let $x x^*$ and $y y^*$ be their respective Cholesky decompositions, so that $x,y\in \tall$. Consider the following straight line curve in $\nr$:
    \begin{align}\begin{split}
        \sigma_{x,y}&: [0,1]\rightarrow \nr\\
        \sigma_{x,y}(t) &= (1-t)x + t y U
    \end{split}\end{align}
    Where $U$ is a polar factor such that $x^* y U = |x^*y|$ (equivalently $U$ is a minimizer of the distance $D$, as in Proposition \ref{prop:minimizer}). The claim is that we will be able to apply Theorem \ref{geodesictheorem} to the pushforward of $\sigma_{x,y}$, proving that it is a geodesic connecting $A=\pi(x)$ to $B=\pi( y U)$. Specifically, we would like to prove 
    \begin{align}
    \label{doesnotdrop}
    &\sigma_{x,y}(t)\in \tall\qquad\forall t\in[0,1]\\
    \label{initialhoriz}
    &\dot{\sigma}_{x,y}(0)\in \horizontal
    \end{align}
    We first prove \eqref{doesnotdrop}, namely that $\sigma_{x,y}(t)$ does not drop rank as $t$ varies from $0$ to $1$ even though $\tall$ is not convex. The endpoints $\sigma_{x,y}(0)= x$ and $\sigma_{x,y}(1)= y U$ are of course full rank, so it is enough to prove it for $t\in (0,1)$. Consider $x^* \sigma_{x,y}(t)$:
    \begin{align}
        x^* \sigma_{x,y}(t) = (1-t) \underbrace{x^*x}_{\makebox[0pt]{\text{\scriptsize $\in\mathbb{P}(r)$}}}\quad+\quad t\underbrace{x^* y U}_{\makebox[0pt]{\text{\scriptsize $|x^*y|\in PSD(r)$}}}\in\mathbb{P}(r)\mbox{ for }t\in(0,1)
    \end{align}
    This implies that $\sigma_{x,y}(t)\in\tall$ for $t\in (0,1)$, so \eqref{doesnotdrop} is proved. Let $v=\dot{\sigma}_{x,y}(0)=y U-x$. Then
    \begin{align}
        \begin{split}
    x^*v &= -x^*x + x^* y U = -x^*x + (x^* y y^* x)^{\frac{1}{2}}\\
    \proj v &= -(x x^*)^\dagger x x^* x +  (x x^*)^\dagger x (x^* y y^* x)^{\frac{1}{2}}\\
    \proj v &= \underbrace{(-\proj + (x x^*)^\dagger x (x^* y y^* x)^{\frac{1}{2}} x^* (x x^*)^\dagger )}_{H} x\\
    v&= Hx + X,\quad \proj X =0,\quad H^*=\proj H = H
    \end{split}
    \end{align}
    Hence \eqref{initialhoriz} is proved and so by Theorem \ref{geodesictheorem} we have that  $\gamma_{A,B}:=\pi\circ\sigma_{x,y}$ is a geodesic on $(\scirc, h)$ connecting $A$ and $B$. We find specifically that this geodesic is given by
\begin{align}
\begin{split}
    \gamma_{A,B}(t) &=\pi((1-t)x + t y U) \\
    &=((1-t)x + t y U)((1-t)x + t y U)^*\\
    &= (1-t)^2 x x^* + t^2 y y^* + t(1-t)(x U^* y^* + y U x^*)
    \end{split}
\end{align}
Clearly $A = xx^*$ and $B= y y^*$, but what about $x U^* y^*$ and $y U x^*$? Fortunately, a minor miracle occurs. Namely,
\begin{align}
\begin{split}
(y U x^*)^2 &= y U x^* y U x^* = y U |x^*y| x^* = y(|x^*y|U^*)^* x^* = y (x^*y)^* x^* = y y^* x x^*\\
(x U^* y^*)^2 &= x U^* y^* x U^* y^* =x (x^*y U)^* U^* y^* = x |x^* y| U^* y^*= x x^* y y^*
\end{split}
\end{align}
Thus in fact $x U^* y^*$ and $y U x^*$ are matrix square roots (not necessarily symmetric, but having positive non-zero eigenvalues) for $BA$ and $AB$ respectively. We obtain the following expression for the family of geodesics on $\scirc$ connecting $A$ and $B$
\begin{align}
    \gamma_{A,B}(t) = (1-t)^2 x x^* + t^2 y y^* + t(1-t)(x U_0 ^* y^* + y U_0 x^*) + t(1-t) (x U_1^* y^* + y U_1 x^*)
\end{align}
Where $U_0$ and $U_1$ are as in Proposition \ref{prop:minimizer}. The fact that the form of this expression is independent of $r$ is somewhat surprising, and motivates claims $(iv)$ and $(v)$. In order to prove $(iv)$ we must first check that the collection of smooth manifolds $(\mathring{S}^{i,0}(\C^n))_{i=0}^r$ provide a stratification of the cone $\srzero$ (conditions $(a)$ and $(b)$ of Definition \ref{def:stratification}). Condition $(a)$ is satisfied trivially and for $(b)$ we note that 
\begin{align}
    \overline{\mathring{S}^{i,0}(\C^n)}\setminus\mathring{S}^{i,0}(\C^n) = \{0\}\cup \mathring{S}^{1,0}\cup\cdots\cup S^{i-1,0}
\end{align}
It remains to check that whenever $p>q$ the triple $(\mathring{S}^{p,0}(\C^n),\mathring{S}^{q,0}(\C^n), A)$ is $a$-regular and $b$-regular for $A\in \mathring{S}^{q,0}\subset \overline{\mathring{S}^{p,0}}$. It was noted by John Mather in Proposition 2.4 of \cite{mather2012notes} that $b$-regularity implies $a$-regularity, but we will use $a$-regularity in our proof of $b$-regularity so we need to prove $a$-regularity first. Specifically, $a$-regularity in this case states that if $(A_i)_{i\geq 1}\subset \mathring{S}^{p,0}(\C^n)$ converges to $A\in \mathring{S}^{q,0}(\C^n)$ and if $T_{A_i}(\mathring{S}^{p,0}(\C^n))$ converges in Grassmannian sense to the vector space $\tau_A$ then $T_{A}(\mathring{S}^{q,0}(\C^n))\subset \tau_A$. Upon examining the form of the tangent space as given by \eqref{tangentbundle} it becomes clear that convergence of the tangent spaces $T_{A_i}(\mathring{S}^{p,0}(\C^n))$ is equivalent to convergence of $\range A_i$ to a space we denote $L$, so that the Grassmannian limit of the tangent spaces is given by
\begin{align}
    \label{tauA}
    \tau_A = \{W\in \Sym | \mathbb{P}_{L^\perp} W \mathbb{P}_{L^\perp}=0\}
\end{align}
It is evident that $L$ should contain as a subspace $\range A$, and that this would prove that the stratification given is $a$-regular. Indeed, if $A_i = U_i \Lambda_i U_i^*$ is the low rank diagonalization of $A_i$ so that $\Lambda_i = \mbox{diag}(\lambda_1,\ldots,\lambda_p)$ is the diagonal matrix of non-zero eigenvalues of $A_i$ and $U_i U_i^* = \mathbb{P}_{\range A_i}$, $U_i^* U_i=\I_{p\times p}$ then 
by compactness we can obtain a subsequence of $(U_i)_{i\geq 1}$ that converges to a matrix $U$ such that the columns of $U$ are precisely an orthonormal basis for $L$. In this case, we may write $A = U \Lambda U^*$ since $A = \lim_{i\rightarrow \infty} U_i \Lambda_i U_i^*$ and the sequences of eigenvalues converge (some to zero), so that if $U=[u_1|\cdots |u_p]$ then 
\begin{align}
\range A =\mbox{span}\{u_i | \Lambda_{i i}\neq 0\}\subset \mbox{span} \{u_i\}_{i=1}^p = L
\end{align}
Thus, owing to \eqref{tauA} and the description of the tangent space in \eqref{tangentbundle} we conclude that $\T_A(\mathring{S}^{q,0}(\C^n))\subset \tau_A$ and our stratification is $a$-regular. \\\\

As for $b$-regularity, let $(A_i)_{i\geq 1}\subset \mathring{S}^{p,0}(\C^n)$, $A\in \mathring{S}^{q,0}(\C^n)$, and $\tau_A$ be as before (specifically we assume the Grassmannian limit defining $\tau_A$ converges) and let $(B_i)_{i\geq 1}\subset \mathring{S}^{q,0}(\C^n)$ be convergent also to $A$ such that the following limit exists
\begin{align}
    Q=\lim_{i\rightarrow\infty} Q_i :=\lim_{i\rightarrow\infty} \frac{A_i-B_i}{||A_i-B_i||_2}
\end{align}
We claim that $Q\in \tau_A$. Specifically, let $\Theta_i = A_i - \mathbb{P}_{\range(A_i)}B_i\mathbb{P}_{\range(A_i)}$ and $\Psi_i = \mathbb{P}_{\range(A_i)}B_i\mathbb{P}_{\range(A_i)}-B_i$. Then either $\Psi_i=0$, in which case $Q_i = \Theta_i/||\Theta_i||_2$, or $\Psi_i\neq 0$, so that 
\begin{align}
    \label{Qdecomposition}
    Q_i = \frac{||\Theta_i ||_2}{||A_i-B_i||_2} \frac{\Theta_i}{||\Theta_i ||_2}+ \frac{||\Psi_i ||_2}{||A_i-B_i||_2}\frac{\Psi_i}{||\Psi_i ||_2}
\end{align}
We will obtain convergent subsequences for the sequences of unit norm matrices $\Theta_i/||\Theta_i||_2$ and $\Psi_i/||\Psi_i||_2$, but first note that
\begin{align}
    \frac{||\Theta_i ||_2}{||A_i-B_i||_2}=
 \frac{||\mathbb{P}_{\range(A_i)}(A_i-B_i)\mathbb{P}_{\range(A_i)} ||_2}{||A_i-B_i||_2}\leq 1
\end{align}
Hence $||\Psi_i||_2/||A_i-B_i||_2$ is also a bounded sequence (if it were not $Q_i$ would fail to converge). Next note that for $i$ sufficiently large $\Psi_i=\mathbb{P}_{\range(A_i)}B_i\mathbb{P}_{\range(A_i)}-B_i$ is the difference of two matrices in $\mathring{S}^{q,0}(\C^n)$, both converging to $A$.
Therefore, owing to the fact that $\mathring{S}^{q,0}(\C^n)$ is an analytic manifold, any convergent subsequence of $\Psi_i/||\Psi_i||_2$ will have its limit lying in $T_A(\mathring{S}^{q,0}(\C^n))$ (see for example Lemma 4.12 in \cite{whitney1992local}). Owing to the already proved $a$-regularity we conclude that the limit of any convergent subsequence of $\Psi_i/||\Psi_i||_2$ lies in $\tau_A$. Similarly, $\Theta_i = \mathbb{P}_{\range(A_i)} (A_i - B_i)\mathbb{P}_{\range(A_i)} $ hence any convergent subsequence of $\Theta_i/||\Theta_i||_2$ must lie in $\tau_A$. Thus we may obtain a subsequence such that the sequences of real numbers $||\Theta_{i_j}||_2/||A_{i_j}-B_{i_j}||_2$ and $||\Psi_{i_j}||_2/||A_{i_j}-B_{i_j}||_2$ converge to some $\alpha, \beta\in \R$ and the sequences of unit norm matrices  $\Theta_{i_j}/||\Theta_{i_j}||_2$ and $\Psi_{i_j}/||\Psi_{i_j}||_2$ converge to some $\hat{\Theta}, \hat{\Psi}\in \tau_A$. Since $(Q_i)_{i\geq 1}$ converges, we find that
\begin{align}
    Q = \alpha \hat{\Theta} + \beta \hat{\Psi} \in \tau_A
\end{align}
Thus the stratification $(\mathring{S}^{i,0}(\C^n))_{i=0}^r$ is $b$-regular and in particular is a Whitney stratification of $\srzero$.
\\\\
In order to prove $(v)$, let $A_i= x_i x_i^*$ and $B_i = y_i y_i^*$ be Cholesky decompositions of $A_i$ and $B_i$ such that $x_i,y_i\in \C^{n\times p}$ and note that we are told the following limit exists at each $t$
\begin{align}
    \delta(t)=\lim_{i\rightarrow\infty} (1-t)^2 x_i x_i^* + t^2 y_i y_i^* + t(1-t)(x_i U_i^* y_i^* + y_i U_i x_i^*)
\end{align}
Where $U_i\in U(p)$ is such that $x_i^* y_i U_i\geq 0$. We note that since $(A_i)_{i\geq 1}$ and $(B_i)_{i\geq 1}$ converge we may obtain convergent subsequences for their Cholesky factors $x_i$ and $y_i$ ($||x_i||_2$ and $||y_i||_2$ must both be bounded or else $A_i$ and $B_i$ would not converge). We may also obtain a convergent subsequence for $(U_i)_{i\geq 1}$ owing to the compactness of $U(p)$. Denote these subsequential limits by $x$, $y$, and $U$ respectively and consider a combined subsequential indexing such that each occurs. Let $V_x$ and $V_y$ be the matrices of right singular vectors for $x$ and $y$ so that $x=[\hat{x} | 0 ] V_x$ and $y=[\hat{y} | 0] V_y$ for some $\hat{x},\hat{y}\in \C_*^{n\times q}$. Then clearly
\begin{align}
    \delta(t) = (1-t)^2 \hat{x}\hat{x}^* + t^2 \hat{y}\hat{y}^* + t(1-t) (\hat{x} \hat{U}^* \hat{y}^*+\hat{y} \hat{U}\hat{x}^*)
\end{align}
Where $\hat{U}$ is the upper left $q\times q$ block of $V_y U V_x^*$. We will prove that in fact
\begin{align}
    \label{blockdiagclutch}
    V_y U V_x^* =\left[
    \def\arraystretch{1.3}
    \begin{array}{@{}c|c@{}}
        \hat{U} & 0 \\ \hline
        0 & \tilde{U}
    \end{array}
    \right]
\end{align}
In particular, this will imply that $\hat{U}\in U(q)$ since $V_y U V_x^*\in U(p)$ hence the upper left $q\times q$ blocks of $(V_y U V_x^*)(V_y U V_x^*)^*$ and $(V_y U V_x^*)^*(V_y U V_x^*)$ must both be equal to the $q\times q$ identity matrix. In order to prove \eqref{blockdiagclutch}, note that $U= V W^*$ where
\begin{align}
    x^*y = W \left[
    \begin{array}{@{}c|c@{}}
        \Sigma & 0 \\ \hline
        0 & 0
    \end{array}
    \right] V^*
\end{align}
is a singular value decomposition of $x^*y$. On the other hand if
\begin{align}
    \hat{x}^*\hat{y} = P \left[
    \begin{array}{@{}c|c@{}}
        \Lambda & 0 \\ \hline
        0 & 0
    \end{array}
    \right] Q^*
\end{align}
is a singular value decomposition for $\hat{x}^*\hat{y}$ then 
\begin{align}
\def\arraystretch{1.3}
    x^*y = \underbrace{V_x^* \left[\begin{array}{@{}c|c@{}}
        P & 0 \\ \hline
        0 & \tilde{P}\end{array}
        \right]}_{W}
        \left[
        \begin{array}{@{}c|c@{}}
        \begin{array}{@{}c|c@{}}
        \Lambda & 0 \\ \hline
        0 & 0
    \end{array} & 0 \\ \hline
        0 & 0
    \end{array}\right]
    \underbrace{
    \left[
    \begin{array}{@{}c|c@{}}
        Q & 0 \\ \hline
        0 & \tilde{Q}
    \end{array}
    \right]
    V_y}_{V^*}
\end{align}
Where $\tilde{P},\tilde{Q}\in U(p-q)$ are in general arbitrary, but may of course be chosen in accordance with $W$ and $V$. Thus
\begin{align}
    V_y U V_x^* = V_y V W^* V_x = 
    \left[
    \begin{array}{cc}
         P Q & 0  \\
         0 & \tilde{P}\tilde{Q}
    \end{array}
    \right]
\end{align}
is as in \eqref{blockdiagclutch}. The question remains whether $\hat{x}^*\hat{y}\hat{U}\geq 0$, but we note that
\begin{align}
\begin{split}
    x^*y U &= V_x^* \begin{bmatrix}\hat{x}^*\hat{y} & 0 \\ 0 & 0 \end{bmatrix}V_y U\\
    &= V_x^* \begin{bmatrix}\hat{x}^*\hat{y} & 0 \\ 0 & 0 \end{bmatrix}V_y U V_x^* V_x\\
    &= V_x^* \begin{bmatrix}\hat{x}^*\hat{y} & 0 \\ 0 & 0 \end{bmatrix}\begin{bmatrix}\hat{U} & 0\\0&\tilde{U}\end{bmatrix} V_x\\
    &=  V_x^* \begin{bmatrix}\hat{x}^*\hat{y}\hat{U} & 0 \\ 0 & 0 \end{bmatrix} V_x
\end{split}
\end{align}
Thus $x^*y U$ will be positive semidefinite only if $\hat{x}^*\hat{y}\hat{U}$ is positive semidefinite, and since $x^*y U = \lim_{i\rightarrow\infty} x_i^* y_i U_i = \lim_{i\rightarrow\infty}|x_i^*y_i|\geq 0$ we conclude that $\hat{x}^*\hat{y}\hat{U}\geq 0$. A nearly identical proof shows that $U x^*y \geq 0$. We conclude that $\delta$ is a geodesic in $\mathring{S}^{q,0}(\C^n)$ connecting $A$ and $B$.
\end{proof}
\section{Proofs for Section \ref{sec:alphabeta}}
\subsection{Proof of Proposition \ref{prop:topa0}}
\label{proof:topa0}
\begin{proof}
We may first note that $\langle x x^*, A_j \rangle_\R-\langle y y^*, A_j \rangle_\R = \langle x x^* - y y^*, A_j\rangle_{\R}$. The expression \eqref{a0const} then becomes
\begin{align}
    a_0 = \inf_{\substack{L\in \srr\\||L||_2=1}}{\sum_{j=1}^m} \langle L, A_j \rangle^2
\end{align}
The claim follows by contradiction if $S^{r,r}$ is closed. Explicitly, if $S^{r,r}$ is closed then $S^{r,r}\cap\{x\in \nn : ||x||_2=1\}$ is compact. Assume $a_0 =0$, then there exists $L_0\in S^{r,r}\cap\{x\in \nn : ||x||_2=1\}$ so that 
\begin{align}
0=\sum_{j=1}^m \langle L_0, A_j\rangle^2
\end{align}
This implies that the map $\beta$ is not injective since, in particular, if $xx^* = (L_0)_+$ and $y y^* = (L_0)_-$ then $xx^*\neq yy^*$ since $||L_0||_2=1$ but $\beta(x)=\beta(y)$. It remains to show that the spaces $S^{p,q}$ and in particular $S^{r,r}$ are closed. Consider the map $\eta: \nn \rightarrow \{0,\ldots,n\}^2$ with $\eta(A) = (\rank(A_+), \rank(A_-))$ taking $A$ to its Sylvester indices $(p,q)$. Then $\eta$ is continuous with respect to the usual topology on $\nn$ and with respect to the ``upper box'' topology $\tau_{\mbox{ub}}$ on $\{0,\ldots,n\}^2$ generated by the base
\begin{align}
    \mathcal{B}_{\mbox{ub}} = \{\{x,\ldots,n\}\times\{y,\ldots,n\}| (x,y)\in\{0,\ldots,n+1\} \}
\end{align}
The maps $A\rightarrow A_{\pm}$ are continuous and it is well known that $\rank(A+B)\geq \rank(A)$ whenever $||B||_{2\rightarrow 2}<\sigma_{p+q}(A)$, hence $\eta$ is continuous. Moreover $\{0,\ldots,p\}\times\{0,\ldots, q\}$ is closed in $\tau_{ub}$ hence $S^{p,q}$, its pullback through the continuous map $\eta$, is closed in $\nn$.
\end{proof}
\subsection{Proof of Theorem \ref{thm:betalips}}
\label{proof:betalips}
\begin{proof}
We first prove that $a_0=\inf_{z\in\nr} a(z)$. We note that
\begin{align}
    a_0 = \inf_{\substack{x,y\in\nr\\ x x^*\neq y y^*}}\frac{1}{||x x^*- y y^*||_2^2}\sum_{j=1}^m |\langle x x^* - y y^*, A_j\rangle_\R|^2
\end{align}
We may change coordinates to $z=\frac{1}{2}(x+y)$ and $w=x-y$ so that
\begin{align}
    a_0 = \inf_{\substack{z,w\in\nr\\z w^*+w z^*\neq 0}}\frac{1}{||z w^* + w z^*||_2^2}\sum_{j=1}^m |\langle z w^* + w z^*, A_j\rangle_\R|^2
\end{align}
Recall that $z$ has rank k, and therefore we may take $z=[\hat{z} | 0] U$ for $\hat{z}\in \tallk$ and $U\in U(r)$. We then define $\hat{w}\in \nk$ via the first $k$ columns of $w U^*$ then $ z w^* + w z^* =\hat{z}\hat{w}^*+\hat{w}\hat{z}^*= D\pi(\hat{z})(\hat{w})$, so that in fact we may take $\hat{w}\in H_{\pi,\hat{z}}(\tallk)\setminus\{0\}$. We obtain
\begin{align}\label{a0asinfcomp}
\begin{split}
    a_0&=\inf_{z\in\nr\setminus\{0\}}\inf_{\hat{w}\in H_{\pi,\hat{z}}(\tallk)\setminus\{0\}}\frac{1}{||D\pi(\hat{z})(\hat{w})||_2^2}\sum_{j=1}^m |\langle D\pi(\hat{z})(\hat{w}), A_j\rangle_\R|^2\\
    &=\inf_{z\in\nr\setminus\{0\}}\min_{\substack{W\in T_{\pi(\hat{z})}(\mathring{S}^{k,0}(\C^n))\\||W||_2=1}}\sum_{j=1}^m |\langle W, A_j\rangle_\R|^2\\
     &=\inf_{\substack{z\in\nr\\||z||_2=1}}\min_{\substack{W\in T_{\pi(\hat{z})}(\mathring{S}^{k,0}(\C^n))\\||W||_2=1}}\sum_{j=1}^m |\langle W, A_j\rangle_\R|^2\\
    &=\inf_{\substack{z\in\nr\\||z||_2=1}} a(z)
    \end{split}
\end{align}
This proves \eqref{a0minimumform}. The first two inequalities of \eqref{littlaineq} are clear from the definitions of the quantities involved, namely $a_0\leq a_2(z) \leq a_1(z)$. It remains to prove that $a_1(z)\leq a(z)$. We will need the following families of real-linear subspaces of $\nr$ indexed by $z\in\nr$.
\begin{align} \label{horizfromabove}
    H_z &= \{Hz + X | H\in\squarematrices, H^*=H=\mathbb{P}_{\range(z)} H,X\in\nr, \mathbb{P}_{\range(z)} X=0, X \mathbb{P}_{\ker(z)}=0 \}\\
    \Delta_z &= \{w\in\nr |\quad \exists \rho >0\quad \forall |\epsilon|<\rho\quad z^*(z+\epsilon w)\geq 0\}\\
    \Gamma_z &=\{y\in\nr | \mathbb{P}_{\range(z)} y =0, \quad y\mathbb{P}_{\ker(z)}=y\}
\end{align}
\begin{lemma} 
The space $\Delta_z$ is alternately characterized as
\begin{align}
    \Delta_z= \{w\in\nr | z^* w = w^*z\}
\end{align}
And is thus manifestly a real-linear subspace. Moreover, $\Delta_z$ decomposes orthogonally into 
\begin{align}\label{deltadecomposition}
    \Delta_z = H_z\oplus \Gamma_z
\end{align}
Finally, if $z=[\hat{z} | 0] U$ for $\hat{z}\in \tallk$ then
\begin{align}\label{hztohpiz}
    H_z=\biggr{[}H_{\pi,\hat{z}}(\tallk) \biggr{|} 0\biggr{]} U
\end{align}
\end{lemma}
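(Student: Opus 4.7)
My plan has three parts, each reducing to a block computation after choosing a convenient representative of $[z]$. I would first observe the equivariance $\Delta_{zU^*} = \Delta_z U^*$ (by substituting $w = w'U^*$ and using that unitary conjugation preserves PSD), which reduces the analysis to the case $U = I$. So I may assume $z = [\hat{z}|0]$ with $\hat{z} \in \tallk$, and write $w = [w_1|w_2]$ with $w_1 \in \C^{n\times k}$, $w_2 \in \C^{n\times(r-k)}$.

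For the alternate characterization of $\Delta_z$, the block computation
\[
z^*(z+\epsilon w) = \begin{bmatrix} \hat{z}^*(\hat{z}+\epsilon w_1) & \epsilon \hat{z}^*w_2 \\ 0 & 0 \end{bmatrix}
\]
makes clear that Hermiticity of the right hand side is equivalent to $\hat{z}^*w_1 = w_1^*\hat{z}$ together with $\hat{z}^*w_2 = 0$, and I would check that these two conditions collapse to the single block identity $z^*w = w^*z$. Conversely, once Hermiticity holds, the matrix becomes block diagonal with top block $\hat{z}^*\hat{z} + \epsilon \hat{z}^*w_1$, and the positive definiteness of $\hat{z}^*\hat{z}$ (from full column rank of $\hat{z}$) gives PSD for all sufficiently small $|\epsilon|$ by standard eigenvalue perturbation.

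Next, I would verify directly that $H_z, \Gamma_z \subset \Delta_z$: for $w = Hz+X \in H_z$, $z^*X = 0$ follows from $\mathbb{P}_{\range(z)}X = 0$, hence $z^*w = z^*Hz$ is Hermitian; for $w = y \in \Gamma_z$, $z^*y = 0$ outright. Orthogonality $H_z \perp \Gamma_z$ amounts to the vanishing of $\Re\tr((Hz+X)^*y)$; the $z^*Hy$ piece vanishes since $H = \mathbb{P}_{\range(z)}H\mathbb{P}_{\range(z)}$ (combining Hermiticity with $\mathbb{P}_{\range(z)}H = H$) annihilates the columns of $y$ which lie in $\range(z)^\perp$, and the $X^*y$ piece vanishes from the complementary support conditions $X\mathbb{P}_{\ker(z)} = 0$ and $y\mathbb{P}_{\ker(z)} = y$ (the bilinear form $v \mapsto \langle Xv, yv\rangle$ vanishes separately on $\ker(z)$ and $\ker(z)^\perp$, hence by polarization $X^*y = 0$).

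To conclude that $\Delta_z = H_z \oplus \Gamma_z$ and establish \eqref{hztohpiz} simultaneously, I would split $w = [w_1|0] + [0|w_2]$. The block $[0|w_2]$ sits in $\Gamma_z$ because $\hat{z}^*w_2 = 0$ places the columns in $\range(z)^\perp$, while the shape automatically forces support on $\ker(z) \subset \C^r$. For $[w_1|0]$, Hermiticity of $\hat{z}^*w_1$ is precisely the characterization of $H_{\pi,\hat{z}}(\tallk)$ from Proposition~\ref{prop:bundles}, so $w_1 = H\hat{z} + X_1$ and padding by zeros yields $[w_1|0] = Hz + [X_1|0] \in H_z$, with the extra constraint $X\mathbb{P}_{\ker(z)} = 0$ automatic from the shape. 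The identity \eqref{hztohpiz} drops out of this same decomposition, since any element of $H_z$ has its $X$-part of the form $[X_1|X_2]$ with $X_2 = 0$ forced by $X\mathbb{P}_{\ker(z)} = 0$. The main bookkeeping burden throughout is tracking the parallel roles of $\ker(z) \subset \C^r$ and $\range(z)^\perp \subset \C^n$ (and correspondingly whether a projection acts on the left or right of a matrix), but no individual step is deep once the right representative is chosen.
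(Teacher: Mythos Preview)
Your approach is correct and differs from the paper's mainly in organizational style. The paper works abstractly: it characterizes $\Delta_z$ directly (noting that PSD forces Hermiticity, and that Hermiticity plus a small enough $\epsilon$ suffices because $z^*w$ is automatically supported on $\ker(z)^\perp$, where $z^*z$ is positive definite), then runs a chain of projection-based equivalences $z^*w = \tilde H \Leftrightarrow z^*w = z^*Hz \Leftrightarrow w = Hz + X$ to obtain the parametrization, from which the split into $H_z$ and $\Gamma_z$ is read off by decomposing $X = X\mathbb{P}_{\ker(z)^\perp} + X\mathbb{P}_{\ker(z)}$. You instead reduce to the canonical representative $z = [\hat z\,|\,0]$ via the equivariance $\Delta_{zV} = \Delta_z V$ (and the analogous identities for $H_z,\Gamma_z$) and do everything by explicit block arithmetic. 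Your route makes the sufficiency direction (Hermiticity $\Rightarrow$ PSD for small $|\epsilon|$) more transparent via the block-diagonal form with positive definite top block, at the cost of having to verify the three equivariance relations up front; the paper's route avoids that bookkeeping but leaves the reader to see why $z^*z + \epsilon z^*w \ge 0$ when $z^*z$ is only semidefinite.

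One small slip: in your orthogonality argument you claim that the complementary support conditions give $X^*y = 0$ by polarization. This is false---take $z = [\hat z\,|\,0]$, $X = [X_1\,|\,0]$, $y = [0\,|\,y_2]$ with $\hat z^*X_1 = 0 = \hat z^*y_2$; then $X^*y$ has the off-diagonal block $X_1^*y_2$, which need not vanish. Polarization would require the quadratic form $v \mapsto \langle Xv, yv\rangle$ to vanish for \emph{all} $v$, not merely on the two complementary subspaces. What you actually need (and what does hold) is only $\Re\tr(X^*y) = 0$, which follows directly from cyclicity:
\[
\tr(X^*y) = \tr\bigl(\mathbb{P}_{\ker(z)^\perp}X^*\,y\mathbb{P}_{\ker(z)}\bigr) = \tr\bigl(X^*y\,\mathbb{P}_{\ker(z)}\mathbb{P}_{\ker(z)^\perp}\bigr) = 0.
\]
With this correction the argument goes through.
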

\begin{proof} Clearly a necessary and sufficient condition for $w\in\Delta_z$ is that $z^*w=w^*z$, for in this case take $|\epsilon|< \sigma_k(z)/||w||_2$. We can use this condition to obtain a parametrization for $\Delta_z$:
\begin{align}
    w\in \Delta_z&\iff z^*w = w^*z\notag\\
    &\iff z^*w = \tilde{H}&&\tilde{H}\in\smallsquarematrices,\tilde{H}^*=\tilde{H}=\mathbb{P}_{\ker(z)^\perp} \tilde{H}\notag\\
    &\iff z^*w = z^*H z &&H\in\squarematrices, H^*=H= \projz H\notag\\
    &\iff w = H z + X &&H\in\squarematrices, H^*=H= \projz H, X\in \nr,\projz X = 0
\end{align}
This proves \eqref{deltadecomposition}, with orthogonality easily verified. To prove \eqref{hztohpiz} note that if $z=[\hat{z}|0]U$ for $\hat{z}\in\tallk$, $U\in U(r)$, and $w=H z +X \in H_z$ then the condition $X\mathbb{P}_{\ker(z)}=0$ implies $X=[\tilde{X} | 0 ] U$ for $\tilde{X}\in\nk$ and $\projz X =0$ if and only if $\projz \tilde{X}=0$. Thus 
\begin{align}
\begin{split}
    H_z &= \{ H [\hat{z} | 0] U + [\tilde{X} | 0] U | H\in\squarematrices, H^*=H=\projz H, \tilde{X}\in\nk, \projz \tilde{X}=0 \}\\
    &=\{[H \hat{z} +\tilde{X} | 0] U | H\in\squarematrices,H^*=H=\mathbb{P}_{\range(\hat{z})},\tilde{X}\in\nk,\mathbb{P}_{\range(\hat{z})}\tilde{X}=0\}\\
    &= [H_{\pi,\hat{z}}(\tallk) | 0] U
    \end{split}
\end{align}

\end{proof}

With this lemma in mind, we may transform $a_1(z)$ into a linear minimization problem over $\Delta_z$. Namely
\begin{align}
    \begin{split}
        a_1(z) &=\lim_{R\rightarrow 0} \inf_{\substack{x\in\nr\\ ||xx^*-zz^*||_2<R}}\frac{\sum_{j=1}^m|\langle xx^*-zz^*,A_j\rangle_\R|^2}{||xx^*-zz^*||_2^2}\\
        &= \lim_{R\rightarrow 0} \inf_{\substack{x\in\nr\\ ||xx^*-zz^*||_2<R\\ z^*x\geq 0}}\frac{\sum_{j=1}^m|\langle xx^*-zz^*,A_j\rangle_\R|^2}{||xx^*-zz^*||_2^2}
    \end{split}
\end{align}
We can add the $z^*x\geq 0$ constraint without altering the infimimum since doing so amounts to a choice of representative for $x$, but $x$ only appears as $\pi(x)=x x^*$. We now show the following lemma, implying that we may instead minimize over $||x-z||_2<R$.
\begin{lemma}
For all $z\in \nr$ and $\epsilon>0$ there exists $\delta>0$ such that if $z^*x\geq 0$ and $||zz^*-xx^*||_2<\delta$ then $||z-x||_2<\epsilon$.
\end{lemma}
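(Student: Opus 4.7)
My plan is to chain together three observations: that the constraint $z^*x \geq 0$ forces $||x-z||_2$ to coincide with $D(x,z)$; that $D(x,z)$ is dominated by the Euclidean distance after the $\theta$ embedding; and that the matrix square root is continuous on the positive semidefinite cone.

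First I would invoke Proposition \ref{prop:minimizer}: the minimizer $U \in U(r)$ in $D(x,z) = \min_U ||x - zU||_2$ is the polar factor characterized by $x^*z\, U = |x^*z|$. The hypothesis $z^*x \geq 0$ yields $x^*z = (z^*x)^* = z^*x \geq 0$, so $U = \I_{r \times r}$ is an admissible minimizer (corresponding, when $x^*z$ is rank-deficient, to the choice $U_1 = \mathbb{P}_{\ker(x^*z)}$ in the non-unique case of Proposition \ref{prop:minimizer}). Hence $||x - z||_2 = D(x,z)$. Next, Theorem \ref{thm:bounds}(i) supplies the universal upper bound
\[
D(x,z) \;\leq\; ||\theta(x) - \theta(z)||_2 \;=\; ||(xx^*)^{1/2} - (zz^*)^{1/2}||_2.
\]

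It then remains to show that $||(xx^*)^{1/2} - (zz^*)^{1/2}||_2$ can be made arbitrarily small by taking $||xx^* - zz^*||_2$ small, which is the (uniform) continuity of $A \mapsto A^{1/2}$ on the PSD cone. A quantitative form is the Powers--Stormer inequality $||A^{1/2} - B^{1/2}||_2^2 \leq ||A - B||_1$ for $A, B \geq 0$, combined with the Cauchy--Schwarz bound $||A - B||_1 \leq \sqrt{n}\,||A - B||_2$ valid for $n \times n$ Hermitian matrices. Chaining everything yields
\[
||x - z||_2 \;\leq\; ||(xx^*)^{1/2} - (zz^*)^{1/2}||_2 \;\leq\; n^{1/4}\,||xx^* - zz^*||_2^{1/2},
\]
so taking $\delta := \epsilon^4 / n$ completes the argument (and in fact gives a bound uniform in $z$).

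The only non-routine ingredient is the Powers--Stormer inequality, or equivalently the continuity of the matrix square root on PSD, which is a standard result in matrix analysis; I do not anticipate any real obstacle. The one point worth verifying carefully is that the hypothesis $z^*x \geq 0$ truly suffices to make $U = \I$ a minimizer even in the rank-deficient case, which is why I intend to appeal explicitly to the description of the minimizer in Proposition \ref{prop:minimizer} rather than only to the full-rank polar factor identity.
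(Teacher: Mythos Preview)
Your argument is correct and in some ways cleaner than the paper's. Both proofs begin by observing that $z^*x\geq 0$ forces $D(x,z)=\|x-z\|_2$, but then diverge: the paper routes through the $\psi$ embedding, invoking continuity of $A\mapsto \sqrt{\tr A}\,\sqrt{A}$ together with the inequality $\|\psi(x)-\psi(z)\|_2\geq \tfrac{1}{2}D(x,z)^2$ established in the proof of Theorem~\ref{thm:bounds}; you instead route through the $\theta$ embedding, using the already-stated bound $D(x,z)\leq \|\theta(x)-\theta(z)\|_2$ from Theorem~\ref{thm:bounds}(i) and then Powers--St{\o}rmer for a quantitative continuity estimate on the square root. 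Your path is shorter, relies only on the headline statement of Theorem~\ref{thm:bounds} rather than an intermediate inequality from its proof, and yields an explicit $\delta$ that is uniform in $z$ --- a genuine bonus. The paper's route has the modest advantage of being self-contained, not importing Powers--St{\o}rmer.

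One small arithmetic slip: from $\|x-z\|_2\leq n^{1/4}\|xx^*-zz^*\|_2^{1/2}$ you need $\delta=\epsilon^2/\sqrt{n}$, not $\epsilon^4/n$. (You could also sharpen $\sqrt{n}$ to $\sqrt{2r}$ since $xx^*-zz^*$ has rank at most $2r$.)
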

\begin{proof}
We begin with the fact that the operation
\begin{align}
\begin{split}
    \zeta&: PSD(n)\rightarrow PSD(n)\\
    \zeta(A)&= \sqrt{\tr{A}}\sqrt{A}
    \end{split}
\end{align}
is continuous with respect to the topology induced by the Frobenius norm. Note that $\zeta(x x^*) = ||x||_2(x x^*)^{\frac{1}{2}}=\psi(x)$ (the embedding $\psi$ as given in Definition \ref{def:embeddings}). Therefore, given any $z\in\nr$ and $\epsilon_1$ there exists $\delta$ such that
\begin{align}
    ||x x^* - z z^*||_2 < \delta \implies ||||x||_2(x x^*)^{\frac{1}{2}}-||z||_2(z z^*)^{\frac{1}{2}} ||_2<\epsilon_1
\end{align}
The latter expression here is of course $||\psi(x)-\psi(z)||_2$, which satisfies $||\psi(x)-\psi(z)||_2\geq \frac{1}{2}D(x,z)^2$ by \eqref{dDinequality}. If $z^*x\geq 0$ then $D(x,z)= ||x-z||_2$, so if we take $\epsilon_1=\frac{\epsilon^2}{2}$ then the above $\delta$ satisfies the lemma.
\end{proof}

With this lemma in hand we may freely replace $||x x^*- z z^*||_2$ by $||x-z||_2$ in the infimization constraint for $a_1(z)$ (note that the converse of the lemma is immediate since $\pi$ is continuous with respect to the topology induced by the Frobenius norm). After doing so, we change variables from $x$ to $w=x-z$ so that
\begin{align}
    \begin{split}
        a_1(z) &= \lim_{R\rightarrow 0} \inf_{\substack{x\in\nr\\ ||x-z||_2<R\\ z^*x\geq 0}}\frac{\sum_{j=1}^m|\langle xx^*-zz^*,A_j\rangle_\R|^2}{||xx^*-zz^*||_2^2}\\
        &= \lim_{R\rightarrow 0} \inf_{\substack{w\in\nr\\ ||w||_2<R\\ z^*(z+w)\geq 0}}\frac{\sum_{j=1}^m|\langle zw^* + w z^* + w w^*,A_j\rangle_\R|^2}{||zw^* + w z^* + w w^*||_2^2}\\
        &= \lim_{R\rightarrow 0} \inf_{\substack{w\in\Delta_z\\ ||w||_2<R}}\frac{\sum_{j=1}^m|\langle zw^* + w z^* + w w^*,A_j\rangle_\R|^2}{||zw^* + w z^* + w w^*||_2^2}\\
        &\leq \lim_{R\rightarrow 0} \inf_{\substack{w\in H_z\\ ||w||_2<R}}\frac{\sum_{j=1}^m|\langle zw^* + w z^* + w w^*,A_j\rangle_\R|^2}{||zw^* + w z^* + w w^*||_2^2}\\
        &= \lim_{R\rightarrow 0} \inf_{\substack{w\in H_z\\ ||w||_2<R}}\frac{\sum_{j=1}^m|\langle zw^* + w z^* + w w^*,A_j\rangle_\R|^2}{||zw^* + w z^*||_2^2+ ||w w^*||_2^2 + 4\Re\tr\{z w^* w w^*\}}\\
        &\leq \lim_{R\rightarrow 0} \inf_{\substack{w\in H_z\\ ||w||_2<R}}\frac{\sum_{j=1}^m|\langle zw^* + w z^* + w w^*,A_j\rangle_\R|^2}{||zw^* + w z^*||_2^2(1 + 4\frac{\Re\tr\{z w^* w w^*\}}{||zw^* + w z^*||_2^2})}
    \end{split}
\end{align}
We need to show that the ratio 
\begin{align}
    R(w)=4\frac{|\Re\tr\{z w^*w w^*\}|}{||z w^*+w z^*||_2^2}
\end{align}
is $O(||w||)$ when $w\in H_z$. We employ the parametrization of $H_z$ given in \eqref{horizfromabove} and note that for $w= H z + X$
\begin{align}
    ||z w^* + w z^*||_2^2 = 2(||z^*H z||_2^2 + ||z z^* H||_2^2 + ||z X^*||_2^2)\\
    \Re\tr\{z w^*w w^*\}=\Re\tr\{z^*H^2z z^*H z\}+\Re\tr\{X^*X z^*H z\}
\end{align}
Thus we find
\begin{align}
\begin{split}
    R(w)&\leq \frac{2|\Re\tr\{z^*H^2z z^*Hz\}|+2|\Re\tr\{X^*X z^*H z\}|}{||z^*H z||_2^2 + ||z z^* H||_2^2 + ||z X^*||_2^2}\\
    &\leq 2\frac{|\Re\tr\{z^* H^2 z z^*H z\}|}{||z^*H z||_2^2}+2\frac{|\Re\tr\{X^*X z^*H z\}|}{||z X^*||_2^2+||z^*H z||_2^2}\\
    &\leq 2 \frac{||z^*H^2 z ||_2}{||z^*H z ||_2}+\frac{||X^*X||_2}{||z X^*||_2}
\end{split}
\end{align}
Up until this point we have not used the fact that $H\projz = H = \projz H$ and $X\mathbb{P}_{\ker(z)}=0$. We do so now by noting that if $z = U_1 \Lambda V^*$ for $U_1\in \C^{n\times k}$ such that $U_1 U_1^*=\projz$, $\Lambda = \mbox{diag}(\sigma_1(z),\ldots,\sigma_k(z))$ is the diagonal matrix of ordered singular values $\sigma_1(z)\geq \cdots\geq \sigma_k(z)>0$, and $V_1\in \C^{r\times k}$ such that $V_1 V_1^*=\mathbb{P}_{\ker(z)^\perp}$ then 
\begin{align}\begin{split}
    ||z^*H^2 z||&= ||\Lambda U_1^* H^2 U_1 \Lambda ||_2 \leq \sigma_1(z)^2 ||U_1^* H^2 U_1||_2 = \sigma_1(z)^2 \sqrt{\tr\{\projz H^2\projz H^2\}} = \sigma_1(z)^2||H^2||_2\\
    ||z^*H z||&= ||\Lambda U_1^* H U_1 \Lambda ||_2 \geq \sigma_k(z)^2 ||U_1^* H U_1||_2 = \sigma_k(z)^2 \sqrt{\tr\{\projz H\projz H\}} =\sigma_k(z)||H||_2\\
    ||z X^*||_2 &= ||\Lambda V_1^* X^*||_2 = ||\Lambda (X V_1)^*||_2\geq \sigma_k(z)||X V_1||_2=\sigma_k(z)\sqrt{\tr\{X\mathbb{P}_{\ker(z)^\perp}X^*\}}=\sigma_k(z)||X||_2
\end{split}\end{align}
Thus if $\kappa(z)=\sigma_1(z)/\sigma_k(z)$ is the condition number of $z$ we find
\begin{align}
\begin{split}
    R(w)&\leq 2\kappa(z)^2 \frac{||H^2||_2}{||H||_2}+\sigma_k(z)^{-1}\frac{||X^*X||_2}{||X||_2}\\
    &\leq 2 \kappa(z)^2 ||H||_2 + \sigma_k^{-1}(z)||X||_2\\
    &\leq 2 \kappa(z)^2 \sigma_k(z)^{-1}||H z||_2 + \sigma_k^{-1}(z)||X||_2\\
    &\leq \frac{\sqrt{2}\max(2\kappa(z)^2, 1)}{\sigma_k(z)}\sqrt{||H z||_2^2+||X||_2^2}\\
    &= \underbrace{\frac{2\sqrt{2}\kappa(z)^2}{\sigma_k(z)}}_{C(z)} ||w||_2
    \end{split}
\end{align}
Thus returning to $a_1(z)$ we obtain
\begin{align}\begin{split}
    a_1(z)&\leq  \lim_{R\rightarrow 0} \inf_{\substack{w\in H_z\\ ||w||_2<R}}\frac{\sum_{j=1}^m|\langle zw^* + w z^*,A_j\rangle_\R|^2}{||zw^* + w z^*||_2^2}(1+2 C(z)||w||_2)\\
    &= \inf_{\substack{w\in H_z\\ w\neq 0}}\frac{\sum_{j=1}^m|\langle zw^* + w z^*,A_j\rangle_\R|^2}{||zw^* + w z^*||_2^2}\\
    &=\inf_{\substack{w\in H_{\pi,\hat{z}}\\\hat{w}\neq 0}} \frac{\sum_{j=1}^m|\langle \hat{z} \hat{w}^* + \hat{w} \hat{z}^*,A_j\rangle_\R|^2}{||\hat{z}\hat{w}^* + \hat{w} \hat{z}^*||_2^2}
    \\
    &= \min_{\substack{
    W\in T_{\pi(\hat{z})}(\mathring{S}^{k,0}(\C^n))
    \\||W||_2=1}
    } \sum_{j=1}^m | \langle W, A_j \rangle_\R|^2\\
    &= a(z)
    \end{split}
\end{align}
This proves \eqref{littlaineq}. In order to prove \eqref{azeigenform} we will employ an explicit parametrization of $T_{\pi(\hat{z})}(\mathring{S}^{k,0}(\C^n))$ implied by \eqref{tangentbundle}. The condition on $W\in\sym$ in \eqref{tangentbundle} that $\projperpz W \projperpz = 0$ implies that 
\begin{align}
    W\in T_{\pi(\hat{z})}(\mathring{S}^{k,0}(\C^n)) \iff W = W_1 + \frac{1}{2}(W_2 + W_2^*)
\end{align}
For $W_1,W_2\in\C^{n\times n}$ where $\projz W_1=W_1=W_1^*$, $\projz W_2=0$, and $W_2 \projz = W_2$. In other words, if $U_1\in\C^{n\times k}$ and $U_2\in\C^{n\times n-k}$ are as in Definition \ref{def:qz} then 
\begin{align}\label{tangentspaceunitary}
    T_{\pi(\hat{z})}(\mathring{S}^{k,0})=\{U_1 A U_1^* + \frac{1}{2}(U_2 B U_1^* + U_1 B^* U_2^*) | A\in\mbox{Sym}(\C^k), B\in \C^{n-k\times k}\}
\end{align}
We will now employ the fact that the maps $\tau$ and $\mu$ in \eqref{tauandmu} are isometries. Specifically, if $A,B\in\Sym$ then $\langle A,B \rangle_\R = \tau(A)^T\tau(B)$ and if $X,Y\in \nr$ then $\langle X,Y \rangle_\R=\mu(X)^T\mu(Y)$. With this in mind, we obtain that for $W\in T_{\pi(\hat{z})}(\mathring{S}^{k,0})$
\begin{align} \label{matrixificationcomp}
\begin{split}
    \sum_{j=1}^m |\langle W, A_j\rangle_\R|^2 &= \sum_{j=1}^m |\langle U_1 A U_1^* + \frac{1}{2}(U_2 B U_1^* + U_1 B^* U_2^*), A_j\rangle_\R|^2\\
    &= \sum_{j=1}^m |\langle U_1 A U_1^*,A_j\rangle_\R + \langle U_2 B U_1^*, A_j\rangle_\R|^2\\
    &= \sum_{j=1}^m |\langle A, U_1^*A_j U_1\rangle_\R + \langle B, U_2^* A_j U_1\rangle_\R|^2\\
    &= \sum_{j=1}^m \biggr{(}\begin{bmatrix}\tau(A)\\\mu(B)\end{bmatrix}^T\begin{bmatrix}\tau(U_1^*A_j U_1)\\\mu(U_2^* A_j U_1)\end{bmatrix}\biggr{)}^2\\
    &=\begin{bmatrix}\tau(A)\\\mu(B)\end{bmatrix}^T\biggr{(}\sum_{j=1}^m \begin{bmatrix}\tau(U_1^*A_j U_1)\\\mu(U_2^* A_j U_1)\end{bmatrix}\begin{bmatrix}\tau(U_1^*A_j U_1)\\\mu(U_2^* A_j U_1)\end{bmatrix}^T \biggr{)}\begin{bmatrix}\tau(A)\\\mu(B)\end{bmatrix}\\
    &= \mathcal{W}^T Q_z \mathcal{W}
\end{split}
\end{align}
Where $\mathcal{W}=\begin{bmatrix}\tau(A)\\\mu(B)\end{bmatrix}\in\R^{k^2+2 k (n-k)}=\R^{2 n k - k^2}$. Meanwhile, again owing to the fact that $\tau$ and $\mu$ are isometries, we find that  for $W\in T_{\pi(\hat{z})}(\mathring{S}^{k,0})$ we have $||W||_2=||\mathcal{W}||_2$. Thus returning to our computation of $a(z)$
\begin{align}\begin{split}
    a(z)&=\min_{\substack{W\in T_{\pi(\hat{z})}(\mathring{S}^{k,0}(\C^n))\\||W||_2=1}}\sum_{j=1}^m |\langle W, A_j\rangle_\R|^2\\
    &= \min_{\substack{\mathcal{W}\in\R^{2nk-k^2}\\||\mathcal{W}||_2=1}} \mathcal{W}^T Q_z\mathcal{W}\\
    &=\lambda_{2nk -k^2}(Q_z)
\end{split}\end{align}
This concludes the proof of $(i)-(iii)$. As for $(iv)$ and $(v)$ note that when $\rank(x)\leq k$ then we may find $P\in U(r)$ such that $x=[\hat{x} | 0] P$ for $\hat{x}\in \C^{n\times k}$ and moreover $d(x,z)=d(\hat{x},\hat{z})$ and $x x^*-z z^* = \hat{x}\hat{x}^*-\hat{z}\hat{z}^*$. Thus
\begin{align}
    \begin{split}\label{shrinktohats}
        \hat{a}_1(z)&=\lim_{R\rightarrow 0}\inf_{\substack{x\in\nr\\d(z,x)<R\\\rank(x)\leq k}}\frac{\sum_{j=1}^m|\langle x x^*-z z^*, A_j \rangle_\R|^2}{d(x,z)^2}\\
        &=\lim_{R\rightarrow 0} \inf_{ \substack{ \hat{x}\in\nk\\d(\hat{x},\hat{z})<R } } \frac{\sum_{j=1}^m|\langle \hat{x} \hat{x}^*-\hat{z} \hat{z}^*, A_j \rangle_\R|^2}{d(\hat{x},\hat{z})^2}
    \end{split}
\end{align}
The constraint $\rank(x)\leq k$ is therefore equivalent to the assumption that $z\in\tallk$. Hence, in order to avoid a plethora of hats we will assume $z\in\tallk$. This assumption simplifies the situation considerably since in this case $\Delta_z=H_{\pi,z}$. As we shall see, if the $\Gamma_z$ component of $\Delta_z$ were to be non-trivial, the local lower bounds $\hat{a}_1(z)$ and $\hat{a}_2(z)$ would be zero. We next note that $d(x,z)=||x-z||_2||x+z||_2$ precisely when $x^*z=z^*x\geq 0$, which may be achieved without loss of generality in $\hat{a}_1(z)$ via choice of representative for $x$. Thus, keeping in mind that $z\in \C_*^{n\times k}$, we find
\begin{align}\begin{split}
    \hat{a}_1(z)&=\lim_{R\rightarrow 0}\inf_{\substack{x\in\nk\\d(z,x)<R}}\frac{\sum_{j=1}^m|\langle x x^*-z z^*, A_j \rangle_\R|^2}{d(x,z)^2}\\
        &=\lim_{R\rightarrow 0}\inf_{\substack{x\in\nk\\||x-z||_2\cdot||x+z||_2<R\\x^*z=z^*x\geq 0 }}\frac{\sum_{j=1}^m|\langle z (x-z)^* + (x-z) z^* + (x-z)(x-z)^* ,A_j \rangle_\R|^2}{||x-z||_2^2\cdot||x+z||_2^2}
\end{split}\end{align}
In analogy with our analysis of $a_1(z)$ we change variables from $x$ to $w=x-z$ and are thus able to linearize the infimization constraint, since for $||w||_2<\sigma_k(z)$ we have that $z^*(z+w)\geq 0$ if and only if $z^*w = w^*z$, or in other words if and only if $z\in \Delta_z\iff z\in H_{\pi,z}$ (the vertical component of $\Delta_z$, namely $\Gamma_z$, is trivial for $z\in\tallk$). We also exploit the fact that $D$ and $d$ generate the same topology and therefore instead of $||w||_2||2 z+w||_2<R$ we may simply take $||w||_2<R$.
\begin{align}
    \begin{split}
        \hat{a}_1(z)&= \lim_{R\rightarrow 0} \inf_{ \substack{ w\in H_{\pi,z}\\||w||_2<R  } }\frac{\sum_{j=1}^m |\langle z w^* + w z^* + w w^* , A_j\rangle_\R|^2}{||w||_2^2||2z+w||_2^2}\\
        &=\frac{1}{4||z||_2^2}\lim_{R\rightarrow 0} \inf_{ \substack{ w\in H_{\pi,z}\\||w||_2<R  } } \frac{1}{||w||_2^2}\sum_{j=1}^m|\langle z w^* + w z^*, A_j \rangle_\R|^2(1+O(||w||_2^2))\\
        &=\frac{1}{4||z||_2^2}\inf_{ \substack{ w\in H_{\pi,z}\\||w||_2=1  } }\sum_{j=1}^m|\langle z w^* + w z^*, A_j \rangle_\R|^2 \\
        & = \frac{1}{4||z||_2^2} \hat{a}(z)
    \end{split}
\end{align}
We now consider $\hat{a}_2(z)$. In a manner precisely analogous to \eqref{shrinktohats} the constraint in $\hat{a}_2(z)$ that $\rank(x)\leq k$ and  $\rank(y)\leq k$ is equivalent to the assumption that $z\in\tallk$. We first employ the unitary freedom of $x$ and $y$ to note that
\begin{align}
    \begin{split}
         \hat{a}_2(z)&=\lim_{R\rightarrow 0}\inf_{\substack{x,y\in \nk\\d(x,z)<R\\d(y,z)<R}}\frac{\sum_{j=1}^m|\langle x x^*-y y^*, A_j \rangle_\R|^2}{d(x,y)^2}\\
         &= \lim_{R\rightarrow 0}\inf_{\substack{x,y\in \nk\\||x-z||_2||x+z||_2<R\\||y-z||_2||y+z||_2<R\\x^*z=z^*x\geq 0\\y^*z=z^*y\geq 0  }}\frac{\sum_{j=1}^m|\langle x x^*-y y^*, A_j \rangle_\R|^2}{d(x,y)^2}\\
         &=\lim_{R\rightarrow 0}\inf_{\substack{x,y\in \nk\\||x-z||_2<R\\||y-z||_2<R\\x^*z=z^*x\\y^*z=z^*y  }}\frac{\sum_{j=1}^m|\langle x x^*-y y^*, A_j \rangle_\R|^2}{d(x,y)^2}
    \end{split}
\end{align}
We now weaken the infimization constraints and obtain a lower bound. We note that $x^*z=z^*x$ and $y^*z=z^*y$ taken together imply that $(x-y)^*z=z^*(x-y)$, and also that the denominator $d(x,y)^2\leq ||x-y||_2^2||x+y||_2^2$. Thus, changing variables to $\xi=x-z$ and $\eta=y-z$ we obtain
\begin{align}\begin{split}
    \hat{a}_2(z)&\geq \lim_{R\rightarrow 0} \inf_{ \substack{ \xi,\eta \in \nk\\||\xi||_2<R\\||\eta||_2<R\\z^*(\xi-\eta)=(\xi-\eta)^*z  } } \frac{\sum_{j=1}^m |\langle z(\xi-\eta)^*+(\xi-\eta)z^*+\xi \xi^* -\eta\eta^*, A_j \rangle_\R|^2}{||\xi-\eta ||_2^2 || 2 z + \xi +\eta||_2^2}\\
    &=\frac{1}{4||z||_2^2}\lim_{R\rightarrow 0} \inf_{ \substack{ \xi,\eta \in \nk\\||\xi||_2<R\\||\eta||_2<R\\z^*(\xi-\eta)=(\xi-\eta)^*z  } } \frac{\sum_{j=1}^m |\langle z(\xi-\eta)^*+(\xi-\eta)z^*, A_j\rangle_\R|^2}{||\xi-\eta ||_2^2}(1+O(||\xi||_2^2+||\eta||_2^2))\\
    &= \frac{1}{4||z||_2^2}\lim_{R\rightarrow 0} \inf_{ \substack{ \xi,\eta \in \nk\\||\xi||_2<R\\||\eta||_2<R\\z^*(\xi-\eta)=(\xi-\eta)^*z  } } \frac{\sum_{j=1}^m |\langle z(\xi-\eta)^*+(\xi-\eta)z^*, A_j\rangle_\R|^2}{||\xi-\eta ||_2^2}\\
    &= \frac{1}{4||z||_2^2}\lim_{R\rightarrow 0} \inf_{ \substack{ \xi,\eta \in \nk\\||\xi-\eta||_2< 2 R\\z^*(\xi-\eta)=(\xi-\eta)^*z  } } \frac{\sum_{j=1}^m |\langle z(\xi-\eta)^*+(\xi-\eta)z^*, A_j\rangle_\R|^2}{||\xi-\eta ||_2^2}
\end{split}\end{align}
The last line is an equality rather than an inequality owing to homogeneity in $\xi-\eta$. Changing variables once more to $w=\xi-\eta$ and using the fact that for $z\in\tallk$ $z^*w=w^*z \iff w\in \Delta_z \iff w\in H_{\pi,z}(\C_*^{n\times k})$ gives 
\begin{align}
    \begin{split}
        \hat{a}_2(z)&\geq \frac{1}{4||z||_2^2}\lim_{R\rightarrow 0} \inf_{ \substack{ w \in H_{\pi, z}(\C_*^{n\times k})\\||w||_2< 2 R } } \frac{\sum_{j=1}^m |\langle z w^*+w z^*, A_j\rangle_\R|^2}{||w||_2^2}\\
        &= \frac{1}{4||z||_2^2}\inf_{ \substack{ w \in H_{\pi, z}(\C_*^{n\times k})\\||w||_2=1 } } \sum_{j=1}^m |\langle z w^*+w z^*, A_j\rangle_\R|^2\\
        &=\hat{a}(z)=\hat{a}_1(z)
    \end{split}
\end{align}
The reverse inequality $\hat{a}_2(z) \leq \hat{a}_1(z)$ is immediate from the definitions of $\hat{a}_1(z)$ and $\hat{a}_2(z)$, thus \eqref{ahatequality} is proved. We now turn to explicit computation of $\hat{a}(z)$ as the smallest non-zero eigenvalue of $\hat{Q}_z$. As with the computation of $a(z)$ we rely on several embeddings. Specifically we define
\begin{align}
    &l: \C^{n\times k}\rightarrow \R^{2n \times k} & j: \C^{n\times k}\rightarrow \R^{2n\times 2k}\notag\\
    &l(X)= \begin{bmatrix}\Re X \\ \Im X\end{bmatrix} & j(X) = \begin{bmatrix}\Re X & -\Im X\\ \Im X & \Re X \end{bmatrix}
\end{align}
Note that $j$ is an injective homomorphism and moreover that
\begin{align}
    j(X) = \begin{bmatrix} l(X) & J l(X) \end{bmatrix}
\end{align}
where $J\in \R^{2n\times 2n}$ is the symplectic form
\begin{align}
    J = \begin{bmatrix} 0 & -\I_{n\times n} \\ \I_{n\times n} & 0 \end{bmatrix}
\end{align}
Note that $J j(X) = j(X) J$ for all $X\in\nn$.The embedding $l$ is isometric, and the embedding $j$ is isometric up to a constant since for $X, Y\in \C^{n\times k}$ we have $\langle X, Y \rangle_\R = \langle l(X),l(Y) \rangle_\R=\frac{1}{2}\langle j(X), j(Y)\rangle_\R$. The embedding $j$ is furthermore a structure preserving homomorphism since for $p\in\C^{n\times k}, q\in \C^{k\times l}$ we have that $j(p)l(q)=l(p q)$, $j(p q)=j(p)j(q)$, and $j(p^*)=j(p)^T$.  We will also employ the isometric embedding $\mbox{vec}$ defined in the obvious way in \eqref{vecdef}. We will need the fact that if $A\in\R^{n\times k}$ and $B\in \R^{k\times l}$ then
\begin{align}
    \mbox{vec}(A B) = (\I_{l\times l}\otimes A)  \mbox{vec}(B)
\end{align}
Note that this further implies that for $x,y\in\R^{n\times k}$ and $F\in\R^{n\times n}$ we have that 
\begin{align}\label{midwaytraceidentity}
    \mbox{vec}(x)^T(\I_{k\times k}\otimes F )\mbox{vec}(y)= \mbox{vec}(x)^T \mbox{vec}(F y)=\langle x, F y \rangle_\R = \tr\{x^T F y\}
\end{align}
With this in mind we find that for $z\in \tallk$ and $w\in H_{\pi,z}(\C_*^{n\times k})$
\begin{align}\begin{split}\label{firstvectorization}
    |\langle D\pi(z)(w), A_j \rangle_\R|^2&= 4 | \langle w z^*, A_j \rangle_\R|^2\\
    &=\langle j(w z ^*), A_j \rangle^2\\
    &=\langle j(w), A_j j(z) \rangle^2\\
    &=\biggr{(}\mbox{vec}(j(w))^T\mbox{vec}(j(A_j)j(z))\biggr{)}^2\\
    &=\biggr{(}\mbox{vec}(j(w))^T (\I_{2k\times 2k}\otimes j(A_j)) \mbox{vec}(j(z))\biggr{)}^2\\
    &=4\biggr{(}\mbox{vec}(l(w))^T (\I_{k\times k}\otimes j(A_j)) \mbox{vec}(l(z))\biggr{)}^2\\
    &= 4  W^T F_j Z Z^T F_j W
    \end{split}
\end{align}
where $W=\mu(w)$, $Z=\mu(z)$ and $F_j=\I_{k\times k}\otimes j(A_j)$. This should not be too surprising since in fact
\begin{align}\begin{split}\label{secondvectorization}
    \beta_j(z)&=\langle z z^*, A_j \rangle_\R\\
    &=\langle z, A_j z\rangle_\R\\
    &=\frac{1}{2}\langle j(z), j(A_j) j(z)\rangle\\
    &= \frac{1}{2}\mbox{vec}(j(z))^T\mbox{vec}(j(A_j)j(z))\\
    &=\frac{1}{2}\mbox{vec}(j(z))^T(\I_{2k\times 2k}\otimes j(A_j))\mbox{vec}(j(z))\\
    &= \mbox{vec}(l(z))^T(\I_{k\times k}\otimes j(A_j))\mbox{vec}(l(z))= Z^T F_j Z
\end{split}\end{align}
Thus when $\beta_j$ is viewed as map from $\R^{2nk}$ to $\R$ we find that $|D\beta_j(Z)(W)|^2 = 4 W^T F_j Z Z^T F_j W$. Returning to $a(z)$ we first note that the constraint $w\in H_{\pi,z}(\C_*^{n\times k})$ precisely avoids the ``trivial'' kernel of dimension $k^2$ common to each $F_j Z Z^T F_j$. Specifically, we note that $Z^T F_j V =0$ for $V\in \mathcal{V}_z\subset \R^{2nk}$ where
\begin{align}
    \mathcal{V}_z=\{\mbox{vec}(J l(z) S + l(z) A) | S\in \mbox{Sym}(\R^k), A \in \mbox{Asym}(\R^k) \}
\end{align}
Namely if $V\in \mathcal{V}_z$ and $\eta=J l(z) S + l(z) A\in\R^{2n\times r}$ for $A\in\mbox{Asym}(\R^k)$ and $S\in\mbox{Sym}(\R^k)$ so that $V=\mbox{vec}(\eta)$ then 
\begin{align}\label{nullspacecontainment}
\begin{split}
    Z^T F_j V &= \mbox{vec}(l(z))^T (\I_{k\times k} \otimes j(A_j) )\mbox{vec}(\eta)\\
            &=\tr\{l(z)^T j(A_j) \eta\}\\
            &= \tr\{ l(z)^T j(A_j) (J l(z) S+l(z)A)\}\\
            &= \tr\{ l(z)^T j(A_j) J l(z) S\} + \tr\{ l(z)^T j(A_j) l(z) A\}\\
            &=0
    \end{split}
\end{align}
The last line follows from the fact that $j(A_j)$ is symmetric and $j(A_j) J$ is anti-symmetric since $(j(A_j) J)^*=-J j(A_j)=-j(A_j) J$. The reason that $w\in H_{\pi,z}(\C_*^{n\times k})$ avoids this common kernel is that in fact $\mathcal{V}_z=\mu(V_{\pi,z}(\C_*^{n\times k}))$. Recall that 
\begin{align}
    V_{\pi,z}(\C_*^{n\times k}) = \{ z K | K\in\mbox{Asym}(\C^k)\}
\end{align}
We may decompose $K\in\mbox{Asym}(\C^n)$ as $K=A + i S$ where $A\in\mbox{Asym}(\R^n)$ and $S\in\SymR$. Hence if $u\in V_{\pi,z}(\C_*^{n\times k})$ then on the one hand $j(u) = [l(u) | J l(u)]$ and on the other
\begin{align}
     j(u) = j(z K) = j(z) j(K)= [l(z) | J l(z)]\begin{bmatrix} A & -S\\ S & A \end{bmatrix} = [l(z) A + J l(z) S | - l(z) S + J l(z) A]
\end{align}
From which we may clearly identify $l(u)=l(z) A + J l(z) S$, thus
\begin{align}
    \mathcal{V}_z = \{\mu(u) | u\in V_{\pi,z}(\C_*^{n\times k})\}
\end{align}
The map $\mu$ is an isometry, so if $w\in H_{\pi,z}(\C_*^{n\times k})$ then the image $W=\mu(w)$  lies precisely in the orthogonal complement of $\mathcal{V}_z$. Thus

\begin{align}\begin{split}
    \hat{a}(z) &= \min_{\substack{w\in H_{\pi,\hat{z}}(\C_*^{n\times k})\\||w||_2=1}}\sum_{j=1}^m|\langle D\pi(\hat{z})(w), A_j \rangle_\R|^2\\
    &=\min_{\substack{W\in\R^{2nk}\\ W \perp \mathcal{V}_z\\ ||W||_2=1}} W^T( 4\sum_{j=1}^m F_j Z Z^T F_j ) W\\
    &= \lambda_{2nk-k^2}(\hat{Q}_z)
\end{split}\end{align}
Note that at this point the hats return and $Z=\mu(\hat{z})$. Eigenvalues are continuous with respect to matrix entries, and $\hat{Q}_z$ is manifestly continuous with respect to $z$. As a result of this and the fact that $k\mapsto 2nk -k^2$ is monotone increasing for $k\leq n$ we conclude that $\hat{a}(z)$ approaches zero whenever $z$ approaches a drop in rank. Indeed, $\hat{a}(z)$ jumps discontinuously to a non-zero value once the surface of lower rank is actually reached, but this cannot prevent $\inf_{z\in\nr}\hat{a}(z)$ from being zero, thus there is no hope of defining a non-zero global lower bound  $\hat{a}_0$. This concludes the proof of claims $(iv)$-$(vi)$. 

\par{} Claim $(vii)$ gives local control of $a(z)$ in terms of $\hat{a}(z)$. We first prove that the the inequality \eqref{avsahat} holds. To do so we consider the following operators:
\begin{align}\begin{split}
    &\Pi_1(\hat{z}): ( T_{\pi(\hat{z})}(\mathring{S}^{k,0}(\C^n)),||\cdot||_2)\rightarrow (\R^m,||\cdot||_2) \\
    &\Pi_1(\hat{z})(W)=(\tr\{W A_j\})_{j=1}^m 
    \end{split}\\
    \begin{split}
    &\Pi_2(\hat{z}): ( H_{\pi,\hat{z}}(\C_*^{n\times k}),||\cdot||_2)\rightarrow (\R^m,||\cdot||_2) \\
    &\Pi_2(\hat{z})(w)=(\tr\{(\hat{z} w^*+w \hat{z}^*)A_j\})_{j=1}^m=\Pi_1(\hat{z})D\pi(\hat{z})w
    \end{split}
\end{align}
Note that $a(z)$ and $\hat{a}(z)$, defined respectively in \eqref{azdef} and \eqref{azhatdef}, are expressible in terms of the operator norms of the pseudo-inverses of $\Pi_1(\hat{z})$ and $\Pi_2(\hat{z})$.
\begin{align}
    \begin{split}
        a(z)=||\Pi_1(\hat{z})^{\dagger}||_{*}^{-2}\\
        \hat{a}(z)=||\Pi_2(\hat{z})^{\dagger}||_{*}^{-2}
    \end{split}
\end{align}
We may therefore obtain operator-theoretic inequalities relating $a(z)$ and $\hat{a}(z)$, namely
\begin{align}\begin{split}
    ||\Pi_2(\hat{z})^{\dagger}||_{*}&=||D\pi(\hat{z})^{-1}\Pi_1(\hat{z})^{\dagger}||_{*}\leq ||D\pi(\hat{z})^{-1}||_{*} ||\Pi_1(\hat{z})^{\dagger}||_{*}\\
    ||\Pi_1(\hat{z})^{\dagger}||_*&=||D\pi(\hat{z})\Pi_2(\hat{z})^{\dagger}||_*\leq ||D\pi(\hat{z})||_*||\Pi_2(\hat{z})^{\dagger}||_*
\end{split}\end{align}
Hence
\begin{align}
     ||D\pi(\hat{z})||_*^{-2}\hat{a}(z) \leq a(z)\leq ||D\pi(\hat{z})^{-1}||_*^2 \hat{a}(z)
\end{align}
It remains only to compute appropriate bounds for $||D\pi(\hat{z})||_*^{-2}$ and $||D\pi(z)^{-1}||_*^2$ in order to prove \eqref{avsahat}. First note that 
\begin{align}
    ||D\pi(\hat{z})^{-1}||_*^2=\sup_{W\in\T_{\pi(\hat{z})}(\mathring{S}^{k,0}(\C^n))\setminus\{0\}} \frac{||D\pi(\hat{z})^{-1}(W)||_2^2}{||W||_2^2}=\biggr{(}\inf_{w\in H_{\pi,\hat{z}}(\C_*^{n\times k})\setminus\{0\}} \frac{ || \hat{z} w^* + w \hat{z}^*||_2^2}{||w||_2^2}\biggr{)}^{-1}
\end{align}
Next note that for $w = H \hat{z} +X\in H_{\pi,\hat{z}}(\C_*^{n\times k})$ we have $||w||_2^2= ||H \hat{z} ||_2^2+||X||_2^2$ and $||\hat{z} w^* + w \hat{z}||_2^2=2(||\hat{z}^* H \hat{z}||_2^2+||\hat{z} \hat{z}^* H||_2^2+||\hat{z} X^*||_2^2)$ thus
\begin{align}
\begin{split}
    ||D\pi(\hat{z})^{-1}||_*^{-2}&=\inf_{w\in H_{\pi,\hat{z}}(\C_*^{n\times k})\setminus\{0\}}\frac{ || \hat{z} w^* + w \hat{z}^*||_2^2}{||w||_2^2}\\&=2\inf_{\substack{H\in\Sym,\mathbb{P}_{\range(\hat{z})}H=H\\X\in\nk,\mathbb{P}_{\range(\hat{z})}X=0}} \frac{||\hat{z}^*H \hat{z}||_2^2+ ||\hat{z} \hat{z}^* H ||_2^2 + ||\hat{z} X^*||_2^2}{||H \hat{z}||_2^2+||X||_2^2}\\
    &\geq 2\inf_{\substack{H\in\Sym,\mathbb{P}_{\range(\hat{z})}H=H\\X\in\nk,\mathbb{P}_{\range(\hat{z})}X=0}} \frac{||\hat{z}^*H \hat{z}||_2^2 + ||\hat{z} X^*||_2^2}{||H \hat{z}||_2^2+||X||_2^2}\\
    &\geq 2\sigma_k(\hat{z})^2 \inf_{\substack{H\in\Sym,\mathbb{P}_{\range(\hat{z})}H=H\\X\in\nk,\mathbb{P}_{\range(\hat{z})}X=0}} \frac{||H \hat{z}||_2^2+||X||_2^2}{||H \hat{z}||_2^2+||X||_2^2}\\
    &= 2\sigma_k(z)^2
\end{split}
\end{align}
Hence $||D\pi(\hat{z})^{-1}||_*^2\leq \frac{1}{2\sigma_k(z)^2}$. For the opposing bound note that
\begin{align}
\begin{split}
    ||D\pi(\hat{z})||_*^{2}&=\sup_{w\in H_{\pi,\hat{z}}(\tallk)\setminus\{0\}}\frac{|| \hat{z} w^* + w \hat{z}^*||_2^2}{||w||_2^2}\\
    &\leq \sup_{w\in H_{\pi,\hat{z}}(\tallk)\setminus\{0\}}\frac{|| \hat{z} w^* + w \hat{z}^*||_1^2}{||w||_2^2}\\
    &\leq  \sup_{w\in H_{\pi,\hat{z}}(\tallk)\setminus\{0\}}\frac{4|| \hat{z} w^*||_1^2}{||w||_2^2}\\
    &\leq 4 ||z||_2^2
    \end{split}
\end{align}
Hence $||D\pi(\hat{z})||_*^{-2}\geq \frac{1}{4||z||_2^2}$, proving \eqref{avsahat}. We note that choosing $w=\hat{z}\in H_{\pi,\hat{z}}(\tallk)$ proves that in fact $||D\pi(\hat{z})||_{2\rightarrow 1}=\frac{1}{2||z||_2}$. Finally, the claimed bounds in \eqref{avsahat} are tight in the case $\rank(z)=1$, since in this case the inequality is equivalent to the norm inequality for $W\in\C^{n\times n}$
\begin{align}\label{normineq}
    \frac{1}{\sqrt{\rank(W)}}||W||_1\leq ||W||_2\leq ||W||_1
\end{align}
Specifically if $W\in T_{\pi(z)}(\mathring{S}^{1,0}(\C^n))$ for $z\in\C_*^{n}$ then $W=z w ^* + w z^*$ for some $w\in H_{\pi,z}(\C_*^{n})\subset\C^n$ and has rank at most 2. Moreover we have that 
\begin{align}
    ||W||_1=||z w^* + w z^*||_1= \frac{1}{2}||(z+w)(z+w)^*-(z-w)(z-w)^*||_1
\end{align}
Recall \eqref{rank1dvsnormdiff} that for $x,y\in\C^n$ we have that $||x x^* - y y^*||_1=d(x,y)$ and that $d(x,y)=||x-y||_2 ||x+y||_2$ when $x^*y\geq 0$. Let $x=z+w$ and $y=z-w$, and note that in this case $w\in H_{\pi,z}(\C_*^n)$ implies $x^*y= z^*z +w^*z-z^*w - w^*w=z^*z-w^*w\geq 0$ for $||w||_2$ sufficiently small. Thus for $||w||_2$ or equivalently $||W||_2$ sufficiently small, 
\begin{align}
    ||W||_1=\frac{1}{2}||(z+w)-(z-w)||_2||(z+w)+(z-w)||_2=2||z||_2||w||_2
\end{align}
The condition that $||W||_2$ be sufficiently small is of no issue since the ratio in $a(z)$ is homogeneous in $||W||_2$, hence recalling that $\rank(W)\leq 2$ \eqref{normineq} implies
\begin{align}
    \sqrt{2}||z||_2||w||_2\leq ||W||_2\leq 2||z||_2||w||_2
\end{align}
Thus for $\rank(z)=1$ the inequality \eqref{normineq} is equivalent to
\begin{align}
    \frac{1}{4||z||_2^2}\hat{a}(z)\leq a(z)\leq \frac{1}{2||z||_2^2}\hat{a}(z)
\end{align}
which is recognizable as \eqref{avsahat} since if $\rank(z)=1$ then $||z||_2^2 =\sigma_1(z)^2$ and hence since \eqref{normineq} is tight so too is \eqref{avsahat}. This concludes the proof of $(vii)$.\\
\par{} 
To prove $(viii)$ we combine \eqref{a0minimumform} and \eqref{azeigenform} to obtain the following formula for computing $a_0$:
\begin{align}
    a_0=\min_{k=1,\ldots,r}\min_{\substack{U\in U(n)\\U=[U_1 | U_2]\\U_1\in\C^{n\times k}\\U_2\in\C^{n\times (n-k)}}}\lambda_{2 n k - k^2}(Q_{U})
    \end{align}
    Recalling that
    \begin{align}
    Q_{[U_1|U_2]}=\sum_{j=1}^m \begin{bmatrix}\tau(U_1^* A_j U_1)\\\mu(U_2^* A_j U_1)\end{bmatrix} \begin{bmatrix}\tau(U_1^*A_j U_1)\\\mu(U_2^* A_j U_1)\end{bmatrix}^T
\end{align}
Finally, we need to prove that the minimum over $k$ in fact occurs at $k=r$. We may write
\begin{align}
    a_0=\min_{k=1,\ldots,r}\inf_{z\in\C_*^{n\times k}}\min_{W\in T_{\pi(z)}(\mathring{S}^{k,0}(\C^n))}\frac{1}{||W||_2^2}\sum_{j=1}^m |\langle W, A_j\rangle_\R |^2
\end{align}
Then note that if $\hat{z}\in \C_*^{n\times k}$ and $\tilde{z}\in\C_*^{n\times(r-k)}$ is such that $\hat{z}^* \tilde{z}=0$ then $z=[\hat{z} | \tilde{z}]\in \C_*^{n\times r}$ and moreover, recalling the parametrization of the tangent space \eqref{tangentbundle} (or alternately that the stratification is $a$-regular), we find that $T_{\pi(z)}(\scirc)\supset T_{\pi(\hat{z})}(\mathring{S}^{k,0}(\C^n))$ since $\range(z)^\perp = \range(\hat{z})^\perp\cap \range(\tilde{z})^\perp$. Thus, in fact
\begin{align}
    a_0=\min_{\substack{U\in U(n)\\U=[U_1 | U_2]\\U_1\in\C^{n\times r}\\U_2\in\C^{n\times (n-r)}}}\lambda_{2 n r - r^2}(Q_{U})
\end{align}
We now set out to prove $(ix)$, specifically to control $a_0$ using an infimization of $\hat{a}(z)$ rather than of $a(z)$ by including the additional constraint that $z^*z=\mathbb{I}_{r\times r}$. With this constraint we may write any $w\in H_{\pi,z}(\C_*^{n\times r})$ as $w= z \tilde{H} + X$ where $\tilde{H}\in\mbox{Sym}(\C^r)$ and $X\in\nr$ satisfies $\mathbb{P}_{\range(z)} X=0$ (equivalently $X$ satisfies $z^* X=0$). We note that for $z$ satisfying the constraint
    \begin{align}
        ||w||_2^2 = ||\tilde{H}||_2^2 + ||X||_2^2\\
        ||z w^* + w z^*||_2^2= 4 ||\tilde{H}||_2^2 + 2 ||X||_2^2
    \end{align}
    Hence referring to \eqref{azdef} and \eqref{azhatdef} we find that for $z^*z=\mathbb{I}_{r\times r}$
    \begin{align}
        \frac{1}{4}\hat{a}(z)\leq a(z)\leq \frac{1}{2}\hat{a}(z) 
    \end{align}
    Note that a direct application of \eqref{avsahat} to the case where $z$ has orthonormal columns would lead to the lower constant being $\frac{1}{4 r}$ rather than $\frac{1}{4}$.The form \eqref{finala0form} for $a_0$ tells us that $a(z)$ depends only on the range of $z$, and that we may obtain $a_0$ via
    \begin{align}
        a_0=\inf_{\substack{z\in\tall\\z^*z=\mathbb{I}_{r\times r}}} a(z)
    \end{align}
    Thus
    \begin{align}\label{a0vsa0hat}
        \frac{1}{4}\inf_{\substack{z\in\tall\\z^*z=\mathbb{I}_{r\times r}}} \hat{a}(z)\leq a_0 \leq \frac{1}{2} \inf_{\substack{z\in\tall\\z^*z=\mathbb{I}_{r\times r}}} \hat{a}(z)
    \end{align}
    This concludes the proof of $(ix)$ and Theorem \ref{thm:betalips}.
\end{proof}
\begin{remark}
For $r=1$ the inequality \eqref{avsahat} tells us that
\begin{align}
    \frac{1}{4||z||_2^2}\hat{a}(z)\leq a(z)\leq \frac{1}{2||z||_2^2}\hat{a}(z)
\end{align}
But in fact, as was proved in \cite{balan2016lipschitz}, more is true. Namely if the nuclear norm is used in the definition of $a_0$ instead of the Frobenius norm so that 
\begin{align}
    a_0^1=\inf_{\substack{x,y\in\nr\\x\neq y}}\frac{\sum_{j=1}^m (\langle x x^*, A_j \rangle_\R- \langle y y^*, A_j \rangle_\R)^2}{||x x^* - y y^* ||_1^2}
\end{align}
And similarly in the definition of $a(z)$ so that
\begin{align}
    a^1(z)=\min_{\substack{
    W\in T_{\pi(\hat{z})}(\mathring{S}^{k,0}(\C^n))
    \\||W||_1=1}
    } \sum_{j=1}^m | \langle W, A_j \rangle_\R|^2
\end{align}
then
\begin{align}
    a_0^1 = \inf_{z\in\nr\setminus\{0\}}a^1(z)\\
    a^1(z)=\frac{1}{4||z||_2^2}\hat{a}(z)
\end{align}
\end{remark}
\begin{remark}
For $r=1$, $Q_z$ is orthogonally equivalent to the restriction of $\hat{Q}_z$ to the orthogonal complement of its null space, giving a correspondence between \eqref{azeigenform} and (3.5) in \cite{balan2016frames} when the frame is positive semidefinite ($A_j=f_j f_j^*$). Specifically, if $r=1$ then we may take $U_1=\frac{z}{||z||_2}=:e_1$ and $U_2=[e_2,\ldots,e_n$] where $e_1,\ldots,e_n$ forms an orthonormal basis for $\C^n$ with respect to the complex inner product $\langle \cdot, \cdot\rangle_\C$. Thus
\begin{align}\begin{split}
    \tau(U_1^*A_j U_1) &= \frac{|\langle z, f_j \rangle_{\C}|^2}{||z||_2^2} = \frac{1}{||z||_2}\langle e_1,f_j\rangle_\C\langle f_j,z\rangle_\C\\
    \mu(U_2^* A_j U_1) &=\frac{1}{||z||_2}l( \begin{bmatrix}\langle e_2,f_j \rangle_\C\langle f_j,z\rangle_\C\\\vdots\\\langle e_n,f_j \rangle_\C\langle f_j,z\rangle_\C\end{bmatrix})
    \end{split}
\end{align}
Note that $\tau(U_1^* A_j U_1)$ is real, hence if we insert a single $0$ in the middle of $\mu(U_2^* A_j U_1)$ between $\mbox{vec}(\Re(U_2^* A_j U_1))$ and $\mbox{vec}(\Im( U_2^* A_j U_1))$ we obtain
\begin{align}
    \left[
    \def\arraystretch{1.3}
    \begin{array}{c}
    \tau(U_1^*A_j U_1)\\\mbox{vec}(\Re(U_2^* A_j U_1))\\0\\\mbox{vec}(\Im(U_2^* A_j U_1))
    \end{array}
    \right]=\frac{1}{||z||_2}l(\left[\begin{array}{c} 
    \langle e_1,f_j \rangle_\C\langle f_j,z\rangle_\C\\\vdots\\\langle e_n,f_j \rangle_\C\langle f_j,z\rangle_\C
    \end{array} \right])=\frac{1}{||z||_2}l(U^* A_j z)= \frac{1}{||z||_2} j(U)^T j(A_j) l(z)
\end{align}
Where in the last inequality the algebraic properties of $l$ and $j$ are employed. Thus (up to a row and column of zeros)
\begin{align}
    Q_z =  j(U)^T\biggr\{\frac{1}{||z||_2^2}\sum_{j=1}^m j(A_j) l(z) l(z)^T j(A_j)\biggr\}j(U)
\end{align}
In accordance with the notation of \cite{balan2016frames} we denote $\xi=l(z)$, $\phi_j=l(f_j)$, and $\Phi_j =j(A_j) = \phi_j\phi_j^T + J\phi_j\phi_j^T J^T$ so that the above becomes
\begin{align}
    Q_z=j(U)^T\biggr\{\frac{1}{||\xi||_2^2}\sum_{j=1}^m \Phi_j \xi\xi^T \Phi_j\biggr\}j(U)
\end{align}
Finally note that the column of $j(U)$ corresponding to the the row and column of zeros on the left hand side is $J l(z)/||z||_2= J\xi/||\xi||_2$, thus if we multiply on the left by $j(U)$ and on the right by $j(U)^T$ we obtain
\begin{align}
    j(U) Q_z j(U)^T = (\I-\mathbb{P}_{J\xi})\biggr\{\frac{1}{||\xi||_2^2}\sum_{j=1}^m \Phi_j \xi\xi^T \Phi_j\biggr\}(\I-\mathbb{P}_{J\xi})
\end{align}
\end{remark}
\subsection{Proof of Theorem \ref{thm:alphalips}}
\label{proof:alphalips}
\begin{proof}
As was the case for $\hat{a}_1(z)$ and $\hat{a}_2(z)$ the rank constraints in $A_1(z)$, $A_2(z)$, $\hat{A}_1(z)$, and $\hat{A}_2(z)$ allow us to assume that $z\in \tallk$ rather than $\nr$. As before, this is done because without this assumption the resulting lower bounds would be zero for every $z$ not full rank. We begin with the analysis of $\hat{A}_1(z)$, the simpler of the local lower bounds (we will show $(x)$ that $A_i(z)$ differ from $\hat{A}_i(z)$ only by a constant factor, and hence will not analyze them separately). As we have done several times before we will employ the right hand unitary freedom of the variable $x$ to require that $z^*x\geq 0$, and then make the change of variables from $x$ to $w=x-z$.
\begin{align}\label{A1computation}
    \begin{split}
        \hat{A}_1(z)&=\lim_{R\rightarrow 0}\inf_{\substack{x\in\nk\\x x^*\neq z z^*\\D(x,z)<R}}\frac{1}{D(x,z)^2}\sum_{j=1}^m | \langle x x^*, A_j\rangle^{\frac{1}{2}}-\langle z z^*, A_j\rangle^{\frac{1}{2}}|^2\\
        &=\lim_{R\rightarrow 0}\inf_{\substack{w\in\nk\\ z w^*+ w z^*+ w w^*\neq 0\\||w||_2<R\\z^*(z+w)\geq 0}}\frac{1}{||w||_2^2}\sum_{j=1}^m| \langle (z+w)(z+w)^*, A_j\rangle^{\frac{1}{2}}-\langle z z^*, A_j\rangle^{\frac{1}{2}}|^2\\
        &=\lim_{R\rightarrow 0}\inf_{\substack{w\in\nk\\ z w^*+ w z^*+ w w^*\neq 0\\||w||_2<R\\w\in\Delta_z}}\frac{1}{||w||_2^2}\biggr{\{}\sum_{j\in I_0(z)}\langle w w^*, A_j\rangle_\R   +  \sum_{j\in I(z)} \frac{|\langle z w^*+ w z^*+ w w^*, A_j\rangle_\R|^2}{|\langle (z+w)(z+w)^*, A_j\rangle^{\frac{1}{2}}+\langle z z^*, A_j\rangle^{\frac{1}{2}}|^2}\biggr{\}}
    \end{split}
\end{align}
Where $I_0(z)=\{j\in \{1,\ldots,m\} | \alpha_j(z)=0\}$ are the indices for which $\alpha_j$ is zero (and hence not differentiable) and $I(z)=\{j\in \{1,\ldots,m\} | \alpha_j(z)\neq 0\}$ are the indices for which $\alpha_j$ is not zero (and hence is differentiable). Thus, since $z$ is full rank we know that $\Delta_z=H_{\pi,z}(\tallk)$ and since $z w^*+ w z^*+w w^*\neq 0\iff w\neq 0$ for $w\in H_{\pi,z}(\tallk)$ and sufficiently small in norm, we obtain
\begin{align}
\begin{split}
    \hat{A}_1(z)&=\lim_{R\rightarrow 0}\inf_{\substack{w\in H_{\pi,z}(\tallk)\\0<||w||_2<R}}\frac{1}{||w||_2^2}\biggr{\{}\sum_{j\in I_0(z)}\langle w w^*, A_j\rangle_\R   +  \sum_{j\in I(z)} \frac{|\langle z w^*+ w z^*+ w w^*, A_j\rangle_\R|^2}{|\langle (z+w)(z+w)^*, A_j\rangle^{\frac{1}{2}}+\langle z z^*, A_j\rangle^{\frac{1}{2}}|^2}\biggr{\}}\\
    &=\lim_{R\rightarrow 0}\inf_{\substack{w\in H_{\pi,z}(\tallk)\\0<||w||_2<R}}\frac{1}{||w||_2^2}\biggr{\{}\sum_{j\in I_0(z)}\langle w w^*, A_j\rangle_\R   +  \sum_{j\in I(z)} \frac{|\langle z w^*+ w z^*, A_j\rangle_\R|^2}{4\langle z z^*, A_j\rangle}+O(||w||^3)\biggr{\}}\\
    &=\min_{\substack{w\in H_{\pi,z}(\tallk)\\||w||_2=1}}\frac{1}{||w||_2^2}\biggr{\{}\sum_{j\in I_0(z)}\langle w w^*, A_j\rangle_\R   +  \sum_{j\in I(z)} \frac{|\langle z w^*+ w z^*, A_j\rangle_\R|^2}{4\langle z z^*, A_j\rangle}\biggr{\}}
\end{split}
\end{align}
Now recall from \eqref{firstvectorization} and \eqref{secondvectorization} respectively that $ |\langle z w^* + w z^*, A_j \rangle_\R|^2=|\langle D\pi(z)(w), A_j \rangle_\R|^2=4 W^T F_j Z Z^T F_j W$ and $\langle w w^*, A_j\rangle=\beta_j(w)= W^T F_j W$. Thus the above is 
\begin{align}
    \hat{A}_1(z)=\min_{\substack{W\in\R^{2nk}\\W\perp\mathcal{V}_z\\||W||_2=1}} W^T\biggr{\{}\sum_{j\in I_0(z)} F_j+\sum_{j\in I(z)}\frac{F_j Z Z^T F_j}{Z^T F_j Z}\biggr{\}}W
\end{align}
As has already been noted in \eqref{nullspacecontainment} the null space of each $F_j Z Z^T F_j$ contains $\mathcal{V}_z$, but in fact so does the null space  of each $F_j$ for $j\in I_0(z)$ since in this case $F_j \mu(z K)=(\I_{k\times k}\otimes j(A_j))\mbox{vec}(l(z K))=\mbox{vec}(j(A_j) l( z k))=\mbox{vec}(l(A_j z K))=0$. Thus we obtain finally that 
\begin{align}\label{A1hateigen2}
    \hat{A}_1(z)=\lambda_{2nk - k^2}(\sum_{j\in I_0(z)} F_j+\sum_{j\in I(z)}\frac{F_j \mu(\hat{z}) \mu(\hat{z})^T F_j}{\mu(\hat{z})^T F_j \mu(\hat{z})})
\end{align}
Note that in addition to proving \eqref{A1hateigen} this also proves $(viii)$ as this form makes clear that, owing to continuity of eigenvalues, infimizing $\hat{A}_1(z)$ over $z$ will give zero (and hence so too will infimizing $\hat{A}_2(z)$ over $z$ since $\hat{A}_2(z)\leq \hat{A}_1(z)$). Specifically the number of possibly non-zero eigenvalues of $\hat{R}_z+\hat{T}_z$ is $2 n k - k^2$ and is thus monotone increasing in rank, and thus a sequence $(z_i)_{i\geq 1}\subset \tall$ approaching a surface of lower rank $k$ will have $\lambda_{2 n r-r^2}(\hat{R}_z+\hat{T}_z)$ approach zero. Somewhat more remarkably, \eqref{A1hateigen2} actually gives us $\hat{A}_2(z)$ as an eigenvalue problem also. Specifically, we prove that the ``differentiable'' terms in $\hat{A}_2(z)$ are equal to those in $\hat{A}_1(z)$ and that in fact these are the only terms which contribute to $\hat{A}_2(z)$. We define
\begin{align}
\begin{split}
\hat{A}_2^{I}(z) &= \lim_{R\rightarrow 0}\inf_{\substack{x,y\in\nr\\ D(x,z)<R\\D(y,z)<R\\\rank(x)\leq k\\\rank(y)\leq k}} \frac{\sum_{k\in I(z)} |\alpha_k(x)-\alpha_k(y)|^2}{D(x,y)^2}\\
\hat{A}_2^{I_0}(z) &= \lim_{R\rightarrow 0}\inf_{\substack{x,y\in\nr\\ D(x,z)<R\\D(y,z)<R\\\rank(x)\leq k\\\rank(y)\leq k}} \frac{\sum_{k\in I_0(z)} |\alpha_k(x)-\alpha_k(y)|^2}{D(x,y)^2}\\
\hat{A}_1^{I}(z) &= \lim_{R\rightarrow 0}\inf_{\substack{x\in\nr\\D(z,x)<R\\\rank(x)\leq k}} \frac{\sum_{k\in I(z)} |\alpha_k(x)-\alpha_k(z)|^2}{D(x,z)^2}\\
\hat{A}_1^{I_0}(z) &= \lim_{R\rightarrow 0}\inf_{\substack{x\in\nr\\D(z,x)<R\\\rank(x)\leq k}} \frac{\sum_{k\in I_0(z)} |\alpha_k(x)-\alpha_k(z)|^2}{D(x,z)^2}
\end{split}
\end{align}
So that $\hat{A}_2(z)\geq \hat{A}_2^{I_0}(z)+\hat{A}_2^{I}(z)\geq \hat{A}_2^{I}(z)$, $\hat{A}_2^{I}(z)\leq \hat{A}_1^{I}(z)$, and $\hat{A}_2^{I_0}(z)\leq \hat{A}_1^{I_0}(z)$. Applying the mean value theorem to the functions $g_k: [0,1]\rightarrow \mathbb{R}, g_k(c) = \alpha_k((1-c)x +c y)$ for $k\in I(z)$ we see that there exist $c_k\in[0,1]$ so that $\alpha_k(y)-\alpha_k(x)=g(1)-g(0)=g'(c_k)= D\alpha_k((1-c_k)x+c_k y)(y-x)$ (recall that these are precisely the $k$ for which said differential exists, and the differential is taken with respect to the real vector space structure). Hence, replacing the rank constraints with the assumption that $z\in\tallk$ and aligning both $x$ and $y$ with $z$ so that $z^*x\geq 0$ and $z^* y \geq 0$ we have:

\begin{align}
    \hat{A}_2^{I}(z)=  \lim_{R\rightarrow 0}\inf_{\substack{x,y\in\nk\\ ||x-z||<R\\||y-z||<R\\z^* x\geq 0\\z^* y\geq 0}} \frac{\sum_{k\in I(z)} |D\alpha_k((1-c_k)x+c_k y)(y-x)|^2}{D(x,y)^2}
\end{align}
Using the fact that $D(x,y)\leq ||y-x||_2$ and writing $x=z+\xi$ and $y=z+\eta$ we obtain that
\begin{align}
    \begin{split}
    \hat{A}_2^{I}(z)\geq \lim_{R\rightarrow 0}\inf_{\substack{\eta,\xi\in\Delta_z\\ ||\xi||<R\\ ||\eta||<R }} \frac{\sum_{k\in I(z)} |D\alpha_k(z+(1-c_k)\xi+c_k \eta)(\eta-\xi)|^2}{||\eta-\xi||_2^2}
    \end{split}
\end{align}
The trick of linearizing the conic constraints here to $\xi,\eta\in\Delta_z$ is crucial since it allows us to strictly weaken the constraints in the infimum by taking $w=\eta-\xi$ so that, after using the continuity of $D\alpha_k$ ($\alpha_k$ is continuously differentiable when differentiable) 
\begin{align}
    \begin{split}
    \hat{A}_2^{I}(z)&\geq \lim_{R\rightarrow 0}\inf_{\substack{\eta,\xi\in\Delta_z\\ ||\xi||_2<R\\ ||\eta||_2<R }} \frac{\sum_{k\in I(z)} |D\alpha_k(z+(1-c_k)\xi+c_k \eta)(\eta-\xi)|^2}{||\eta-\xi||_2^2}\\
    &=
    \lim_{R\rightarrow 0}\inf_{\substack{\eta,\xi\in\Delta_z\\ ||\xi||_2<R\\ ||\eta||_2<R}} \frac{\sum_{k\in I(z)} |D\alpha_k(z)(\eta-\xi)|^2}{||\eta-\xi||_2^2}+O(||\xi||_2^2+||\eta||_2^2)\\
    &\geq \lim_{R\rightarrow 0}\inf_{\substack{w\in\Delta_z\\||w||_2<2R}} \frac{\sum_{k\in I(z)} |D\alpha_k(z)(w)|^2}{||w||_2^2}\\
    &=\min_{\substack{w\in H_{\pi,z}(\tallk)\\||w||_2=1}} \sum_{k\in I(z)} |D\alpha_k(z)(w)|^2\\
    &=\lambda_{2n k - k ^2}(\sum_{j\in I(z)}\frac{F_j \mu(\hat{z}) \mu(\hat{z})^T F_j}{\mu(\hat{z})^T F_j \mu(\hat{z})})=\hat{A}_1^{I}(z)
    \end{split}
\end{align}
We already had the reverse inequality $\hat{A}_2^{I}(z)\leq \hat{A}_1^{I}(z)$, hence $\hat{A}_2^{I}(z)=\hat{A}_1^{I}(z)$. Moreover, assuming this minimum is achieved by $w_0\in H_{\pi,z}(\tallk)$ then if we put $x=z+\frac{1}{2}w_0$ $y=z-\frac{1}{2}w_0$ we see that the $\hat{A}_2^{I_0}(z)$ term vanishes and $\hat{A}_2^{I}(z)$ is achieved, hence $\hat{A}_2(z)\leq \hat{A}_2^{I}(z)$. We already had the reverse inequality, so we conclude that $\hat{A}_2(z)=\hat{A}_2^{I}(z)=\hat{A}_1^{I}(z)$ and $\hat{A}_2^{I_0}(z)=0$. In summary
\begin{align}\begin{split}
    \hat{A}_2(z)&=\min_{\substack{W\in\R^{2nk}\\W\perp\mathcal{V}_z\\||W||_2=1}} W^T\biggr{\{}\sum_{j\in I(z)}\frac{F_j Z Z^T F_j}{Z^T F_j Z}\biggr{\}}W\\
    &=\lambda_{2nk - k^2}(\sum_{j\in I(z)}\frac{F_j Z Z^T F_j}{Z^T F_j Z})
\end{split}\end{align}
Thus claims $(i)$ and $(ii)$ are proven. Claim $(iii)$ follows immediately from the inequality \eqref{dbounds}. This concludes the proof of the Theorem \ref{thm:alphalips}.
\end{proof}
\begin{remark}
 If $z$ were not assumed full rank in \eqref{A1computation} then $w\in \Delta_z$ would possibly have a non-zero component $w_\Gamma$ in $\Gamma_z\subset V_{\pi,z}(\tallk)$. As a result, it would be possible to obtain a sequence (with the horizontal space component of $w$ converging to zero) for which the second sum in the last line of \eqref{A1computation} is eventually fourth order in $||w||_2$, thus $A_1(z)$ would be zero wherever $\alpha$ is differentiable (almost everywhere in measure). The rank constraint in the definition of $\hat{A}_1(z)$ that $\rank(x)\leq k$ avoids this, since it allows us to assume that $z$ is full rank and hence that $\Gamma_z$ is trivial. 
\end{remark}
\subsection{Proof of Theorem \ref{thm:upperboundlips}}
\label{proof:upperboundlips}
\begin{proof}
The proof of $(i)$ is essentially identical to the proof of the analogous eigenvalue formula  for the lower bound $a_0$ in Theorem \ref{thm:betalips}. One first changes coordinates to $z=\frac{1}{2}(x+y)$ and $w=x-y$ and repeats the computation \eqref{a0asinfcomp} to obtain
\begin{align}
    b_0=\sup_{z\in\nr} \max_{\substack{W\in T_{\pi(\hat{z})}(\mathring{S}^{k,0}(\C^n))\\||W||_2=1}}\sum_{j=1}^M |\langle W, A_j \rangle_\R|^2
\end{align}
At this point we note that 
\begin{align}
    b_0 \leq \sup_{W\in \Sym} \frac{||\mathcal{A}(W)||_2^2}{||W||_2^2} = ||\mathcal{A}||_{2\rightarrow 2}^2
\end{align}
As before we observe that it suffices to take $z\in \tall$ since if $\hat{z}\in \C_*^{n\times k}$ and $\tilde{z}\in\C_*^{n\times(r-k)}$ and $z=[\hat{z} | \tilde{z}]$ with $\tilde{z}^* \hat{z}=0$ then $T_{\pi(z)}(\scirc)\supset T_{\pi(\hat{z})}(\mathring{S}^{k,0})$. One then employs the tangent space parametrization \eqref{tangentspaceunitary} and repeats the computation \eqref{matrixificationcomp} to obtain 
\begin{align}
    b_0 = \sup_{z\in\tall} \lambda_1(Q_z) = \max_{\substack{U\in U(n)\\ U = [U_1|U_2]\\U_1\in\nr, U_2\in \C^{n\times n-r}}}\lambda_1(Q_{[U_1|U_2]})
\end{align}
This concludes the proof of $(i)$. To prove $(ii)$ we will employ the following lemma.
\begin{lemma}
    Let $|||\cdot|||$ be any norm. Then 
    \begin{align}
        ||\mathcal{A}||_{1\rightarrow |||\cdot|||} = \sup_{\substack{x\in \C^n\\||x||_2=1}} |||\mathcal{A}(x x^*)|||
    \end{align}
    In other words the operator norm $||\mathcal{A}||_*$ of $\mathcal{A}: ( \Sym(\C^n),||\cdot||_1)\rightarrow (\R^m, |||\cdot|||)$ is achieved on a matrix of rank 1.
\end{lemma}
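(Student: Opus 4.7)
The plan is to reduce the supremum over the whole nuclear-norm unit ball in $\Sym$ to a supremum over its extreme points, which are precisely the signed rank-one Hermitian matrices $\pm xx^*$ with $||x||_2 = 1$. The easy direction is immediate: for any unit vector $x\in\C^n$, the matrix $xx^*$ is Hermitian with $||xx^*||_1 = ||x||_2^2 = 1$, so by the definition of the operator norm
\[
\sup_{||x||_2 = 1} |||\mathcal{A}(xx^*)||| \leq ||\mathcal{A}||_{1\rightarrow |||\cdot|||}.
\]

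For the nontrivial direction, I would take an arbitrary $W\in \Sym$ with $||W||_1 \leq 1$ and diagonalize it via the spectral theorem as $W = \sum_i \lambda_i u_i u_i^*$ with $\{u_i\}$ an orthonormal basis of $\C^n$ and $\lambda_i\in\R$, so that $||W||_1 = \sum_i |\lambda_i| \leq 1$. Using $\R$-linearity of $\mathcal{A}$, the triangle inequality for $|||\cdot|||$, and the identity $|||\mathcal{A}(-xx^*)||| = |||\mathcal{A}(xx^*)|||$ (which lets me absorb the sign of each $\lambda_i$), I would estimate
\[
|||\mathcal{A}(W)||| \leq \sum_i |\lambda_i|\cdot |||\mathcal{A}(u_i u_i^*)||| \leq \Bigl(\sum_i |\lambda_i|\Bigr) \sup_{||x||_2 = 1} |||\mathcal{A}(xx^*)||| \leq \sup_{||x||_2 = 1} |||\mathcal{A}(xx^*)|||.
\]
Taking the supremum over admissible $W$ then yields $||\mathcal{A}||_{1\rightarrow|||\cdot|||} \leq \sup_{||x||_2 = 1} |||\mathcal{A}(xx^*)|||$, which combined with the trivial direction gives the equality.

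No step is a substantive obstacle; the proof is essentially a convexity argument. The only conceptual point worth articulating carefully is the identity $||W||_1 = \sum_i |\lambda_i|$ for Hermitian $W$, which is what allows the $|\lambda_i|$ to be treated as convex weights summing to at most $1$ and reduces the action of $\mathcal{A}$ on the full unit ball to its action on rank-one Hermitian matrices of unit trace.
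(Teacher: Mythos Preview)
Your proof is correct and follows essentially the same approach as the paper: spectral decomposition of a Hermitian matrix followed by the triangle inequality to reduce to rank-one terms. The only cosmetic difference is that the paper assumes the operator norm is attained at some $R$ (valid by finite-dimensional compactness) and bounds $|||\mathcal{A}(R)|||$, whereas you bound $|||\mathcal{A}(W)|||$ for arbitrary $W$ in the unit ball and then take the supremum; your version is marginally cleaner in that it avoids invoking attainment.
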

\begin{proof}
 Let $R\in\Sym$ be non-zero such that  $||R||_1=1$ and $|||\mathcal{A}(R)|||=||\mathcal{A}||_*||R||_1$. Write $R=\sum_{j=1}^n r_j e_j e_j^*$ and note that $||R||_1=1$ implies $\sum_{j=1}^n |r_j| = 1$. Then
\begin{align}
    ||\mathcal{A}||_* = ||\mathcal{A}||_* ||R||_1 = |||\sum_{j=1}^n r_j \mathcal{A}(e_j e_j^*)|||\leq (\sum_{j=1}^n | r_j |) \max_{j=1,\ldots,n}|||\mathcal{A}(e_j e_j^*)||| = \max_{j=1,\ldots,n}|||\mathcal{A}(e_j e_j^*)|||
\end{align}
Let $x_0=e_{j_0}$ where $j_0$ is the index that achieves the maximum. Then $||x_0||_2=1$ and  $||A||_*\leq |||\mathcal{A}(x_0 x_0^*)|||$, but of course this bound is achievable by just plugging in $x_0 x_0^*$ into $\mathcal{A}$. Thus the operator norm of $\mathcal{A}$ is achieved on a matrix of rank 1 and the lemma holds.
\end{proof}
Next note that 
\begin{align}
    \begin{split}
        b_{0,1} &= \sup_{\substack{x,y\in \nr\\ [x]\neq [y]}} \frac{\sum_{j=1}^m |\langle x x^*-y y^*, A_j\rangle_\R|^2}{|| x x^* - y y^*||_1^2}\\
        &= \sup_{\substack{z\in \tall}} \sup_{W\in T_{\pi(z)}(\scirc)} \frac{||\mathcal{A}(W)||_2^2}{||W||_1^2}\\
        &\leq \sup_{\substack{W\in\Sym\\||W||_1=1}} || \mathcal{A}(W)||_2^2 
        \\&= ||\mathcal{A}||_{1\rightarrow 2}^2
    \end{split}
\end{align}
Note that by an identical computation $b_0 \leq  ||\mathcal{A}||_{2\rightarrow 2}$. By the Lemma $||\mathcal{A}||_{1\rightarrow 2} = \sup_{x\in \C^n, ||x||_2=1} ||\mathcal{A}(x x^*)||_2^2$, hence
\begin{align}\begin{split}
    b_{0,1} &\leq \sup_{\substack{x\in\C^n}} \frac{||\mathcal{A}(x x ^*)||_2^2}{||x x^*||_1^2}\\
    &\leq \sup_{\substack{x\in\nr}} \frac{||\mathcal{A}(x x ^*)||_2^2}{||x x^*||_1^2}\\
    &= \frac{||\mathcal{A}(x_0 x_0 ^*)||_2^2}{||x_0 x_0^*||_1^2}\\
    &\leq \sup_{\substack{U_2\in \C^{n\times n-k}\\ U_2^* U_2 = \I_{n-k\times n-k}\\k=1,\ldots,r}} \sup_{\substack{W\in \Sym\\U_2^* W U_2=0}} \frac{||\mathcal{A}(W)||_2^2}{||W||_1^2}\\
    &= b_0
\end{split}\end{align}
Where in the second to last equality we note that it suffices to take $U_2$ such that $U_2 U_2^* = \mathbb{P}_{\mbox{Ran}(x_0)^\perp}$ and in the last equality we use the implicit parametrization of the tangent space \eqref{tangentbundle}. Thus
\begin{align}
    b_{0,1}=||\mathcal{A}||_{1\rightarrow 2} = \sup_{\substack{x\in\C^n}} \frac{||\mathcal{A}(x x ^*)||_2^2}{||x x^*||_1^2}=\sup_{\substack{x\in\nr}} \frac{||\mathcal{A}(x x ^*)||_2^2}{||x x^*||_1^2}
\end{align}
We now seek an operator $T_r: \C^{n\times r}\rightarrow (\C^{n\times r})^m$, an integer $q$, and a norm $|||\cdot |||$ so that for $x\in\nr$
\begin{align}
    |||T_r(x)|||^q = ||\mathcal{A}(x x^*)||_2^2
\end{align}
We find that if $A_j\geq 0$ for all $j$ then
\begin{align}
    ||\mathcal{A}(x x^*)||_2^2 = \sum_{j=1}^m|\langle x x ^* , A_j \rangle_\R |^2=\sum_{j=1}^m || A_j^{\frac{1}{2}} x ||_2^4\end{align}
So we let $T_r$ be as in Definition \ref{def:operatorT}, $|||X||| = |||X|||_{2,4}$ and $q=4$ and find $b_0= ||T_r||_{2\rightarrow(2,4)}^4=||T_1||_{2\rightarrow(2,4)}^4$. This concludes the proof of $(ii)$. To prove $(iii)$ note that by \eqref{Dbounds} $||(x x^*)^{\frac{1}{2}}-(y y^*)^{\frac{1}{2}}||_2\geq D(x,y)$ hence
\begin{align}
    B_0 \leq \sup_{\substack{x,y\in\nr\\ [x]\neq [y]}}\frac{||\alpha(x)-\alpha(y)||_2^2}{D(x,y)^2}
\end{align}
Thus
\begin{align}
\begin{split}
    B_0 &\leq \sup_{\substack{x,y\in\nr\\ [x]\neq [y]}}\frac{1}{D(x,y)^2}\sum_{j=1}^m |\langle x x^*, A_j \rangle^{\frac{1}{2}}-\langle y y^*, A_j \rangle^{\frac{1}{2}}|^2\\
    &= \sup_{\substack{x,y\in\nr\\x^*y\geq 0}} \frac{1}{||x-y||_2^2}\sum_{j=1}^m \frac{|\langle x x^* - y y^*, A_j \rangle_\R|^2}{(\langle x x^*, A_j \rangle^{\frac{1}{2}}+\langle y y^*, A_j \rangle^{\frac{1}{2}} )^2 }
\end{split}
\end{align}
We now make the change of coordinates $z=\frac{1}{2}(x+y)$, $w=x-y$ so that $x = z+\frac{1}{2} w$, $y=z-\frac{1}{2}w$. As before let $I_0(z)$ be the subset of $\{1,\ldots,m\}$ for which $A_j z=0$ and $I(z)$ its complement in $\{1,\ldots, m\}$. In this case we note that if $j\in I_0(z)$ then $0\langle z w ^* + w z^*, A_j \rangle_\R = \langle x x ^* - y y^*, A_j \rangle$. Thus, employing the triangle inequality via $\langle x x^*, A_j \rangle^{\frac{1}{2}}+\langle y y^*, A_j \rangle^{\frac{1}{2}} = ||A_j^{\frac{1}{2}} x||_2+||A_j^{\frac{1}{2}} y||_2\geq 2 ||A_j^{\frac{1}{2}} z||_2 = 2 \langle z z ^*, A_j \rangle^{\frac{1}{2}} $ we find that

\begin{align}
    B_0 &\leq  \sup_{\substack{x, y\in\nr\\x^*y\geq 0}} \frac{1}{||x-y||_2^2}\sum_{j\in I(z)}^m \frac{|\langle x x^* - y y^*, A_j \rangle_\R|^2}{(\langle x x^*, A_j \rangle^{\frac{1}{2}}+\langle y y^*, A_j \rangle^{\frac{1}{2}} )^2 }\\
    &\leq \sup_{\substack{z\in \nr\\z\neq 0}}\sup_{\substack{w\in \nr\\ z^*z - \frac{1}{4} w^* w + \frac{1}{2}(w^*z - z^*w)\geq 0}}\frac{1}{||w||_2^2}\sum_{j\in I(z)} \frac{|\langle z w ^* + w z^*, A_j\rangle_\R |^2}{4 \langle z z^*, A_j\rangle}
\end{align}
Next note that the condition $z^*z - \frac{1}{4} w^* w + \frac{1}{2}(w^*z - z^*w)\geq 0$ holds if and only if $z^*w=w^*z$ and $w^* w \leq 4 z^*z$. Moreover, since $w$ only appears as $w/||w||_2$ we may scale $w$ so that $\sigma_1(w)\leq \sigma_k(z)$ (where $z$ has rank $k$), thus the latter non-linear criterion becomes the linear criterion that $w\mathbb{P}_{\ker(z)} =0$. Taken together, these these criterion hold if and only if $w\in H_z$. Thus, with reference to the computations \eqref{firstvectorization} and \eqref{secondvectorization} we find that
\begin{align}
    B_0 &\leq \sup_{\substack{z\in \nr\\z\neq 0}}\sup_{\substack{w\in H_z}}\frac{1}{||w||_2^2}\sum_{j\in I(z)} \frac{|\langle z w ^* + w z^*, A_j\rangle_\R |^2}{4 \langle z z^*, A_j\rangle}\\
    &= \sup_{\substack{z\in \nr\\z\neq 0}}\max_{\substack{W\in \R^{2nk}\\W\perp\mathcal{V}_Z\\||W||_2=1}} W^T \biggr{(}\sum_{j\in I(z)} \frac{ F_j\mu(\hat{z})\mu(\hat{z})^T F_j}{\mu(\hat{z})^T F_j \mu(\hat{z})}\biggr{)} W\\
    &= \sup_{\substack{z\in \nr\\z\neq 0}} \lambda_1(\hat{T}_z)
\end{align}
Moreover note that by setting $y=0$ in the definition of $B_0$ and observing that $||(xx^*)^{\frac{1}{2}}||_2=||x||_2$ and that $\langle x x^*, A_j\rangle\geq 0$ we obtain that
\begin{align}
    B_0 \geq \sup_{x\in\nr}\frac{1}{||x||_2^2}\sum_{j=1}^m \langle x x^*, A_j\rangle=B
\end{align}
Meanwhile by Cauchy-Schwartz $\langle z w^*, A_j\rangle\leq ||A_j^{\frac{1}{2}} w||_2 ||A_j^{\frac{1}{2}} z||_2 = \langle  w w^*,A_j\rangle^{\frac{1}{2}}\langle z z^*, A_j\rangle^{\frac{1}{2}}$ (similarly for $\langle w z^*, A_j\rangle$). Hence
\begin{align}
\begin{split}
    B_0&\leq \sup_{\substack{z\in \nr\\z\neq 0}} \lambda_1(\hat{T}_z)\\
    &= \sup_{\substack{z\in \nr\\z\neq 0}}\sup_{\substack{w\in H_z}}\frac{1}{||w||_2^2}\sum_{j\in I(z)} \frac{|\langle z w ^* + w z^*, A_j\rangle_\R |^2}{4 \langle z z^*, A_j\rangle}\\
    &\leq \sup_{w\in H_z}\frac{1}{||w||_2^2}\sum_{j\in I(z)} \langle w w^*, A_j\rangle\\
    &\leq \sup_{w\in \nr}\frac{1}{||w||_2^2}\sum_{j=1}^m \langle w w^*, A_j\rangle_\R = B
\end{split}
\end{align}
Thus $B\leq B_0 \leq \sup_{\substack{z\in \nr\\z\neq 0}} \lambda_1(\hat{T}_z)\leq B$ and hence all three are equal. This concludes the proof of $(iii)$ and of Theorem \ref{thm:upperboundlips}.
\end{proof}
\subsection{Proof of Theorem \ref{thm:frameconditions}}
\label{proof:frameconditions}
\begin{proof}
It is shown in Proposition \ref{prop:topa0} that the map $\beta$ is injective if and only if it is lower-Lipschitz, that is if and only if $a_0>0$. This gives equivalence of $(i)$ to $(ii)$ immediately since we proved in Theorem \ref{thm:betalips} that 
\begin{align}
    a_0=\min_{\substack{U_1\in\C^{n\times r}\\U_2\in \C^{n\times (n-r)}\\ [U_1 | U_2]\in U(n)}} \lambda_{2n r-r^2}(Q_{[U_1|U_2]})
\end{align}
Similarly, it is evident from \eqref{a0vsa0hat} that $a_0>0$ if and only if $\hat{a}(z)>0$ whenever $z^*z=\I_{r\times r}$. It is proved in Theorem \ref{thm:betalips} that $\hat{a}(z) = \lambda_{2n r -r^2}(\hat{Q}_z)$, and also that the null space of $\hat{Q}_z$ includes the $r^2$ dimension $\mathcal{V}_z$. Thus the frame is generalized phase retrievable if and only if the null space $\hat{Q}_z$ does not extend beyond $\mathcal{V}_z$ for any $z$ of orthonormal columns, proving equivalence  of $(i)$ to $(iii)$. We prove equivalence of $(ii)$ to $(iv)$ by noting that $Q_{[U_1|U_2]}$ is invertible if and only if
\begin{align}
    \mbox{span}_{\R}\{\begin{bmatrix}\tau(U_1^* A_j U_1)\\\mu(U_2^* A_j U_1)\end{bmatrix}\}_{j=1}^m = \R^{2 n r - r^2}
\end{align}
Noting that $\tau^{-1}(\R^{r^2})=\mbox{Sym}(\C^r)$ and $\mu^{-1}(\R^{2n r- 2 r^2}) = \C^{n-r\times r}$, thus $Q_{[U_1|U_2]}$ is invertible if and only if there exist $c_1,\ldots,c_m\in\R$ so that \eqref{framecondition1} and \eqref{framecondition2} are satisfied. Finally to prove $(v)$ note that \eqref{framecondition1} and \eqref{framecondition2} both hold if and only if for all $U=[U_1|U_2]$ we have
\begin{align}
\begin{split}
    \mbox{span}_\R\{A_j U_1\} &= \{U \begin{bmatrix}H\\B \end{bmatrix} | H\in\SymR, B\in\C^{(n-r)\times r}\}\\
    &=\{U_1 K | K\in\smallsquarematrices, K^*=-K\}^\perp
\end{split}
\end{align}
This concludes the proof of Theorem \ref{thm:frameconditions}.
\end{proof}

\section*{Acknowledgments}
This work was
supported in part by NSF under Grant DMS-1816608.

\end{document}